\pdfoutput=1
\documentclass[11pt]{article}
\usepackage{amsmath,amssymb,amsthm}
\usepackage[shortlabels]{enumitem}
\usepackage{graphicx}
\usepackage[round]{natbib}
\usepackage{setspace}
\usepackage{float}

\def\startlocaldefs{}
\def\endlocaldefs{}
\def\paperwithoutbackrefs{}
\usepackage{mathtools}
\usepackage{bbm}
\usepackage{booktabs}
\usepackage{xcolor}
\definecolor{deepblue}{rgb}{0,0,0.35}
\ifdefined\paperwithoutbackrefs
\usepackage[
  colorlinks=true,
  hypertexnames=false,
  linkcolor=deepblue,
  citecolor=deepblue,
  urlcolor=deepblue
]{hyperref}
\else
\usepackage[
  colorlinks=true,
  hypertexnames=false,
  linkcolor=deepblue,
  citecolor=deepblue,
  urlcolor=deepblue,
  pagebackref
]{hyperref}
\fi
\usepackage{tikz}
\usetikzlibrary{positioning}
\usepackage{placeins}
\usepackage[normalem]{ulem}

\startlocaldefs

\def\bs{\boldsymbol}

\DeclareMathOperator{\E}{\mathbb{E}}

\newcommand{\crossref}[1]{\ref{#1}}
\newcommand{\crosseqref}[1]{\eqref{#1}}

\newcommand{\fignote}[2][\fignoteinset]{%
  \par
  {\footnotesize
   \raggedright
   \leftskip=#1\relax
   \rightskip=#1 plus 1fil\relax
   \parindent=0pt\relax
   Notes: #2\par}}

\theoremstyle{plain}

\newtheorem{proposition}{Proposition}
\newtheorem{corollary}{Corollary}

\theoremstyle{definition}

\newtheorem{assumption}{Assumption}
\newtheorem{assumptionNA}{Assumption}

\newtheorem{assumptionCS}{Assumption}

\newtheorem{assumptionTI}{Assumption}

\newtheorem{assumptionTIC}{Assumption}

\newtheorem{assumptionPT}{Assumption}

\newtheorem{assumptionCPT}{Assumption}

\newtheorem{assumptionTILT}{Assumption}

\newtheorem{assumptionCSLT}{Assumption}

\newtheorem{assumptionRSM}{Assumption}

\newtheorem{assumptionM}{Assumption}

\newtheorem{assumptionMk}{Assumption}

\newtheorem{assumptionMLE}{Assumption}

\newtheorem{assumptionID}{Assumption}

\newtheorem{assumptionIDk}{Assumption}

\theoremstyle{definition}
\newtheorem{remark}{Remark}

\newcommand{\papertitle}{Event-Study Designs for Discrete Outcomes with Latent Transition Heterogeneity}

\newcommand{\papertitlenote}{This paper supersedes earlier versions of the manuscript titled ``Event Studies for Discrete Outcomes with Latent Transition Heterogeneity.'' We thank Karun Adusumilli, Stéphane Bonhomme, Brantly Callaway, Clément de Chaisemartin, Xu Cheng, Riccardo Di Francesco, Daniel Haanwinckel, Jinyong Hahn, James Heckman, Peter Hull, Dongwoo Kim, Soonwoo Kwon, Philip Marx, Giovanni Mellace, Jonathan Roth, Yuya Sasaki, Frank Schorfheide, Jeffrey Smith, Petra Todd, Jaume Vives-i-Bastida, Yuanyuan Wan, Jun Zhao, and participants from numerous seminars and conference talks for comments and suggestions.}
\newcommand{\paperabstract}{%
We develop an identification strategy for average treatment effects on the treated (ATT) in panel data with discrete outcomes. For such outcomes, the parallel trends assumption underlying difference-in-differences (DiD) fails in three distinct ways: mean reversion generates divergent trends when groups differ at baseline, counterfactual probabilities can leave the unit interval, and no single trend is well defined for multi-category outcomes. We replace parallel trends with \textit{transition independence}: absent treatment, transition dynamics conditional on pre-treatment outcomes would be identical between treated and control groups. To accommodate selection on persistent unobserved heterogeneity in transition dynamics, we require transition independence to hold only within latent types. Modeling outcomes as a finite mixture of Markov chains, we identify latent-type and aggregate ATTs from short panels. The framework also yields a flow decomposition of the ATT into inflow and outflow channels. In three empirical applications, our ATT estimates differ substantially from conventional DiD.
}
 
\endlocaldefs

\newcommand{\revtext}[1]{#1}
\newenvironment{revblock}{\par}{\par}
\newcommand{\revdel}[1]{}

\begin{document}

\bibliographystyle{aer}

{
\title{\textbf{\papertitle}\footnote{\papertitlenote}}
\author{%
Young Ahn\\
Department of Economics\\
Brown University\\
youngahn@brown.edu
\and
Hiroyuki Kasahara\\
Vancouver School of Economics\\
University of British Columbia\\
hkasahar@mail.ubc.ca
}
\date{\today}
\maketitle

\begin{abstract}
\paperabstract
\end{abstract}

\medskip
\noindent\textbf{Keywords:} Difference-in-differences, discrete outcomes,
latent heterogeneity, treatment effects, finite mixture models

\medskip
\noindent\textbf{JEL Classification:} C14, C23, C33
}
 
\section{Introduction}

Many empirical questions in economics involve outcomes that are inherently discrete and categorical: whether a worker is employed, unemployed, or out of the labor force; or which occupation a worker transitions into. Difference-in-differences (DiD) designs are routinely applied in these settings: researchers binarize the categorical labels into
indicators and compare mean differences between treated and control groups under the parallel trends assumption \citep[e.g.,][]{bustos2011,anderson2011,hvide2018university,charoenwong2019does}. As \citet{Roth2023ecma} demonstrate, treatment effect estimates under parallel trends can be sensitive to functional form, a concern that is especially pronounced for discrete outcomes. Yet the properties of DiD when applied to discrete outcomes have received surprisingly little scrutiny.

When outcomes are discrete, the parallel trends assumption can be logically inconsistent with the data-generating process of bounded outcomes, invalidating the identification strategy, with the following three specific failures. First, discrete outcomes evolve through discrete state transitions rather than continuous movements, so when treated and control groups differ in their baseline distributions, mean reversion alone can generate divergent trends even absent any treatment effect. Second, DiD may produce counterfactual means outside the $[0,1]$ range, undermining the interpretability of the resulting estimates. Third, for multi-category outcomes such as occupation or labor-force status, the notion of a single ``trend'' is ill-defined. The parallel trends assumption offers no coherent basis for comparing the joint temporal evolution of categorical distributions between treated and control groups.

This paper provides a resolution. We propose to replace the parallel trends assumption with \textit{transition independence}: the condition that transition
dynamics, conditional on pre-treatment outcome paths, would be identical
between treated and control units in the absence of treatment. By operating directly on the transition structure of discrete outcomes, identification with transition independence avoids the logical failures of parallel trends while accommodating the bounded, categorical nature of the data.

Our paper makes three main contributions:
\begin{enumerate}
    \item We show that the parallel trends assumption is generically violated for discrete, bounded outcomes: it can generate out-of-bounds counterfactual probabilities and confound mean reversion with treatment effects. We propose transition independence as an alternative credible strategy: as with parallel trends, the plausibility of transition independence can be assessed using pre-treatment data.

    \item We incorporate a discrete latent-type structure under which outcomes
follow latent-type-specific Markov chains. This structure addresses
selection on unobserved heterogeneity in transition dynamics and
mitigates the curse of dimensionality from long pre-treatment histories.
We establish identification of \textit{latent-type average
treatment effects on the treated} (LTATTs) and of aggregate ATTs from
short panels.%

    \item We demonstrate the practical importance of our approach in three
empirical applications: the Dodd--Frank Act, Norway's patent reform, and
the Americans with Disabilities Act of 1990 (ADA). In each case our
estimator produces substantively different results from conventional
DiD.
\end{enumerate}
Our approach avoids the three failures identified above by constructing
counterfactuals from transition probabilities rather than mean levels. We
apply control-group transition probabilities to the treated group's
pre-treatment outcome distribution, predicting how treated units would have
evolved absent treatment. We then identify ATTs by comparing observed
treated outcomes with these counterfactuals. Because we build the
counterfactuals from transition probabilities on the outcome's support,
they respect the bounded, discrete nature of the data and account for
differential mean reversion driven by baseline differences.

In practice, two challenges arise. First, persistent unobserved heterogeneity may simultaneously affect treatment assignment and outcome transitions, so treated and control units with the same pre-treatment outcome may still experience differential mean reversion, violating transition independence. Second, the
number of possible outcome histories grows exponentially with the number of
pre-treatment periods, creating limited overlap in finite samples. 

We
address the first challenge by introducing latent types, requiring transition
independence to hold only conditionally on each type, and the second with a
low-order Markov restriction, which reduces the conditioning set to a small
number of recent outcomes. We then identify the aggregate ATT as a
weighted average of LTATTs, with weights equal to the latent-type
probabilities among treated units. We establish identification of both the
LTATTs and the weights from short panels, enabling us to account for unobserved heterogeneity in transition dynamics even when the number of pre-treatment periods is limited. 
The number of latent types can be selected by the Bayesian information criterion (BIC);
exploiting the finite support of our model, we show that BIC selects the true number of types with probability
$1-o(n^{-1/2})$ under regularity
conditions and that confidence intervals for the aggregate ATT
that ignore the selection step remain asymptotically valid
(Remark~\ref{remark:model-selection} and Section~\crossref{sec:bic-postsel} of the Supplemental
Appendix).

A further distinctive contribution of the transition-based framework is the \textit{flow decomposition} of treatment effects: because we model outcome dynamics through
state-to-state transition probabilities, we can decompose the ATT on any
outcome state into inflow and outflow components
(Remark~\ref{remark:decomposition-by-flows}). Standard DiD, which operates
on outcome levels rather than transitions, offers no such decomposition. In the ADA application, the flow decomposition reveals that the negative employment effect operates primarily through increased transitions from employment directly into out-of-labor-force status, rather than through changes in job-search outcomes, a mechanism not apparent from the DiD estimates.

\paragraph*{Empirical illustration.} We illustrate the framework with three applications in Section \ref{sec:empirical-applications}, each demonstrating a distinct limitation of conventional DiD. 

First, DiD can produce out-of-bounds counterfactuals. In our DiD replication of \cite{charoenwong2019does}'s analysis of the Dodd-Frank Act's 2012 reform, under parallel trends, the implied counterfactual complaint rates fall below zero in post-treatment
periods, violating their interpretation as probabilities (\autoref{fig:complaint-counterfactual-intro}). This violation is a logical
consequence of applying linear extrapolation to a bounded outcome. Our transition-based approach avoids
out-of-bounds counterfactuals by construction, and the sign of the
estimated effect reverses: our estimates indicate an increase in service
quality, whereas DiD reports a decrease.

Second, DiD is susceptible to mean-reversion bias when treated and control groups differ in baseline levels. In \citet{hvide2018university}'s study of Norway's 2003 patent law reform, university inventors (treated) had nearly twice the patenting rate of non-university inventors (control) immediately before the reform (\autoref{fig:norway-trend}). Because treated units had more room to decline, mean reversion causes the DiD estimator to overstate the negative impact, reporting a 4.5 percentage-point decline. Our transition-based approach, which directly accounts for state-dependent dynamics, finds no significant change in patenting rates.

\begin{figure}[t]
    \centering
    \caption{Counterfactual Complaint Rates from the Dodd-Frank Act.}\label{fig:complaint-counterfactual-intro}
    \includegraphics[width=0.75\textwidth]{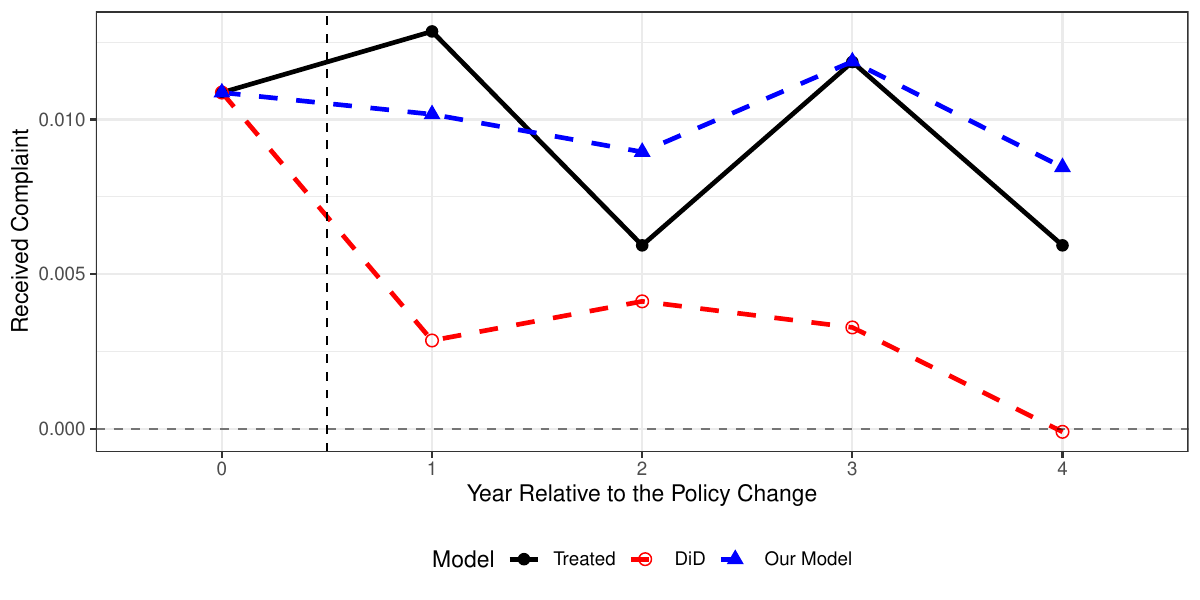}
    \fignote{Average counterfactual complaint rates for treated units had they not been treated, using the data of \cite{charoenwong2019does}. The black line is the observed treated path. The DiD counterfactual adds the control group's level change to the treated group's pre-reform mean; it falls below zero in later post-reform years, an invalid probability. Ours is the transition-based counterfactual with $J=2$, selected by BIC (Remark~\ref{remark:model-selection}).}
\end{figure}

Third, our framework reveals treatment effect mechanisms that DiD cannot detect. Using monthly labor-force status data from the 1990 SIPP panel and following \cite{acemoglu2001consequences} and \cite{lise2023revisiting}, we compare disabled (treated) and non-disabled (control) working-age adults around the ADA's enactment. The flow decomposition shows that the negative employment effect operates primarily through increased transitions from employment directly into out-of-labor-force status, rather than through changes in job-search outcomes (\autoref{fig:ada-decomposition-intro}), an insight into the underlying channel-specific transition mechanism unavailable from standard DiD. Furthermore, while neither conventional DiD nor nonlinear DiD methods \citep{wooldridge2023simple,boussim2026compositional} detect statistically significant employment effects, our transition-based methods reveal significant short-term employment reductions (\autoref{fig:ada-att-simple}). This finding is consistent with prior evidence of unintended short-term consequences of the reform \citep{acemoglu2001consequences}, and further suggests that the effects could be even larger once mean reversion and heterogeneous transition dynamics are taken into account.

\begin{figure}[t]
    \centering
    \caption{Patenting Rates Around the Norwegian Law Reform.}\label{fig:norway-trend}
    \includegraphics[width=0.75\textwidth]{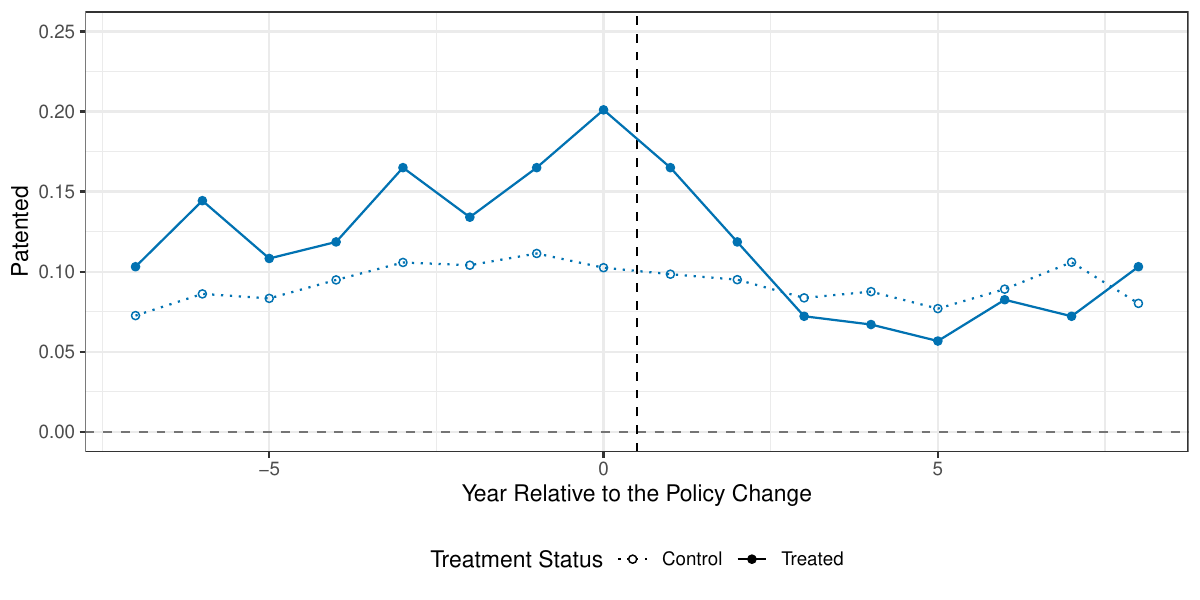}
    \fignote{Annual patenting rates (share of inventors filing at least one patent) for university inventors (treated) and non-university inventors (control) around the 2003 Norwegian reform. Treated rates are nearly twice control rates just before the reform.}
\end{figure}
\begin{figure}[t]
    \centering
    \caption{Decomposing the ADA's Effect on Employment by Transition Channels.}\label{fig:ada-decomposition-intro}
    \includegraphics[width=0.75\textwidth]{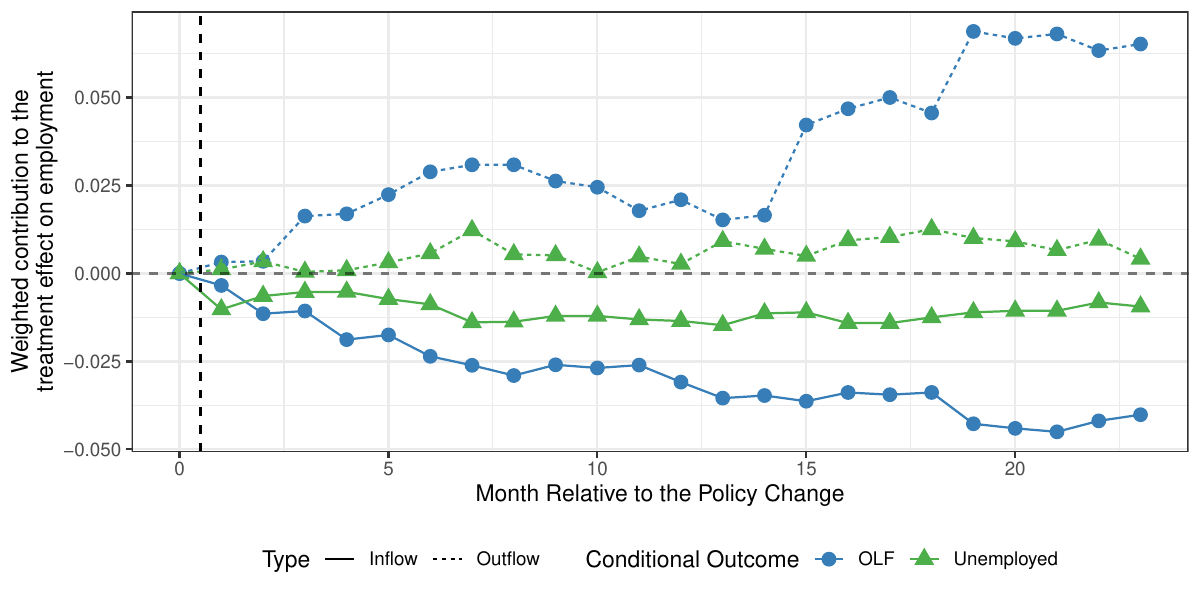}
    \fignote{Flow decomposition of the ADA's effect on employment, using the $J=2$ model selected by BIC (Remark~\ref{remark:model-selection}). Solid lines are weighted inflow effects (transitions into employment from unemployment and from out of the labor force); dashed lines are weighted outflow effects (transitions out of employment); the weights are the treated group's pre-treatment distribution and type shares. The dominant channel is increased movement from employment directly into out-of-labor-force status. See Remark~\ref{remark:decomposition-by-flows} for construction details.}
\end{figure}

\paragraph*{Related literature.} Our framework connects several related literatures: matching on
pre-treatment outcomes, difference-in-differences with pre-treatment
outcomes and nonlinear dynamics, copula-invariance restrictions in DiD,
grouped latent heterogeneity in panels, and sequential exchangeability in
dynamic treatment models.

\begin{figure}[t]
    \centering
    \caption{ATT Estimates of the ADA on Employment: DiD vs Our Method.}\label{fig:ada-att-simple}
    \includegraphics[width=0.75\textwidth]{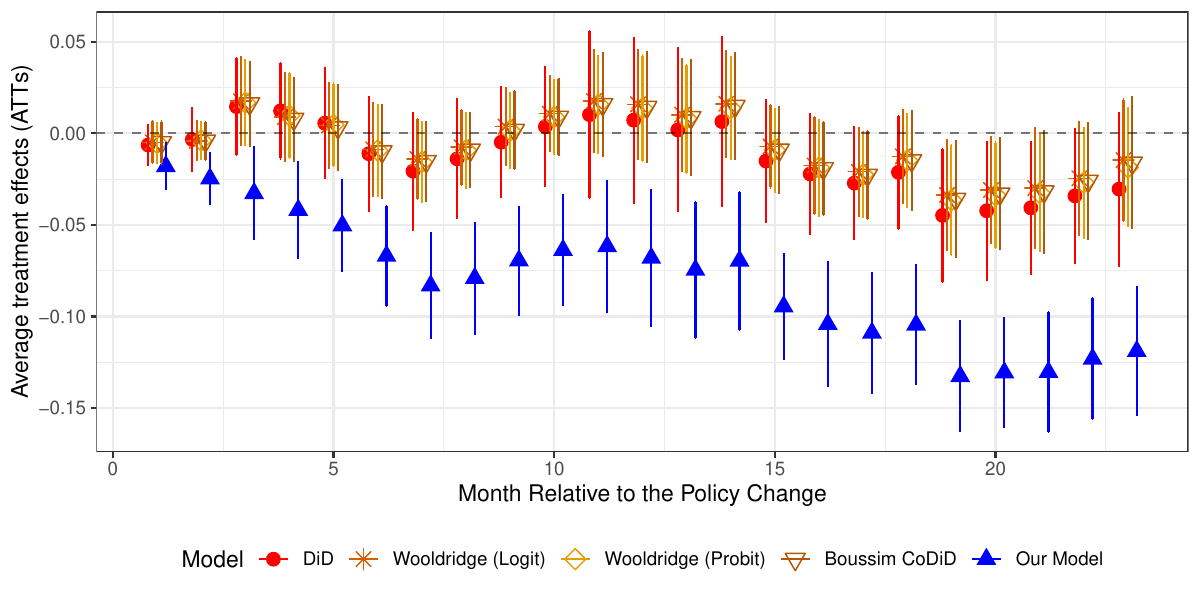}
    \fignote{The figure reports ATT estimates of the ADA on employment. The y-axis reports observed treated employment minus each method's untreated counterfactual. Series are named in the legend. DiD imposes parallel trends. The logit and probit specifications of \citet{wooldridge2023simple} impose parallel trends on the corresponding link transformation of the marginal employment probability. CoDiD \citep{boussim2026compositional} imposes parallel trends on log ratios of category probabilities in the full labor-force-status composition. Our method imposes transition independence with $J=2$ latent types, the specification selected by BIC (Remark~\ref{remark:model-selection}). Vertical bars are 95\% pointwise bootstrap confidence intervals. \autoref{fig:ada-att-all} compares all specifications.}
\end{figure}

The first relevant approach comes from the classic literature on program evaluation using matching methods \citep{card1988measuring,Heckman1997,Heckman1998}.  \cite{card1988measuring} compare outcomes of individuals with identical six-period pre-training employment histories and show that the training effect estimates are sensitive to the histories included: omitting a single pre-treatment history can lead to substantially different estimates. \cite{Heckman1997} use labor force statuses from two periods in their matching procedure: the month in which a unit becomes eligible and the six months prior to eligibility. We complement their approach by proposing a new flexible method to construct counterfactuals by identifying latent groups with similar transition dynamics from past outcomes, while accounting for latent heterogeneity that can affect transition dynamics and selection.  Combining a small number of latent types
with a Markov restriction also resolves the curse of dimensionality
from conditioning on long pre-treatment histories.

The second strand is the DiD literature that incorporates pre-treatment
outcomes and allows for nonlinear outcome dynamics \citep{ROTH2023je}.
\citet[Section~5.4]{angrist2009mostly} and \citet{ding2019bracketing} show
that conditioning on pre-treatment outcomes yields identifying assumptions
distinct from unconditional parallel trends. We complement this literature
by providing a practical solution to the limited-support issues that arise
when outcomes are discrete: a flexible framework that accommodates both
latent heterogeneity and the support restrictions inherent in categorical
data. See Section~\ref{sec:comparison-with-did} for discussion.

Recent work develops nonlinear DiD methods by replacing parallel trends in
outcome means with parallel trends on transformed scales.
\citet{wooldridge2023simple} imposes common changes in a known link
transformation of the conditional mean, whereas
\citet{boussim2026compositional} imposes parallel trends in log ratios across
categories. These methods share our goal of
constructing credible counterfactuals for discrete or compositional
outcomes, but they place identifying restrictions on changes in marginal
distributions rather than on transition dynamics. Consequently,
differences in pre-treatment history distributions may generate the same
mean-reversion problem as in standard DiD. In addition, like standard DiD, nonlinear DiD either models each
category separately or restricts period-by-period marginal compositions, disregarding the joint transition structure. Furthermore,
neither approach accommodates latent heterogeneity in transition
dynamics, which matters especially for discrete outcomes because of their
limited support (see Remark~\ref{remark:logit}).\footnote{\citet{DiFrancesco2025}
develop probability-shift methods for qualitative outcomes. In the health
sciences, \citet{graves2022did} independently propose transition matrices
for DiD with categorical outcomes, but their study does not incorporate latent heterogeneity.}
Our transition-based framework addresses these challenges by capturing
the nonlinear nature of discrete outcomes directly and by modeling latent
heterogeneity in transition laws explicitly.

A third strand, the copula invariance approach to DiD
\citep{callaway2018quantile,callaway2019quantile,ghanem2023evaluating},
restricts the dependence between pre- and post-treatment outcomes in a way
analogous to our restrictions on transition probabilities. This approach,
however, does not accommodate unobserved heterogeneity. With discrete
outcomes, the limited support motivates our introduction of latent types
over pre-treatment outcome paths to capture unobserved heterogeneity in
state dependence and in the propensity to be treated.

A fourth strand addresses latent heterogeneity in panel data.
\citet{Bonhomme2015} and \citet{bonhomme2022discretizing} develop grouped
fixed-effects models that recover latent types by clustering methods; see \citet{Shin2022} for a recent application in the DiD context.
\citet{Arkhangelsky2021} propose the synthetic difference-in-differences
(SDiD) estimator, which accounts for heterogeneous effects by reweighting
control units and time periods. Both approaches, however, require long
panels for consistent estimation, whereas our method identifies
latent-type treatment effects in short panels, which is the
relevant case for most empirical applications.

Finally, transition independence is a version of the sequential
exchangeability assumption of \citet{Robins1986}, which requires that
current treatment be independent of current and future potential outcomes
given the observed covariate--treatment history \citep{RobinsHernan2025}.
With an absorbing treatment and a never-treated control group, this
condition has identifying content only at the adoption date, conditional on the path of
pre-treatment outcomes as covariates; restricted to untreated potential outcomes,
it is our transition independence assumption in the absence of unobserved
heterogeneity (Remark~\ref{remark:se}). Our contribution lies in relaxing it. Sequential exchangeability conditions on observables only
and thus rules out selection on unobserved determinants of outcome
dynamics; \citet{marx2025arxiv} pose identification when it holds only
conditional on unobservables as an open question in their conclusion. We introduce latent heterogeneity in transition probabilities, allow latent types to be correlated with treatment, and establish identification from the finite mixture structure of
observed transitions in short panels.

While transition independence is a natural assumption for discrete outcomes, it may be less plausible when agents are forward-looking and adjust transition behavior in anticipation of treatment, or when aggregate shocks differentially affect treated and control units' transition dynamics. In addition to the BIC selection procedure, we recommend assessing the plausibility of transition independence by testing for pre-treatment differences in transition probabilities between groups, a test analogous to the DiD pre-trends test (Remark \ref{remark:testing-for-transition-independence-Z}), as demonstrated in our empirical applications. %

The remainder of the paper is organized as follows.
Section~\ref{sec:binary-outcome-models} introduces a potential outcome
model with discrete outcomes in the absence of unobserved heterogeneity, and analyzes the identification of ATTs.
Section~\ref{sec:binary-outcome-models-with-latent-type} adds a latent
type structure to allow unobserved heterogeneity in transition dynamics
and establishes identification. Section~\ref{sec:estimation} develops a
two-stage procedure for estimating latent-type and aggregate
ATTs. Section~\ref{sec:empirical-applications} presents three empirical
applications. An R package with replication code is available at
\url{https://github.com/bayesiahn/ak}.

\section{Discrete Outcome Models}\label{sec:binary-outcome-models}

\subsection{Potential Outcome Model with Transition Independence} 
Consider a panel data model over $T=T_0+T_1$ periods, indexed by $t \in \mathcal{T} := \{1, 2, \ldots,T_0, T_0+1,..., T\}$, where $T_0$ and $T_1$ denote the numbers of pre- and post-treatment periods, respectively.  
Observational units are indexed by $i \in \mathcal{N} := \{1, \ldots, n\}$. Each unit may receive a binary treatment at some period or remain untreated throughout.   We assume that all treated units begin treatment simultaneously at period $T_0+1$ and that, once a unit is treated, it remains treated in all remaining periods.\footnote{Extension to staggered treatment timings can be straightforwardly managed by analyzing subgroups with identical treatment onset periods. We provide details in Section~\crossref{sec:staggered} of the Supplemental Appendix.}  %

For each unit $i$ and period $t$, we observe a discrete outcome $Y_{it}$ and a
binary treatment indicator $D_{it} \in \{0,1\}$. The outcome has support
$\mathcal{Y} := \{\bar y^{(1)},\bar y^{(2)},\ldots,\bar y^{(K)}\}$.
Potential outcomes
are indexed by the full treatment path: $Y_{it}(D_{i1},\ldots,D_{iT})$
denotes unit $i$'s outcome at period $t$ under the path
$(D_{i1},\ldots,D_{iT})$. Common adoption at $T_0+1$ and absorbing treatment
leave only two relevant paths. We write
$Y_{it}(1):=Y_{it}(\bs{0}_{T_0},\bs{1}_{T_1})$ for treatment and
$Y_{it}(0):=Y_{it}(\bs{0}_T)$ for no treatment, where $\bs{0}_s$ and
$\bs{1}_s$ denote length-$s$ vectors of zeros and ones. Let
$D_i:=D_{i,T_0+1}$; then $D_{it}=0$ for $t \le T_0$ and $D_{it}=D_i$ for
$t \ge T_0+1$.

For notational brevity, we drop the unit subscript $i$ when no confusion arises; e.g., we write $Y_{it}(0)$ and $D_{i}$ as $Y_t(0)$  and  $D$, respectively.  

Given this definition of $Y_{t}(d)$,  $d\in\{0,1\}$,
define the corresponding vector of binary potential outcomes:
\begin{equation}\label{eq:binary}
\mathbf{X}_{t}(d):=\big(X_t^{(1)}(d),\ldots,X_t^{(K)}(d)\big)^{\top}
:=\big(\mathbf{1}(Y_{t}(d)=\bar y^{(1)}),\ldots,\mathbf{1}(Y_{t}(d)=\bar y^{(K)})\big)^{\top}\in\mathcal{X},
\end{equation}
where $\mathbf{1}(\cdot)$ denotes the indicator function and $\mathcal{X}:=\{(x^{(1)},...,x^{(K)})\in\{0,1\}^K: \sum_{k=1}^K x^{(k)} = 1\}$.
Define $\bs X_t$ analogously to $\bs X_t(d)$, replacing $Y_t(d)$ with the observed outcome $Y_t$.
Then, the vector of observed binarized outcomes $\bs X_t$ relates to the potential outcomes as
 \begin{equation}\label{eq:observed}
 \bs X_{t} = 
 \left\{
 \begin{array}{cc}
 \bs X_t(0) & \text{if $1\leq t \leq T_0$}\\
  D \bs X_{t}(1) + (1-D) \bs X_{t}(0)& \text{if $ t\geq T_0+1$}.
 \end{array}
 \right.
 \end{equation}

Our primary object of interest is the vector of {average treatment effects on the treated (ATTs)}, defined as
\begin{equation}\label{eq:ATT}
\boldsymbol{\mu}^{\text{ATT}}_t := \E\left[\bs{X}_t(1) - \bs{X}_t(0) \mid D = 1\right], \quad t \geq T_0 + 1,
\end{equation}
where the $k$th element of $\boldsymbol{\mu}^{\text{ATT}}_t$ represents the change in the probability of belonging to category $k$ induced by treatment.

We adopt two standard assumptions from the event-study literature: the no anticipatory effects and common support (overlap) assumptions.  
Let $\bs X_1^{T_0}:=\{\bs X_s\}_{s=1}^{T_0}$ denote the collection of pre-treatment outcomes.

\begin{assumptionNA}[No anticipatory effects] \label{A-anticipation}  For all  $t\in\{1,...,T_0\}$ and $i\in \mathcal{N}$, $\bs X_{it} (1) = \bs X_{it}(0)$.
\end{assumptionNA}
\begin{assumptionCS}[Common Support]  \label{A-overlap} For any $\bs x_1^{T_0} \in \mathcal{X}^{T_0}$ with $\Pr(\bs X_1^{T_0}=\bs x_1^{T_0})>0$, 
there exists a positive constant $\epsilon>0$ such that 
    $\epsilon\leq \Pr(D=1|\bs X_1^{T_0}=\bs x_1^{T_0})<1-\epsilon$. \end{assumptionCS}

We impose the  transition independence assumption, which equates the post-$T_0$ transition behavior of untreated potential outcomes across treated and control units, conditional on the entire pre-treatment path. Formally, we require that the transition probabilities of untreated potential outcomes be independent of treatment status:
\begin{equation}\label{eq:transition}
\begin{aligned}
    &\Pr \left(\bs X_{t}(0) = \bs x_{t} \mid \bs X_{T_0} (0) = \bs x_{T_0}, \hdots, \bs X_{1}(0) = \bs x_1, D = 1\right) \\
    &=\Pr  \left(\bs X_{t}(0) = \bs x_{t} \mid \bs X_{T_0} (0) = \bs x_{T_0}, \hdots, \bs X_{1}(0) = \bs x_1, D = 0\right)\quad\text{for $t\geq T_0+1$}
\end{aligned}\tag{TI}
\end{equation}
for all possible pre-treatment outcome paths $\{\bs x_1,  \hdots, \bs x_{T_0}\}$.

\begin{assumptionTI}[Transition independence]\label{A-transition} For $t =T_0+1,...,T$,  Equation (\ref{eq:transition}) holds for all  $\bs x_t\in\mathcal{X}$ and $\bs x_1^{T_0} \in   \mathcal{X}^{T_0}$.
\end{assumptionTI}  

Transition independence implies that the outcome dynamics of control units serve as valid counterfactuals for treated units sharing the same pre-treatment history. By matching treated and control units on their observed outcome paths, the post-treatment transitions of the control group identify what the treated group would have experienced absent treatment.

Assumption~\ref{A-transition} is a conditional-unconfoundedness
condition: (\ref{eq:transition}) is equivalent to
\[
\bs X_{t}(0)\mathrel{\perp\!\!\!\perp}D\mid \bs X_1^{T_0}(0)
\]
for each post-treatment period $t$, so treatment is unconfounded with respect to
the untreated potential outcome given the pre-treatment outcome path. Conditions of this type are well established
\citep[e.g.,][]{Heckman1997,Heckman1998,angrist2009mostly,ding2019bracketing}.
Identification follows the
selection-on-observables argument of the $g$-computation algorithm of
\citet{Robins1986} (Remark \ref{remark:se}); in the DiD literature,
such conditions restrict how treatment selection depends on untreated
potential outcomes \citep{ghanem2025selection}. 

The ATTs in (\ref{eq:ATT}) are then identified under Assumptions
\ref{A-transition}, \ref{A-anticipation}, and \ref{A-overlap}.
\begin{proposition}\label{prop:att-identification-vanilla}
  Suppose that Assumptions \ref{A-transition}, \ref{A-anticipation}, and \ref{A-overlap} hold. Then, for $t=T_0+1,...,T$, ATTs are identified by
  \begin{align}\label{eq:att-identification-vanilla}
\bs \mu^{\text{ATT}}_t = \mathbb{E} \left[\mathbf{X}_t - \mathbb{E} \left[ \mathbf{X}_t \mid \bs X_1^{T_0}, D = 0\right] \bigg| D = 1 \right].
  \end{align}
  \end{proposition}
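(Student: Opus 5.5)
The plan is to split the ATT in (\ref{eq:ATT}) into a factual term and a counterfactual term and identify each separately. On $\{D=1\}$ and for $t\ge T_0+1$, the observation rule (\ref{eq:observed}) gives $\bs X_t=\bs X_t(1)$, so $\E[\bs X_t(1)\mid D=1]=\E[\bs X_t\mid D=1]$ is directly identified. The work is in the counterfactual mean $\E[\bs X_t(0)\mid D=1]$. I will show that
\[
\E[\bs X_t(0)\mid D=1]=\E\bigl[\E[\bs X_t\mid \bs X_1^{T_0},D=0]\bigm| D=1\bigr];
\]
subtracting this from the factual term then gives (\ref{eq:att-identification-vanilla}).

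To establish that display I will condition on the pre-treatment path and use the law of iterated expectations:
\[
\E[\bs X_t(0)\mid D=1]=\E\bigl[\E[\bs X_t(0)\mid \bs X_1^{T_0}(0),D=1]\bigm| D=1\bigr].
\]
Three steps then replace the inner expectation.
\begin{enumerate}[(i)]
\item For $s\le T_0$, the observation rule (\ref{eq:observed}) gives $\bs X_s=\bs X_s(0)$ for every unit, so $\bs X_1^{T_0}(0)=\bs X_1^{T_0}$ and the conditioning event is observable. Assumption~\ref{A-anticipation} guarantees this pre-period outcome is the same under either treatment path, so no ambiguity arises.
\item Because $\bs X_t(0)$ takes values in the finite set $\mathcal X$, its conditional mean equals $\sum_{\bs x\in\mathcal X}\bs x\,\Pr(\bs X_t(0)=\bs x\mid\cdot)$. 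Transition independence (\ref{eq:transition}) then lets me replace $D=1$ by $D=0$ inside the conditioning, path by path.
\item On $\{D=0\}$ with $t\ge T_0+1$, we have $\bs X_t(0)=\bs X_t$, so the inner term becomes $\E[\bs X_t\mid \bs X_1^{T_0},D=0]$, which is a function of observables.
\end{enumerate}

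The only delicate point is well-definedness: $\E[\bs X_t\mid \bs X_1^{T_0}=\bs x_1^{T_0},D=0]$ must be defined at every path $\bs x_1^{T_0}$ the outer expectation (over the treated distribution) charges. This is where Assumption~\ref{A-overlap} enters. Any path with positive probability among the treated has $\Pr(\bs X_1^{T_0}=\bs x_1^{T_0})>0$, so $\Pr(D=0\mid \bs X_1^{T_0}=\bs x_1^{T_0})>\epsilon>0$. The control-group conditional probabilities are therefore defined on the whole support of the treated pre-treatment distribution. The same assumption gives $\Pr(D=1)>0$, so conditioning on $D=1$ is legitimate.

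I expect no step to be a real obstacle. The main care is bookkeeping: apply (\ref{eq:transition}) only at paths with positive mass, and state the result componentwise for the vector $\bs X_t$.
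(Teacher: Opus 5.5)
Your proposal is correct and follows essentially the same route as the paper's proof. Both separate the factual mean $\E[\bs X_t\mid D=1]$, apply iterated expectations over the pre-treatment path, and swap $D=1$ for $D=0$ by transition independence. Both then replace potential with observed outcomes via the observation rule and no anticipation, with common support guaranteeing that the control conditional means are defined on the treated support.
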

  
  Proposition \ref{prop:att-identification-vanilla} demonstrates that the ATTs are identified by the difference between the mean of observed post-treatment outcomes from the treated units and that of their \textit{counterfactual} expected untreated potential outcomes  conditional on the entire sequence of pre-treatment outcomes. This result is intuitive: the treatment effect is attributable to the difference between the observed outcomes and the counterfactual outcomes that would have been observed if the treated units had not received the treatment. The counterfactual expected untreated potential outcomes are constructed by extrapolating the transition dynamics from the control group to the treated group with the same history.\footnote{Extensions of Proposition \ref{prop:att-identification-vanilla} to transition
independence conditional on discrete covariates, and to staggered treatment
adoption, where the assumption is imposed cohort by cohort and cohort ATTs
are aggregated as in \citet{CALLAWAY2021}, are developed in
Sections~\crossref{sec:supp-covariates} and~\crossref{sec:staggered} of the
Supplemental Appendix.}

  Proposition \ref{prop:att-identification-vanilla} requires conditioning on the entire pre-treatment outcome path. Sections~\ref{sec:binary-outcome-models-with-latent-type} 
and~\ref{sec:estimation}  turn
this formula into an estimable object  when the path space
is large and transition dynamics are subject to unobserved heterogeneity.

\begin{remark}[Relation to sequential exchangeability]\label{remark:se}
Transition independence is closely related to \textit{sequential
exchangeability}:
at each period, treatment must be independent
of current and future potential outcomes, jointly, given the observed
covariate--treatment history \citep{Robins1986, RobinsHernan2025}; see also
\citet[Assumption 2]{marx2025arxiv}, where the covariate history
consists only of observed lagged outcomes.  In our event-study setting, treatment
varies only at the adoption date ($D_s = 0$ for $s \leq T_0$, $D_s = D$
thereafter), where, on the support of Assumption~\ref{A-overlap}, the
condition is stated as
\begin{equation}\label{eq:se}
\big(\bs X_{T_0+1}(d),\ldots,\bs X_{T}(d)\big)\;\perp\!\!\!\perp\; D
\;\Big|\; \bs X_1^{T_0}\quad\text{for $d\in\{0,1\}$,}
\tag{SE}
\end{equation}
where $\bs X_1^{T_0} = \bs X_1^{T_0}(0)$ by (\ref{eq:observed}). Restricted to untreated potential outcomes ($d=0$), (\ref{eq:se}) implies
$\bs X_t(0) \perp\!\!\!\perp D \mid \bs X_1^{T_0}(0)$ for each $t \geq
T_0+1$, which is (\ref{eq:transition}). Transition independence restricts neither treated potential outcomes
nor the joint dependence of the untreated path, and the converse
fails.\footnote{With $T_0=1$, $T=2$, and binary $Y_t$: let
$\Pr(Y_2(0)=1 \mid Y_1(0), D)=0.5$ throughout, so (\ref{eq:transition})
holds; let $\Pr(D=1 \mid Y_1=1)=0.8$, $\Pr(D=1 \mid Y_1=0)=0.2$,
$\Pr(Y_2(1)=1 \mid Y_1=1, D=1)=0.9$, and $\Pr(Y_2(1)=1 \mid \cdot)=0.5$ in
every other positive-probability cell. Then
$Y_2(1) \not\perp\!\!\!\perp D \mid Y_1$, so (\ref{eq:se}) fails for the treated potential outcome.}  
\end{remark}

\begin{remark}[Decomposition by flows]\label{remark:decomposition-by-flows}  When transition independence holds with one lag of conditioning, the ATT admits a flow decomposition by inflows to and outflows from each outcome state, which is useful for investigating the underlying mechanism through which the treatment dynamically affects outcomes.

Specifically, under Assumptions \ref{A-transition}, \ref{A-anticipation}, and \ref{A-overlap}, we can express the ATT on the $k$th outcome as 
\begin{equation}\label{eq:flow-decomposition}
\begin{aligned}
&\E\!\big[X_{t}^{(k)}(1)-X_{t}^{(k)}(0)\mid D=1\big] \\
&=
\sum_{y \neq \bar{y}^{(k)} }
\underbrace{\begin{multlined}[t]
\Big\{\Pr(Y_t=\bar{y}^{(k)}\mid Y_{T_0}=y, D=1)\\
\quad-\Pr(Y_{t}=\bar{y}^{(k)}\mid Y_{T_0}=y, D=0)\Big\}
\Pr(Y_{T_0}=y\mid D=1)
\end{multlined}}_{\text{inflow effect from } y}\\
&\quad
-
\sum_{y \neq \bar{y}^{(k)}}
\underbrace{\begin{multlined}[t]
\Big\{\Pr(Y_{t}=y\mid Y_{T_0}=\bar{y}^{(k)}, D=1)\\
\quad-\Pr(Y_{t}=y\mid Y_{T_0}=\bar{y}^{(k)}, D=0)\Big\}
\Pr(Y_{T_0}=\bar{y}^{(k)}\mid D=1)
\end{multlined}}_{\text{outflow effect to } y}.
\end{aligned}
\end{equation}
\begin{revblock}
The flow decomposition \eqref{eq:flow-decomposition} implies that changes in the probability of remaining in the focal state $\bar y^{(k)}$ (e.g., employed-to-employed) can be understood as the net contribution of transition flows to and from all alternative states. For illustration, consider the ADA example with three outcome states: Employment ($E$), Unemployment ($U$), and Out of Labor Force ($O$).  The introduction of the ADA may induce changes in the transition dynamics over periods across these three outcome states. Under transition independence with the Markov assumption, such impacts on transition probabilities can be illustrated as in Figure~\ref{fig:markov-comparison}, where the ADA induces a larger net outflow from $E$ to $U$ or $O$ by increasing the outflow from  $E$ to $U$ or $O$ and decreasing the inflow from $U$ or $O$ to $E$.

The flow-based decomposition \eqref{eq:flow-decomposition} reveals
cumulative mobility patterns that conventional ATTs on outcome levels
can mask. For example, an increase in employment may be driven primarily by reduced separations (lower outflow from $E$), by enhanced hiring from $U$ or $O$ (higher inflow into $E$), or by a combination of both. Furthermore, the decomposition provides insight into the underlying mechanism. Many policies, including the ADA, operate through specific channels such as job retention, accommodation costs, or hiring practices. Estimating inflow and outflow components separately can thus help assess which hypothesized channels are empirically salient.\footnote{The decomposition extends to each latent type and can be aggregated across types when we introduce latent heterogeneity as in Section \ref{sec:binary-outcome-models-with-latent-type}. See Corollary~\crossref{cor:flow-decomposition-by-type} in the Supplemental Appendix for details.}
\end{revblock}

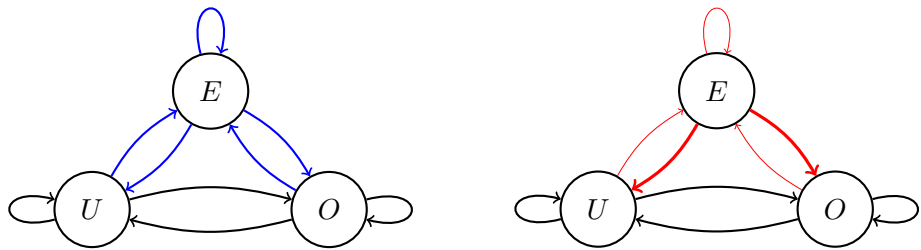
\begin{figure}[t]
\centering
\caption{ADA's Effect on Employment Transitions.}\label{fig:markov-comparison}
\begin{minipage}{0.40\textwidth}
\centering
\begin{tikzpicture}[node distance=1.2cm, auto, thick,
                    state/.style={circle, draw, minimum size=1cm, align=center}]

  \node[state] (E) {$E$};
  \node[state] (U) [below left=of E] {$U$};
  \node[state] (O) [below right=of E] {$O$};

  \draw[->, bend left=15, blue] (E) to (U);
  \draw[->, bend left=15, blue] (E) to (O);
  \draw[->, loop above, blue] (E) to (E);

  \draw[->, bend left=15, blue] (U) to (E);
  \draw[->, bend left=15, blue] (O) to (E);

  \draw[->, bend left=15] (U) to (O);
  \draw[->, bend left=15] (O) to (U);

  \draw[->, loop left] (U) to (U);
  \draw[->, loop right] (O) to (O);

\end{tikzpicture}

\bigskip
{\footnotesize \quad $Y_{it}({ 0})$: Untreated Potential Outcome}
\end{minipage}
\begin{minipage}{0.40\textwidth}
\centering
\begin{tikzpicture}[node distance=1.2cm, auto, thick,
                    state/.style={circle, draw, minimum size=1cm, align=center}]

  \node[state] (E) {$E$};
  \node[state] (U) [below left=of E] {$U$};
  \node[state] (O) [below right=of E] {$O$};

  \draw[->, bend left=15, very thick, red] (E) to (U);
  \draw[->, bend left=15, very thick, red] (E) to (O);

  \draw[->, bend left=15, very thin, red] (U) to (E);
  \draw[->, bend left=15, very thin, red] (O) to (E);

  \draw[->, bend left=15] (U) to (O);
  \draw[->, bend left=15] (O) to (U);

  \draw[->, loop above, very thin, red] (E) to (E);
  \draw[->, loop left] (U) to (U);
  \draw[->, loop right] (O) to (O);

\end{tikzpicture}

\bigskip
{\footnotesize \quad ${Y_{it}({ 1})}$: {Treated} Potential Outcome}
\end{minipage}
\vskip 2.00ex
\fignote{How the ADA affects employment through the inflow and outflow effects in \eqref{eq:flow-decomposition}, among employment (E), unemployment (U), and out-of-labor-force (O). Colored arrows are the transitions that drive the change in employment under treatment. Arrow thickness reflects the relative strength of each flow and is proportional to the weighted inflow and outflow effects estimated for the ADA in Section~\ref{sec:ada}.}
\end{figure}

\end{remark}

\subsection{Comparison with the DiD Estimator}\label{sec:comparison-with-did}
Many empirical studies apply the DiD estimator when the outcome of interest is
binary \citep{bustos2011,anderson2011,hvide2018university,charoenwong2019does}.
Even when the outcome is discrete with multiple categories, we may apply the
DiD estimator to each of the binary outcomes in (\ref{eq:binary}), where the
DiD estimator identifies the population quantity
\[
\bs \mu^{\text{DiD}}_t = \E[\bs X_{t} - \bs X_{T_0} |D=1] - \E[\bs X_{t} -\bs X_{T_0} |D=0]\quad \text{for $t=T_0+1,...,T$}.
\]
The key assumption for the DiD estimator is the following mean change
independence assumption, also known as parallel trends.
\begin{assumptionPT}[Parallel trends] \label{A-trend}
   For $t=T_0+1,...,T$,
    \begin{equation}\mathbb{E}\left[\bs X_{t} (0)-\bs X_{T_0}(0) \mid D=0\right]=\mathbb{E}\left[\bs X_{t} (0)-\bs X_{T_0}(0)\mid D=1\right]. \label{eq:trend} \tag{PT}
    \end{equation}
\end{assumptionPT}
Under Assumptions \ref{A-anticipation}, \ref{A-overlap}, and \ref{A-trend},
the DiD estimand identifies the ATT, $\bs\mu^{\text{ATT}}_t=\bs\mu^{\text{DiD}}_t$;
see \citet{ROTH2023je}.

However, if the initial distributions of outcomes in the pre-treatment periods differ between treated and control units, the parallel trends assumption (\ref{A-trend}) becomes implausible for binary outcomes $\{\bs X_t\}_{t=1}^T$ due to mean reversion effects.

To illustrate mean reversion, consider binary employment status $X_t\in \{0,1\}$ across two periods, $t=1,2$. Initially, 50\% of treated units and 25\% of control units are employed ($\Pr (X_{1}(0) = 1 \mid D= 1) = 0.5$ and $\Pr (X_{1}(0) = 1 \mid D = 0) = 0.25$). By period 2, 87.5\% of treated units are employed. We assume that employed individuals remain employed.

As illustrated in \autoref{fig:parallel-trends-vs-transitions}, suppose that two-thirds of unemployed control units become employed ($\Pr(X_{2}(0) = 1 \mid X_{1}(0) = 0, D_i = 0) = 2/3$), corresponding to a 50 percentage-point increase in their employment rate. Under the parallel trends assumption, this requires the treated group's employment rate to also rise by 50 percentage points in the absence of treatment. Such an implication is implausible: it would require all unemployed treated units to become employed, yielding 100\% employment ($\Pr(X_{2}(0) = 1 \mid X_{1}(0) = 0, D = 1) = 1$). This extreme prediction arises because parallel trends ignores mean reversion: groups with higher initial rates have limited room to improve further. 

\begin{figure}[t]
    \centering
    \caption{A Numerical Example: Parallel Trends and Transition Independence.}\label{fig:parallel-trends-vs-transitions}
    \includegraphics[width=0.75\textwidth]{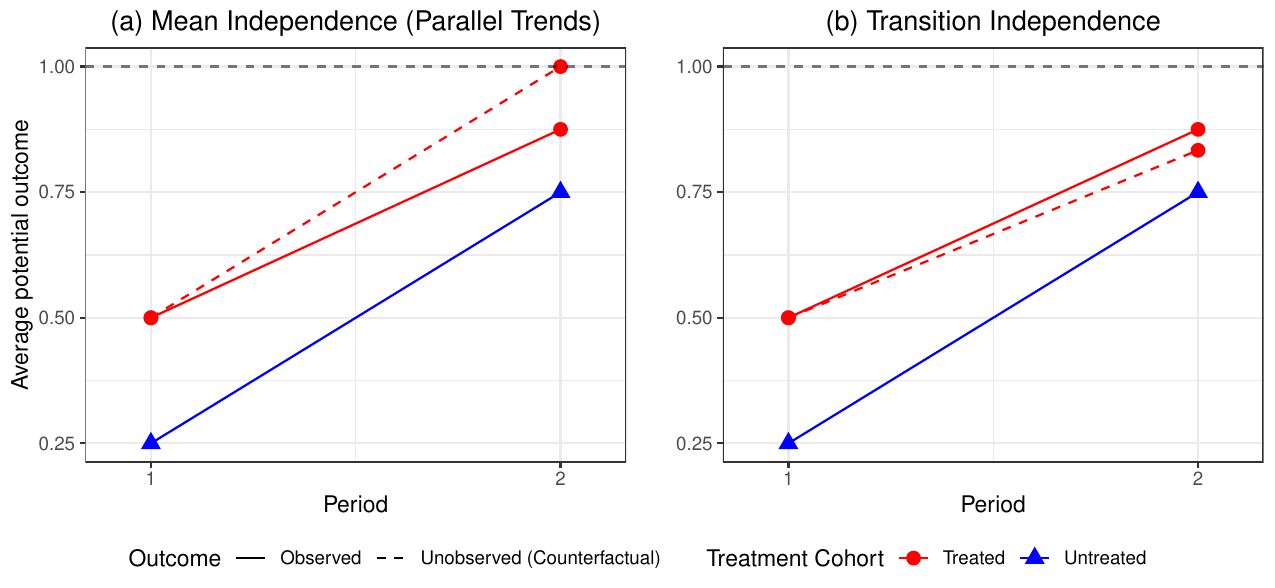}
    \fignote{In the left panel parallel trends holds but transition independence fails; in the right panel transition independence holds but parallel trends fails. The data generating process is given in Section~\ref{sec:binary-outcome-models}.}
\end{figure}

Transition independence instead constructs counterfactuals directly from transition probabilities, thereby avoiding the implausible implications induced by parallel trends:  Assumption \ref{A-transition} requires that the transition probability function of the potential outcome $\bs X_t(0)$, rather than its mean change over time, be independent of treatment status $D$ as $\Pr(X_{2}(0) = 1 \mid X_{1}(0) = 0, D = 0) = \Pr(X_{2}(0) = 1 \mid X_{1}(0) = 0, D = 1) = 2/3$. Notably, these assumptions yield opposite ATT estimates in this example: negative under parallel trends but positive under transition independence.  %

Prior work notes the fragility of parallel trends for discrete and
bounded outcomes
\citep{blundell2004,wooldridge2023simple,marx2023parallel}; we provide the exact bias characterization in the following proposition.

\begin{proposition}\label{proposition:dbt-vs-did}
    Suppose that Assumptions \ref{A-transition}, \ref{A-anticipation}, and \ref{A-overlap} hold. Then, the bias of the DiD estimator is given by
\begin{align}\label{eq:did-bias-decomposition}
 \bs \mu^{\text{DiD}}_t-\bs \mu^{\text{ATT}}_t=    & \sum_{\bs x_1^{T_0}  \in \mathcal{X}^{T_0}}  \mathbb{E} [ \bs X_{t}-\bs x_{T_0} |\bs X_1^{T_0} = \bs x_1^{T_0},
 D=0]\nonumber \\
 &\quad\qquad \times  
   \left\{ \Pr\left(\bs X_1^{T_0} = \bs x_1^{T_0} \mid D = 1\right) - \Pr\left(\bs X_1^{T_0} = \bs x_1^{T_0} \mid D = 0\right) \right\}
\end{align}
  for $t=T_0+1,...,T$.
\end{proposition}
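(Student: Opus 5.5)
The plan is to start from Proposition~\ref{prop:att-identification-vanilla} and write both $\bs\mu^{\text{DiD}}_t$ and $\bs\mu^{\text{ATT}}_t$ as sums over pre-treatment paths $\bs x_1^{T_0}\in\mathcal{X}^{T_0}$, weighted by the path probabilities in each group; subtracting will then give \eqref{eq:did-bias-decomposition}. Write $p_d(\bs x_1^{T_0}):=\Pr(\bs X_1^{T_0}=\bs x_1^{T_0}\mid D=d)$ and $m_0(\bs x_1^{T_0}):=\E[\bs X_t\mid \bs X_1^{T_0}=\bs x_1^{T_0},D=0]$. By Assumption~\ref{A-overlap}, $m_0$ is well defined on every path with $p_1>0$. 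Paths with $p_0>0$ but $p_1=0$ also have $\Pr(\bs X_1^{T_0}=\bs x_1^{T_0})>0$, so $m_0$ is defined on the union of both supports. I will adopt the convention that terms with $p_0=p_1=0$ vanish.

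First, I will rewrite the ATT. Proposition~\ref{prop:att-identification-vanilla} gives $\bs\mu^{\text{ATT}}_t=\E[\bs X_t\mid D=1]-\sum_{\bs x_1^{T_0}} m_0(\bs x_1^{T_0})\,p_1(\bs x_1^{T_0})$ by iterated expectations over the treated path distribution.

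Second, I will expand the DiD estimand. The first treated term $\E[\bs X_t\mid D=1]$ cancels against the same term in the ATT. The remaining pieces are $-\E[\bs X_{T_0}\mid D=1]$ and $-\E[\bs X_t-\bs X_{T_0}\mid D=0]$. Because $\bs X_{T_0}$ is a component of the conditioning path, $\E[\bs X_{T_0}\mid D=1]=\sum \bs x_{T_0}\,p_1(\bs x_1^{T_0})$. Similarly, $\E[\bs X_t-\bs X_{T_0}\mid D=0]=\sum \E[\bs X_t-\bs x_{T_0}\mid \bs X_1^{T_0}=\bs x_1^{T_0},D=0]\,p_0(\bs x_1^{T_0})$.

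Third, I will combine the pieces:
\[
\bs\mu^{\text{DiD}}_t-\bs\mu^{\text{ATT}}_t=\sum_{\bs x_1^{T_0}}\big(m_0(\bs x_1^{T_0})-\bs x_{T_0}\big)p_1(\bs x_1^{T_0})-\sum_{\bs x_1^{T_0}}\big(m_0(\bs x_1^{T_0})-\bs x_{T_0}\big)p_0(\bs x_1^{T_0}).
\]
Noting that $m_0(\bs x_1^{T_0})-\bs x_{T_0}=\E[\bs X_t-\bs x_{T_0}\mid \bs X_1^{T_0}=\bs x_1^{T_0},D=0]$, this is exactly \eqref{eq:did-bias-decomposition}.

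The computation is routine. The only delicate step is bookkeeping of supports: the conditional expectation must be well defined at every path carrying weight under either group, which overlap guarantees as noted above. Assumptions~\ref{A-transition} and~\ref{A-anticipation} enter only through Proposition~\ref{prop:att-identification-vanilla}. I would also note in passing that $\bs X_{T_0}=\bs X_{T_0}(0)$ by \eqref{eq:observed}, although the formula itself only involves observed quantities.
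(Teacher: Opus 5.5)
Your proposal is correct and follows essentially the same route as the paper. The paper first uses no anticipation to rewrite the bias as the difference in untreated mean changes. It then applies iterated expectations with transition independence to replace $\E[\bs X_t(0)\mid \bs X_1^{T_0},D=1]$ by $\E[\bs X_t\mid \bs X_1^{T_0},D=0]$, which is exactly the step you import by citing Proposition~\ref{prop:att-identification-vanilla}, and then expands over $\mathcal{X}^{T_0}$ as you do. Your support bookkeeping is fine; in fact, under Assumption~\ref{A-overlap} the treated and control path supports coincide, so the case $p_0>0=p_1$ never arises.
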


Proposition \ref{proposition:dbt-vs-did} highlights the  two sources of bias for the DiD estimator under the transition independence assumption.  Specifically, the bias arises from differences in pre-treatment outcome distributions between treated and control units,  $\Pr(\bs X_1^{T_0} = \bs x_1^{T_0} \mid D = 1) - \Pr(\bs X_1^{T_0} = \bs x_1^{T_0} \mid D = 0)$, multiplied by trends in untreated potential outcomes, $\mathbb{E} [ \bs X_{t}-\bs x_{T_0} |\bs X_1^{T_0} = \bs x_1^{T_0},
 D=0]$. See \cite{ding2019bracketing} for a similar decomposition in continuous outcomes and \cite{angrist2009mostly} for linear models in the context of conditional parallel trends. This suggests that, when transition independence holds, the DiD estimator may be substantially biased when pre-treatment outcome distributions differ between treated and control units, especially when control units experience large outcome changes over time. Since each term in the bias decomposition \eqref{eq:did-bias-decomposition} is identifiable from the data, we can investigate the potential  bias and its source in the DiD estimator under the transition independence assumption.

\begin{remark}[Nonlinear DiD and transition independence]\label{remark:logit}
A natural alternative for discrete outcomes is the nonlinear DiD of \citet{wooldridge2023simple}, which replaces linear parallel trends with parallel trends on a nonlinear transformation of the conditional {mean} \citep[see also][]{blundell2004,athey2006identification,blundell2009,Puhani2012}. For each category $k$, let $G$ be a known, strictly increasing conditional-mean function, so that $G^{-1}$ is the corresponding link. The index parallel trends restriction requires the change in $G^{-1}\!\big(\E[X^{(k)}_t(0)\mid D=d]\big)$ between $T_0$ and $t$ to be common across $d\in\{0,1\}$ \citep[eqs.~(2.6)--(2.8)]{wooldridge2023simple}.

Replacing the mean change in (\ref{eq:trend}) with the change in $G^{-1}(\E[X_t^{(k)}(0)\mid D])$, the nonlinear DiD estimator identifies, for each category $k$ and each $t=T_0+1,...,T$, the quantity
\begin{equation}\label{eq:index-did-estimand}
\E\!\left[X^{(k)}_t \mid D=1\right]-G\!\left(G^{-1}\!\left(\E\!\left[X^{(k)}_{T_0} \mid D=1\right]\right)+\Delta^{(k)}_t\right),
\end{equation}
where $\Delta^{(k)}_t:=G^{-1}\!\left(\E\!\left[X^{(k)}_{t} \mid D=0\right]\right)-G^{-1}\!\left(\E\!\left[X^{(k)}_{T_0} \mid D=0\right]\right)$ is the change in the control group between $T_0$ and $t$ on the index scale.

Like canonical parallel trends, the counterfactual untreated mean for nonlinear DiD  is constructed from the pre-treatment mean and the change in the control group, but now on the transformed scale. Under transition independence, this construction yields bias analogous
to \eqref{eq:did-bias-decomposition} in
Proposition~\ref{proposition:dbt-vs-did}. 
\begin{proposition}\label{proposition:wooldridge-index}
Suppose that Assumptions \ref{A-transition}, \ref{A-anticipation}, and \ref{A-overlap} hold, and let $G$ be a known, strictly increasing conditional-mean function. Then, for $t=T_0+1,...,T$ and each category $k$, the bias of the nonlinear DiD estimand (\ref{eq:index-did-estimand}) relative to the $k$th element $\mu^{\text{ATT},(k)}_t$ of $\bs\mu^{\text{ATT}}_t$ is given by
\begin{align}\label{eq:index-did-bias}
& \sum_{\bs x_1^{T_0}  \in \mathcal{X}^{T_0}}  \mathbb{E} [ X^{(k)}_{t}- \lambda^{(k)}_t x^{(k)}_{T_0} |\bs X_1^{T_0} = \bs x_1^{T_0},
 D=0]\nonumber \\
 &\quad\qquad \times
   \left\{ \Pr\left(\bs X_1^{T_0} = \bs x_1^{T_0} \mid D = 1\right) - \Pr\left(\bs X_1^{T_0} = \bs x_1^{T_0} \mid D = 0\right) \right\},
\end{align}
where
\begin{equation}\label{eq:index-did-weight}
\lambda^{(k)}_t:=\frac{G\!\left(G^{-1}\!\left(\E\!\left[X^{(k)}_{T_0} \mid D=1\right]\right)+\Delta^{(k)}_t\right)-G\!\left(G^{-1}\!\left(\E\!\left[X^{(k)}_{T_0} \mid D=0\right]\right)+\Delta^{(k)}_t\right)}{\E\!\left[X^{(k)}_{T_0} \mid D=1\right]-\E\!\left[X^{(k)}_{T_0} \mid D=0\right]}
\end{equation}
if $\E[X^{(k)}_{T_0}\mid D=1]\neq\E[X^{(k)}_{T_0}\mid D=0]$, and $\lambda^{(k)}_t:=1$ otherwise.
\end{proposition}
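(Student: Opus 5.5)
The plan is to reduce the nonlinear DiD bias to the linear computation behind Proposition~\ref{proposition:dbt-vs-did}, using the fact that the index construction modifies only the counterfactual term, and only through $\lambda^{(k)}_t$.

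Write $m_d:=\E[X^{(k)}_{T_0}\mid D=d]$ and $c_d:=\E[X^{(k)}_t\mid D=0]$-type quantities as needed, and let $p_d(\bs x_1^{T_0}):=\Pr(\bs X_1^{T_0}=\bs x_1^{T_0}\mid D=d)$.

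\textbf{Step 1: express the ATT.} By Proposition~\ref{prop:att-identification-vanilla}, the ATT is
\[
\mu^{\text{ATT},(k)}_t=\E[X^{(k)}_t\mid D=1]-\sum_{\bs x_1^{T_0}}\E[X^{(k)}_t\mid \bs X_1^{T_0}=\bs x_1^{T_0},D=0]\,p_1(\bs x_1^{T_0}).
\]
The sum ranges only over histories with $p_1>0$; by Assumption~\ref{A-overlap} these also have $p_0>0$, so each conditional expectation is well defined. Extending the sum to all of $\mathcal{X}^{T_0}$ is harmless, because the corresponding weights vanish.

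\textbf{Step 2: write the bias.} Subtracting the ATT from the estimand (\ref{eq:index-did-estimand}) gives
\[
B=\sum_{\bs x_1^{T_0}}\E[X^{(k)}_t\mid \bs x_1^{T_0},D=0]\,p_1-G\big(G^{-1}(m_1)+\Delta^{(k)}_t\big).
\]

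\textbf{Step 3: rewrite the $G$-term.} Since $G$ is strictly increasing, $G(G^{-1}(m_0))=m_0$, so
\[
G\big(G^{-1}(m_0)+\Delta^{(k)}_t\big)=G\big(G^{-1}(\E[X^{(k)}_t\mid D=0])\big)=\E[X^{(k)}_t\mid D=0].
\]
This requires that $\E[X^{(k)}_t\mid D=0]$ lie in the range of $G$; I will state this as an implicit standing condition for the estimand to be defined. Then, by the definition (\ref{eq:index-did-weight}),
\[
G\big(G^{-1}(m_1)+\Delta^{(k)}_t\big)=\E[X^{(k)}_t\mid D=0]+\lambda^{(k)}_t(m_1-m_0).
\]
When $m_1=m_0$ the two $G$-terms coincide, so the identity also holds with $\lambda^{(k)}_t=1$ (indeed with any value).

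\textbf{Step 4: decompose by history.} By the law of total probability,
\[
\E[X^{(k)}_t\mid D=0]=\sum_{\bs x_1^{T_0}}\E[X^{(k)}_t\mid \bs x_1^{T_0},D=0]\,p_0,
\qquad
m_d=\sum_{\bs x_1^{T_0}} x^{(k)}_{T_0}\,p_d.
\]
The second identity holds because $X^{(k)}_{T_0}$ is determined by the history. Moreover, $x^{(k)}_{T_0}=\E[X^{(k)}_{T_0}\mid \bs x_1^{T_0},D=0]$ is a constant. Substituting,
\[
B=\sum_{\bs x_1^{T_0}}\E\big[X^{(k)}_t-\lambda^{(k)}_t x^{(k)}_{T_0}\mid \bs x_1^{T_0},D=0\big]\,(p_1-p_0),
\]
which is (\ref{eq:index-did-bias}).

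\textbf{Main obstacle.} The only delicate point is Step 3, the rewriting of the $G$-term through $\lambda^{(k)}_t$. Everything else is bookkeeping identical to the proof of Proposition~\ref{proposition:dbt-vs-did}. The delicate parts are the range condition on $G$ and the degenerate case $m_1=m_0$, which must be checked separately.
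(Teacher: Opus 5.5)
Your proof is correct and follows essentially the same argument as the paper. Both write the bias as $\E[X^{(k)}_t(0)\mid D=1]-G\bigl(G^{-1}(m_1)+\Delta^{(k)}_t\bigr)$, use $G\bigl(G^{-1}(m_0)+\Delta^{(k)}_t\bigr)=\E[X^{(k)}_t\mid D=0]$ to rewrite the $G$-term through $\lambda^{(k)}_t$, and then expand $m_d$ over histories. The only difference is bookkeeping: the paper gets $\E[X^{(k)}_t(0)\mid D=1]$ from the linear bias formula of Proposition~\ref{proposition:dbt-vs-did}, so the nonlinear bias first appears as the linear bias plus $(1-\lambda^{(k)}_t)(m_1-m_0)$, whereas you go directly through the identification formula of Proposition~\ref{prop:att-identification-vanilla}.
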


Proposition~\ref{proposition:wooldridge-index} shows that a logit, probit, or any other
single-index link does not eliminate the source of DiD bias characterized in Proposition~\ref{proposition:dbt-vs-did}. Transforming the scale therefore does not address mean reversion: the nonlinear DiD estimand may remain biased if the pre-treatment outcome distributions differ between treated and control units.\footnote{Imposing the index restriction conditional on pre-treatment outcomes carries the same curse of dimensionality  as the conditional DiD estimator. One could impose the index-scale analogue of Assumption~\ref{A-conditional-trend}  \citep{wooldridge2023simple}, requiring the change in $G^{-1}(\E[X^{(k)}_t(0)\mid \bs X_1^{T_0}=\bs x_1^{T_0},D])$ to be common across the two groups within each pre-treatment history. That restriction has to hold history by history, so the conditioning set again grows exponentially in the number of pre-treatment periods, and the link adds a functional-form restriction on top of that conditioning rather than reducing what must be conditioned on.}

Another related approach is \citet{boussim2026compositional}'s Compositional Difference-in-Differences (CoDiD), which imposes parallel trends on the log ratios of the marginal category probabilities. Specifically, CoDiD requires the change in $\log(\E[X^{(k)}_t(0)\mid D = d]/\E[X^{(K)}_t(0)\mid D = d])$ between $T_0$ and $t$ to be common across $d \in \{0, 1\}$ for each category $k=1, \ldots, K-1$. This restriction also concerns period-by-period marginal compositions rather than the transition law, and therefore is not generally implied by transition independence. For binary outcomes, the log ratio reduces to the scalar logit, and Proposition~\ref{proposition:wooldridge-index} applies as stated.

\end{remark}

\begin{remark}[Equivalence with conditional parallel trends]\label{remark:equivalence}\label{remark:conditional-did-categorical}
A related identification approach conditions the parallel trends assumption
on pre-treatment outcomes.
 \begin{assumptionCPT}[Parallel trends, conditional on pre-treatment outcomes] \label{A-conditional-trend}
    For all $\bs x_1^{T_0} \in \mathcal{X}^{T_0}$ and post-treatment period $t =T_0+1,...,T$,
    $\mathbb{E}\bigl[\bs X_{t} (0)-\bs X_{T_0}(0) \bigm| \bs X_1^{T_0} = \bs x_1^{T_0},\allowbreak D=0\bigr]
     =\mathbb{E}\bigl[\bs X_{t} (0)-\bs X_{T_0}(0) \bigm| \bs X_1^{T_0} = \bs x_1^{T_0},\allowbreak D=1\bigr]$.
\end{assumptionCPT}
Under the no-anticipation assumption (Assumption \ref{A-anticipation}),
Assumption \ref{A-conditional-trend} is equivalent to Assumption
\ref{A-transition}, and the equivalence is immediate: $\bs X_{T_0}(0)$ is
fixed by the conditioning path, so it cancels from both sides of the
conditional parallel trends restriction, which then requires the conditional
mean of $\bs X_t(0)$ given the path to be common across the two groups;
because $\bs X_t(0)$ is a vector of indicators, that conditional mean is the
entire probability mass function, which is the transition independence
restriction (\ref{eq:transition}).   A formal statement is in
Section~\crossref{sec:supp-equivalence} of the Supplemental Appendix.
Without latent types, transition independence therefore yields an ATT estimate identical to that of a conditional DiD that conditions on the full pre-treatment
outcome history.\footnote{One practical caveat is that
with more than two categories,  conditioning on the history of a single 
binarized outcome is generally insufficient for identification: the analogous restriction may hold for each binarized outcome separately while Assumption CPT, which conditions on the full categorical history, fails. } %
\end{remark}

Two challenges arise in applying the conditional parallel trends
approach. First, when persistent unobserved heterogeneity affects both
treatment assignment and untreated outcome transitions, conditioning on
pre-treatment outcomes does not eliminate differential mean reversion:
treated and control units with the same pre-treatment values can revert
toward permanently different untreated means
\citep[Section~4.1]{dechaisemartin2026didbook}. Second, the number of
possible histories grows exponentially in the number of pre-treatment
periods: in typical panels, many histories contain few observations and
overlap between treated and control units weakens, which prevents researchers from directly implementing Assumption~\ref{A-conditional-trend}.

The
next section addresses both challenges: discrete latent heterogeneity
restores transition independence conditional on latent type, and,
within each type, a Markov restriction reduces the conditioning history
to the most recent outcomes.
    
\section{Discrete Outcome Models with Latent Heterogeneity}\label{sec:binary-outcome-models-with-latent-type}

We model unobserved heterogeneity with a discrete
latent type: each unit $i$ has type $Z_i \in \mathcal{J} :=
\{1, \ldots, J\}$, with $J$ known, and $\pi^j := \Pr(Z_i = j)$ denotes
the population share of type~$j$. We impose analogs of Assumptions~\ref{A-transition} and~\ref{A-overlap}
across all periods, restricted to untreated potential outcomes and
conditional on latent types.

\begin{assumptionTILT}[Transition independence conditional on latent
types]\label{A-transition-Z}
For all $t = 2, \ldots, T$, $\bs x_t \in \mathcal{X}$, \revtext{$\bs x_1^{t-1} \in \mathcal{X}^{t-1}$,}
 and $j \in \mathcal{J}$,
$\Pr\bigl(\bs X_{t}(0) = \bs x_t \mid \bs X_1^{t-1}(0) = \bs x_1^{t-1},\,
D=0,\, Z=j\bigr)
=
\Pr\bigl(\bs X_{t}(0) = \bs x_t \mid \bs X_1^{t-1}(0) = \bs x_1^{t-1},\,
D=1,\, Z=j\bigr).$
\end{assumptionTILT}

\begin{assumptionCSLT}[Common support conditional on latent types]  \label{A-overlap-Z} 
There exists $\epsilon>0$ such that, for all $j = 1, \ldots, J$ and all
$\bs x_1^{t}\in\mathcal{X}^{t}$ with $\Pr(\bs X_1^{t}(0) = \bs x_1^{t}\mid Z = j)>0$, 
$\epsilon\leq \Pr(D=1\mid \bs X_1^{t}(0) = \bs x_1^{t}, Z = j)\leq 1-\epsilon$. 
\end{assumptionCSLT}

Because Assumption~\ref{A-transition-Z} restores transition
independence within each latent type, the natural target is the
type-specific ATT. We define the \emph{latent-type average treatment
effect on the treated} (LTATT) as
\begin{equation}\label{LTATT}
\bs \mu^{\text{ATT},j}_t := \E\left[\bs X_t(1) - \bs X_t(0) \mid
D = 1,\, Z=j\right]
\quad\text{for } j \in \mathcal{J},\ t = T_0+1,\ldots,T.
\end{equation}
The ATT aggregates the LTATTs with weights equal to the type shares
among the treated:
\begin{equation}\label{ATT}
\bs \mu^{\text{ATT}}_t := \E\left[\bs X_t(1) - \bs X_t(0) \mid
D = 1\right]
= \sum_{j=1}^J \Pr(Z=j \mid D=1)\, \bs \mu^{\text{ATT},j}_t
\quad\text{for } t = T_0+1,\ldots,T.
\end{equation}

Let $\bs W_i := (\bs X_{i1}^\top, \ldots, \bs X_{iT}^\top, D_i)^\top
\in \mathcal{W} := \mathcal{X}^T \times \{0,1\}$ collect unit $i$'s
outcome vectors over the $T$ periods and its treatment indicator. We
assume the data are generated from the finite mixture with probability
mass function (PMF)
\begin{equation}\label{eq:mixture}
p_{\bs W}(\bs w) = \sum_{j=1}^J \pi^j\, p_{\bs W}^j(\bs w),
\end{equation}
where $p_{\bs W}^j(\bs w) := \Pr(\bs W = \bs w \mid Z=j)$ is the
conditional PMF of $\bs W$ given $Z=j$. The true number of components
$J$ is the smallest integer for which $p_{\bs W}$ admits the
representation \eqref{eq:mixture}.

\begin{assumptionRSM}[Random sampling from a finite mixture]
\label{A-sample}
(a) $\{\bs W_i\}_{i=1}^n$ are i.i.d.\ draws from \eqref{eq:mixture},
where observed and potential outcomes are related by
\eqref{eq:observed} and Assumptions \ref{A-anticipation},
\ref{A-transition-Z}, and \ref{A-overlap-Z} hold. (b) The number of
components $J$ is known. (c) $\pi^1 < \pi^2 < \cdots < \pi^J$.
\end{assumptionRSM}

Assumption~\ref{A-sample}(a) combines the maintained restrictions (no
anticipation, transition independence conditional on latent types, and
common support) with the specification that the data are drawn from a
$J$-component mixture, each component corresponding to a latent type
with its own transition dynamics and treatment probability, so latent
type may be correlated with treatment. Assumption~\ref{A-sample}(b)
takes $J$ as known; Remark~\ref{remark:model-selection} discusses its
selection in practice. Assumption~\ref{A-sample}(c) resolves the
labeling indeterminacy: finite mixtures are identified only up to
permutation of their components, and ordering the weights pins down
the labels.

 For identification, we assume that, conditional on latent type and
treatment status, the potential outcome processes follow first-order
Markov chains. The Markov restriction also reduces the conditioning
history to the most recent outcome, addressing the dimensionality
problem noted above. Write
$\bs X_1^{t-1}(d) := \{\bs X_{s}(d)\}_{s=1}^{t-1}$.
\begin{assumptionM}[the first-order Markov]\label{A-Markov} 
 For all $j=1,2,...,J$, conditional on $Z=j$ and $D=d$, $\{\bs X_{t}(d): t= 1,...,T\}$ follows a (non-stationary) first-order Markov process, i.e., for $t=2,..., T$ and all $d\in \{0,1\}$,
 $\Pr\bigl(\bs X_{t}(d)\mid \{\bs X_s(d)\}_{s=1}^{t-1},\allowbreak D=d,\allowbreak Z=j\bigr)
 =
 \Pr\bigl(\bs X_{t}(d)\mid \bs X_{t-1}(d),\allowbreak D=d,\allowbreak Z=j\bigr)$.
\end{assumptionM}

Collect the unknown probability mass functions for 
$Z$ and the type-specific transition probabilities of the potential outcomes $\bs X_t(d)$, $d\in\{0,1\}$, into the parameter vector
$\bs{\psi}=(\bs\pi,\bs\varphi^1,...,\bs\varphi^J)\in \Theta_{\bs\psi}$.
Here $\bs\pi=(\pi^1,...,\pi^J)^\top$ satisfies $\epsilon\leq \pi^j\leq 1-\epsilon$ for some $\epsilon>0$ and $\sum_{j=1}^J \pi^j=1$, and each component is the collection
$\bs\varphi^j=\{p^j_{\bs X_1(0),D}(\cdot,\cdot),\allowbreak
\{p^j_{\bs X_t(0)|\bs X_{t-1}(0)}(\cdot|\cdot)\}_{t=2}^{T},\allowbreak
\{p^j_{\bs X_t(1)|\bs X_{t-1}(1)}(\cdot|\cdot)\}_{t=T_0+1}^{T}\}$.
The component probabilities are
$p^j_{\bs X_1(0),D}(\bs x_1,d):=\Pr\bigl(\bs X_{1}(0)=\bs x_1,\allowbreak D=d \mid Z = j\bigr)$ and
$p^j_{\bs X_t(d)|\bs X_{t-1}(d)}(\bs x_t|\bs x_{t-1}):=\Pr\bigl(\bs X_{t}(d)=\bs x_t\mid \bs X_{t-1}(d)=\bs x_{t-1},\allowbreak Z=j\bigr)$
for $d\in\{0,1\}$ and $t\in\{2,...,T\}$.
 
Under Assumption \ref{A-Markov}, and noting from Assumption \ref{A-anticipation} that $\bs X_t(0)=\bs X_t(1)$ for $t=1,\ldots,T_0$, by explicitly writing its dependence on the parameters $\bs \psi$ and $\bs \varphi^j$,
the PMF of $\bs W=(\bs X_{1},\ldots,\bs X_{T},D)$ in \eqref{eq:mixture} can be written as
\begin{equation}\label{mixture-2}
p_{\bs W}(\bs w;\bs \psi) := \sum_{j=1}^J \pi^j p_{\bs W}^j(\bs w;\bs \varphi^j)
\end{equation}
where
\begin{align*}
&p_{\bs W}^j(\bs w;\bs \varphi) \nonumber\\
&:=\left\{
\begin{array}{lc}
   p^j_{\bs X_1(0),D}(\bs x_1,0) \prod_{t=2}^{T} p^j_{\bs X_t(0)|\bs X_{t-1}(0)}(\bs x_t|\bs x_{t-1})&\text{if $D=0$}\\
  p^j_{\bs X_1(0),D}(\bs x_1,1) \prod_{t=2}^{T_0} p^j_{\bs X_t(0)|\bs X_{t-1}(0)}(\bs x_t|\bs x_{t-1})\prod_{t=T_0+1}^{T} p^j_{\bs X_t(1)|\bs X_{t-1}(1)}(\bs x_t|\bs x_{t-1})&\text{if $D=1$.}
\end{array}
\right.
\end{align*} 

Let $\bs \psi^*$ denote the true value of   $\bs \psi$, so that the true distribution of 
$\bs W$ is given by $p_{\bs W}(\bs w;\bs \psi^*)$.

By extending the arguments in \citet{anderson54pcma}, \citet{Madansky60}, \citet{hu08}, \citet{Kasahara2009}, \citet{carroll10jns}, \citet{Hu12}, and \citet{ishimaru2025},  the following proposition establishes the identification of the mixture models for the potential outcome processes $\{\boldsymbol{X}_t(0)\}_{t=1}^T$ and $\{\boldsymbol{X}_t(1)\}_{t=T_0+1}^T$ in (\ref{mixture-2}), even when the panel is relatively short.  

\begin{proposition}\label{P-2}
Suppose that Assumptions~\ref{A-sample}, \ref{A-Markov}, and \crossref{A-P-2} hold.
If $T_0 \ge k + 1$, $T \ge 2(k + 1)$, and $J \le |\mathcal{X}|^{k}$ for $k = 1, 2, 3, \ldots$, 
then $\boldsymbol{\psi}$ is uniquely identified from 
$\{p_{\boldsymbol{W}}(\boldsymbol{w}; \boldsymbol{\psi}) : \boldsymbol{w} \in \boldsymbol{\mathcal{W}}\}$.
\end{proposition}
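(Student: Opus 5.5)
The plan is to use the standard eigendecomposition argument for finite mixtures with three conditionally independent "views" (Anderson, Hu, Kasahara--Shimotsu). It relies on the Markov structure within each type. Fix $k$. Conditional on $Z=j$ and on the value of an intermediate "anchor" outcome, past and future blocks are independent. Consider first the untreated subpopulation ($D=0$, where all $T$ periods follow the $\bs X_t(0)$ chain), and separately $D=1$.

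I would use the blocks $\bs V_1=(\bs X_1,\ldots,\bs X_k)$, the anchor $\bs X_{k+1}$, and $\bs V_2=(\bs X_{k+2},\ldots,\bs X_{2k+1})$, together with a later period (or $\bs X_{2k+2}$) available because $T\ge 2(k+1)$. By Assumption~\ref{A-Markov}, conditional on $(Z=j,\bs X_{k+1}=\bs x,D=d)$, the blocks $\bs V_1$ and $(\bs X_{k+2},\ldots,\bs X_T)$ are independent. For each anchor value $\bs x$ and each value $\bs x'$ of an intermediate period, form the $|\mathcal X|^k\times|\mathcal X|^k$ matrices
\[
P_{\bs x,\bs x'}=\big[\Pr(\bs V_1=\bs v_1,\bs X_{k+1}=\bs x,\bs X_{m}=\bs x',\bs V_2=\bs v_2,D=d)\big]_{\bs v_1,\bs v_2},
\]
with $m$ placed between the blocks as needed. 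Each factorizes as $P_{\bs x,\bs x'}=L_{\bs x}\,\Lambda_{\bs x,\bs x'}\,R_{\bs x}^\top$. Here $L_{\bs x}$ has columns $\Pr(\bs V_1=\cdot,\bs X_{k+1}=\bs x,D=d\mid Z=j)$, $R_{\bs x}$ has columns holding the forward transition probabilities, and $\Lambda$ is diagonal with entries $\pi^j$ times the type-specific transition into $\bs x'$. Summing over $\bs x'$ gives $P_{\bs x}=L_{\bs x}\Lambda_{\bs x}R_{\bs x}^\top$.

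The condition $J\le|\mathcal X|^k$ is exactly what allows the factor matrices to have full column rank $J$; this is the rank condition I expect Assumption~\ref{A-P-2} to impose. Given full rank, I would choose $J\times|\mathcal X|^k$ selection or projection matrices so that the reduced $P_{\bs x}$ is invertible. Then
\[
P_{\bs x,\bs x'}P_{\bs x}^{-1}=\tilde L_{\bs x}D_{\bs x,\bs x'}\tilde L_{\bs x}^{-1},
\]
an eigendecomposition whose eigenvalues are the type-specific transition probabilities into $\bs x'$. The eigenvectors, normalized to sum to one, give $\Pr(\bs V_1\mid \bs X_{k+1}=\bs x,Z=j)$. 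Distinct eigenvalues across types, also from Assumption~\ref{A-P-2}, make the decomposition unique up to column permutation. The remaining parameters then follow by back-substitution:
\begin{itemize}
\item the type weights at the anchor;
\item $p^j_{\bs X_1(0),D}$, by marginalizing $\bs V_1$;
\item each one-step transition $p^j_{\bs X_t(d)\mid\bs X_{t-1}(d)}$, by conditioning on neighboring periods and applying the same decomposition with the anchor shifted.
\end{itemize}
Here $T_0\ge k+1$ ensures that the past block plus anchor lie entirely in pre-treatment periods, so that they are common across $D$. The post-treatment transitions for $D=1$ are recovered using the later blocks.

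The main obstacle is making labels consistent. The eigendecomposition identifies types only up to a permutation that may differ across anchor values $\bs x$, across $D$, and across shifted anchors. I would resolve this by linking everything through the common pre-treatment block $\bs V_1$. For each $\bs x$, the recovered $\pi^j\Pr(\bs V_1=\cdot,\bs X_{k+1}=\bs x\mid Z=j)$ must aggregate, by the law of total probability over $\bs x$ and $d$, into a single set of $J$ type-specific distributions. I would also use Assumption~\ref{A-sample}(c), $\pi^1<\cdots<\pi^J$, to fix a global labeling once the type marginals $\pi^j$ are recovered. Concretely, I would first identify the joint distribution of $(\bs V_1,\bs X_{k+1})$ given $Z=j$ by pooling $D$, using $T_0\ge k+1$. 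This gives type-level objects with one global labeling. All $\bs x$-specific and $D$-specific decompositions are then matched to it, via full column rank of $L$ and uniqueness of the nonnegative factorization. Once labels are global, $\bs\psi$ is pinned down uniquely.
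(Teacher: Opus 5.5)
There is a genuine gap at your central step, the identity $P_{\bs x,\bs x'}P_{\bs x}^{-1}=\tilde L_{\bs x}D_{\bs x,\bs x'}\tilde L_{\bs x}^{-1}$. This is the Anderson/Kasahara--Shimotsu three-view argument, which needs three blocks that are conditionally independent given type and the anchor. Under Assumption~\ref{A-Markov}, conditioning on type and a single anchor $\bs X_{k+1}$ gives only two independent views: the past and the entire future. Wherever you place $m$ so that $\Lambda_{\bs x,\bs x'}$ is a type-only diagonal ($\bs X_m$ adjacent to the anchor), one outer factor depends on $\bs x'$. For example, with $m=k+2$ and $\bs V_2=\bs X_{k+3}^{2k+2}$,
\[
P_{\bs x,\bs x'}=L_{\bs x}\,\Lambda_{\bs x,\bs x'}\,R_{\bs x'}^\top,\qquad
P_{\bs x}=\sum_{\bs x'}P_{\bs x,\bs x'}=L_{\bs x}\,\mathrm{diag}(\pi^1,\ldots,\pi^J)\,\tilde R_{\bs x}^\top .
\]
Here $R_{\bs x'}$ holds $\Pr(\bs V_2=\cdot\mid \bs X_{k+2}=\bs x',Z=j)$ and $\tilde R_{\bs x}$ holds the two-step-ahead laws from $\bs x$. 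Hence $P_{\bs x,\bs x'}P_{\bs x}^{-1}=L_{\bs x}\Lambda_{\bs x,\bs x'}R_{\bs x'}^\top(\tilde R_{\bs x}^\top)^{-1}\mathrm{diag}(\pi^j)^{-1}L_{\bs x}^{-1}$, and the middle factor is not diagonal. If instead $m$ follows $\bs V_2$, the transition into $\bs x'$ starts from the last coordinate of $\bs v_2$ and enters entrywise into the right factor, so again there is no similarity transform. Your identity essentially holds only without state dependence, which is exactly what this model is built to allow. This is also why your eigenvalues cannot be the transition probabilities into $\bs x'$.

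The missing idea, which the paper uses, is to condition on the adjacent pair $(\xi_2,\xi_3)=(\bs X_{k+1},\bs X_{k+2})$ in the control arm. The two views are then $\xi_1=\bs X_1^k$ and $\xi_4=\bs X_{k+3}^T$, with $\bs Q_{\xi_2,\xi_3}=\bs L_{\xi_3}\bs D_{\xi_3|\xi_2}\bar{\bs L}_{\xi_2}^\top$; a row of marginal probabilities gives $\bs L_{\xi_3}$ a row of ones that fixes scale. Both outer factors are then cancelled by varying $\xi_2$ and $\xi_3$ separately, as in \citet{carroll10jns}:
\[
\bs Q_{\check\xi_2,\xi_3}(\bs Q_{\check\xi_2,\bar\xi_3})^{-1}\bs Q_{\bar\xi_2,\bar\xi_3}(\bs Q_{\bar\xi_2,\xi_3^*})^{-1}=\bs L_{\xi_3}\bs D_{\xi_3,\xi_3^*}\bs L_{\xi_3^*}^{-1}.
\]
At $\xi_3^*=\xi_3$ this is an eigendecomposition whose eigenvalues are ratios of ratios of transition probabilities; these are what Assumption~\ref{A-P-2}(a)(ii) requires to be distinct. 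The same product with a fixed reference $\xi_3^*\neq\xi_3$ also settles the labels: $\tilde{\bs L}_{\xi_3}^{-1}\bs A_{\xi_3,\xi_3^*}\tilde{\bs L}_{\xi_3^*}$ is a permuted diagonal matrix that reveals the relative permutation \citep{HigginsJochmans21}. Your alignment plan does not substitute for this. Summing anchor-specific columns over $\bs x$ presupposes the matching, and nonnegative factorizations are not unique in general. The paper also runs no separate decomposition for $D=1$. Once the control-arm objects are aligned, the treated type shares come from a linear system in pre-treatment paths common to both arms (Assumption~\ref{A-P-2}(c)). The treated transitions come from inverting the matrix of $\tau^{j}_{\xi_1(0),\xi_2(0),D}(e_l,\xi_2,1)$ (Assumption~\ref{A-P-2}(d)). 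Labels therefore carry over automatically, and $\pi^1<\cdots<\pi^J$ is used only once, at the end.
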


Assumption~\crossref{A-P-2}, presented in the proof of Proposition~\ref{P-2},
imposes a set of regularity conditions for identification. 
The regularity conditions ensure that certain matrices of transition probabilities are non-singular, guaranteeing that variations in past outcomes induce sufficiently heterogeneous changes in current outcome probabilities across latent types for identification.

Applying Proposition~\ref{P-2} with $k = 1$, we can identify $\boldsymbol{\psi}$ when 
$J \le |\mathcal{X}|$, $T_0 = 2$, and $T = 4$ under these regularity conditions. This implies, for example, that when the discrete outcome takes three possible values (e.g., $E$, $U$, and $O$ in the ADA example), we may identify three latent types ($J=3$) from four-period panel data with two pre-treatment periods ($T=4$, $T_0=2$) under the first-order Markov assumption. When the panel is longer, we may identify more latent types.

Under Assumption \ref{A-Markov},  
Assumption \ref{A-transition-Z}  implies that 
$\Pr\bigl(\bs X_{t}(0)=\bs x_t\mid \bs X_{T_0}(0)=\bs x_{T_0},\allowbreak D=0,\allowbreak Z=j\bigr)
=
\Pr\bigl(\bs X_{t}(0)=\bs x_t\mid \bs X_{T_0}(0)=\bs x_{T_0},\allowbreak D=1,\allowbreak Z=j\bigr)$.
Then, given the identification of   $\bs\psi $ in Proposition \ref{P-2},
 we may identify  the LTATTs $\bs\mu_{t}^{ATT,j}$ in \eqref{LTATT} for each $j$ as the following proposition states. For $\bs x\in\mathcal{X}$, $d\in\{0,1\}$, and $j\in\mathcal{J}$, write $p^{j}_{T_0}(\bs x):=\Pr(\bs X_{T_0}=\bs x\mid D=1,Z=j)$ and $m^{j,d}_{t}(\bs x):=\E[\bs X_{t}\mid \bs X_{T_0}=\bs x,D=d,Z=j]$. 
 
\begin{proposition}\label{P-1}    Suppose that Assumptions  \ref{A-sample}, \ref{A-Markov}, \ref{A-transition-Z}, and \crossref{A-P-2} hold.  Then,  for each post-treatment period $t \geq T_0+1$ and  for all  $j\in \mathcal{J}$, we may uniquely identify $ \bs  \mu_{t}^{ATT,j}$   from $\{p_{\bs W}(\bs w;\bs\psi): \bs w\in\bs{\mathcal{W}}\}$ as
    \begin{equation}\label{eq:LTATT-identification}
\bs\mu_{t}^{ATT,j}=\sum_{\bs x\in\mathcal{X}}p^{j}_{T_0}(\bs x)\bigl(m^{j,1}_{t}(\bs x)-m^{j,0}_{t}(\bs x)\bigr).
\end{equation}
\end{proposition}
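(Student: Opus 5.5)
The argument has two parts. First, \eqref{eq:LTATT-identification} is a valid expression for $\bs\mu^{ATT,j}_t$ under the maintained assumptions. Second, each ingredient on its right-hand side is a known function of $\bs\psi$, which Proposition~\ref{P-2} identifies.

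\emph{Step 1 (decomposition).} Start from the definition \eqref{LTATT} and write $\bs\mu^{ATT,j}_t=\E[\bs X_t(1)\mid D=1,Z=j]-\E[\bs X_t(0)\mid D=1,Z=j]$.

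For the first term, use \eqref{eq:observed}: $\bs X_t=\bs X_t(1)$ when $D=1$ and $t\ge T_0+1$. Iterated expectations over $\bs X_{T_0}$ then give $\sum_{\bs x}p^j_{T_0}(\bs x)m^{j,1}_t(\bs x)$. Here Assumption~\ref{A-anticipation} ensures $\bs X_{T_0}=\bs X_{T_0}(0)=\bs X_{T_0}(1)$.

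For the second term, also condition on $\bs X_{T_0}(0)=\bs x$, which gives
\[
\sum_{\bs x}p^j_{T_0}(\bs x)\,\E[\bs X_t(0)\mid \bs X_{T_0}(0)=\bs x,D=1,Z=j].
\]
Now replace $D=1$ by $D=0$ inside the conditional expectation. To do this, first combine Assumptions~\ref{A-Markov} and~\ref{A-transition-Z}, as the text preceding the proposition notes. Markov reduces the conditioning set in \ref{A-transition-Z} to the last lag. Iterating the one-step equality over $s=T_0+1,\dots,t$, via Chapman--Kolmogorov on the product of one-step kernels, shows that the $(t-T_0)$-step transition law of $\bs X(0)$ from $T_0$ is the same for $D=0$ and $D=1$ within type $j$. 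One subtlety: \ref{A-transition-Z} equates kernels conditional on the full history, while the Markov property is stated separately for each $d$. So conditioning on $D$ and on the full history, then reducing to the last lag, works for both groups. Assumption~\ref{A-overlap-Z} guarantees that the conditioning events have positive probability in both groups whenever $p^j_{T_0}(\bs x)>0$, so the conditional expectations are well defined. Finally, since $D=0$ gives $\bs X_t=\bs X_t(0)$, the replaced expectation equals $m^{j,0}_t(\bs x)$. Subtracting the two terms yields \eqref{eq:LTATT-identification}.

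\emph{Step 2 (identification of ingredients).} By Proposition~\ref{P-2}, $\bs\psi$ is uniquely identified, with labels pinned down by Assumption~\ref{A-sample}(c). We then express each piece in terms of $\bs\psi$.

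The weight $p^j_{T_0}(\bs x)$ equals $\Pr(\bs X_{T_0}=\bs x, D=1\mid Z=j)/\Pr(D=1\mid Z=j)$. Its numerator is obtained by summing $p^j_{\bs X_1(0),D}(\cdot,1)\prod_{s=2}^{T_0}p^j_{\bs X_s(0)|\bs X_{s-1}(0)}$ over $\bs x_1^{T_0-1}$. Its denominator is $\sum_{\bs x_1}p^j_{\bs X_1(0),D}(\bs x_1,1)$, which is positive by common support.

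Similarly, $m^{j,d}_t(\bs x)=\sum_{\bs x_t}\bs x_t\,\Pr(\bs X_t=\bs x_t\mid\bs X_{T_0}=\bs x,D=d,Z=j)$. The conditional probability is a product of the identified type-specific kernels: the untreated kernels $p^j_{\bs X_s(0)|\bs X_{s-1}(0)}$ for $d=0$, and the treated kernels $p^j_{\bs X_s(1)|\bs X_{s-1}(1)}$ for $d=1$, over $s=T_0+1,\dots,t$. This uses Assumption~\ref{A-Markov} again. Hence the right-hand side of \eqref{eq:LTATT-identification} is a known map of $\bs\psi^*$ and is identified.

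\emph{Main obstacle.} I expect the main obstacle to be the bookkeeping in Step 1. It has two parts. One is justifying that the Markov kernels conditional on $(D=d,Z=j)$ coincide with the parametrized kernels $p^j_{\bs X_t(d)|\bs X_{t-1}(d)}$, which are written as conditional only on $Z$. The other is showing that the one-step transition independence propagates to multi-step transitions. I would handle the first by noting that, under \ref{A-transition-Z}, the untreated kernels do not depend on $D$. For treated outcomes after $T_0$, only $D=1$ units are observed, so the kernel is implicitly conditional on $D=1$, consistent with the mixture representation \eqref{mixture-2}. I would handle the multi-step propagation by a short induction on $t$.
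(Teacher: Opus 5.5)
Your proposal is correct and follows essentially the same route as the paper: you decompose the LTATT by iterated expectations, swap $D=1$ for $D=0$ within type $j$ using transition independence together with the Markov property, and then invoke Proposition~\ref{P-2} for the type-specific ingredients. The paper's proof differs only in bookkeeping. It conditions on the full pre-treatment path $\bs X_1^{T_0}(0)$ before applying transition independence and then the Markov reduction, and it leaves implicit two steps that you spell out: the multi-step iteration of the one-step transition-independence equality, and the expression of $p^j_{T_0}$ and $m^{j,d}_t$ as explicit functions of $\bs\psi$.
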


As a corollary, we may identify the ATT $\bs\mu_t^{ATT}$ defined in (\ref{ATT}) for each post-treatment period $t$.
\begin{corollary} Under Assumptions \ref{A-transition-Z}, \ref{A-sample}, \ref{A-Markov}, and \crossref{A-P-2}, for each post-treatment period $t \geq T_0+1$, we may uniquely identify the ATT $\bs\mu_t^{ATT}$ from $\{p_{\bs W}(\bs w;\bs\psi): \bs w\in\bs{\mathcal{W}}\}$.

\end{corollary}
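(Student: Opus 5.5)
The corollary follows by combining the decomposition \eqref{ATT} with Propositions~\ref{P-2} and~\ref{P-1}. By the law of iterated expectations over the latent type,
\[
\bs\mu_t^{ATT}=\sum_{j=1}^J \Pr(Z=j\mid D=1)\,\bs\mu_t^{ATT,j},
\]
so it suffices to identify two objects separately: the LTATTs $\bs\mu_t^{ATT,j}$ for each $j$, and the treated-group type weights $\Pr(Z=j\mid D=1)$.

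\emph{Step 1 (LTATTs).} The LTATTs are supplied directly by Proposition~\ref{P-1}. Under Assumptions~\ref{A-transition-Z}, \ref{A-sample}, \ref{A-Markov}, and~\crossref{A-P-2}, each $\bs\mu_t^{ATT,j}$ is uniquely determined by $\{p_{\bs W}(\bs w;\bs\psi)\}$ through \eqref{eq:LTATT-identification}. This uses the fact that both $p^j_{T_0}(\cdot)$ and $m^{j,d}_t(\cdot)$ are functionals of $\bs\psi$, which is identified by Proposition~\ref{P-2}.

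\emph{Step 2 (weights).} By Bayes' rule,
\[
\Pr(Z=j\mid D=1)=\frac{\pi^j\sum_{\bs x_1}p^j_{\bs X_1(0),D}(\bs x_1,1)}{\sum_{l=1}^J\pi^l\sum_{\bs x_1}p^l_{\bs X_1(0),D}(\bs x_1,1)}.
\]
Every ingredient on the right is a component of $\bs\psi$, hence identified by Proposition~\ref{P-2}. The labeling is fixed by Assumption~\ref{A-sample}(c), so the weights and the LTATTs are matched to the same type index $j$. The denominator equals $\Pr(D=1)$, which must be positive for the ATT to be defined; this positivity follows from Assumption~\ref{A-overlap-Z}, since $\Pr(D=1\mid Z=j)\ge\epsilon$.

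\emph{Step 3 (combine).} Substituting the identified weights and LTATTs into the decomposition gives $\bs\mu_t^{ATT}$ as a known map of $\bs\psi$, and therefore of $p_{\bs W}$. Uniqueness of $\bs\mu_t^{ATT}$ follows from the uniqueness of $\bs\psi$.

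There is no genuine obstacle here. The only point needing care is checking that the type labels are consistent between the weights and the LTATTs, and that $\Pr(D=1)>0$, both of which are handled in Step 2.
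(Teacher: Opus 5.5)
Your proposal is correct and matches the paper's argument: the paper gives no separate proof and treats the corollary as immediate from the decomposition \eqref{ATT}, with Proposition~\ref{P-1} supplying the LTATTs and Proposition~\ref{P-2} supplying the weights $\Pr(Z=j\mid D=1)$ as functionals of $\bs\psi$, which is exactly your Step~2. Your concern about type labels is harmless but unnecessary, since the weighted sum is invariant to any common relabeling of types.
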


The first-order Markov assumption in Assumption~\ref{A-Markov} can be straightforwardly relaxed to a higher (but finite) order Markov process, provided that $T_0$ and $T$ are sufficiently large, as the following proposition shows.
\begin{assumptionMk}[the $\ell$th-order Markov]\label{A-Markov-2} 
 For all $j=1,2,...,J$, conditional on $Z=j$ and $D=d$, $\{\bs X_{t}(d): t= 1,...,T\}$ follows a (non-stationary) $\ell$th-order Markov process, i.e., for $t=\ell+1,..., T$ and all $d\in \{0,1\}$,
$ \Pr(\bs X_{t}(d)\mid \{\bs X_s(d)\}_{s=1}^{t-1}, D=d,Z=j)=
 \Pr(\bs X_{t}(d)\mid \{\bs X_s(d)\}_{s=t-\ell}^{t-1},D=d,Z=j).$
\end{assumptionMk}

Under Assumptions \ref{A-Markov-2} and \ref{A-anticipation}, with abuse of notation, we redefine the parameter vector as
$\bs{\psi}=(\bs\pi,\bs\varphi^1,\ldots,\bs\varphi^J)\in \Theta_{\bs\psi}$.
Each component $\bs\varphi^j$ contains the initial distribution
$p^j_{\{\bs X_s(0)\}_{s=1}^\ell,D}(\cdot,\cdot)$.
Its transition-probability collections are
$\{p^j_{\bs X_t(0)|\{\bs X_s(0)\}_{s=t-\ell}^{t-1}}(\cdot|\cdot)\}_{t=\ell+1}^{T}$
and
$\{p^j_{\bs X_t(1)|\{\bs X_s(1)\}_{s=t-\ell}^{t-1}}(\cdot|\cdot)\}_{t=T_0+1}^{T}$,
where $p^j_{\{\bs X_s(0)\}_{s=1}^\ell,D}(\cdot,\cdot)$ and $p^j_{\bs X_t(d)|\{\bs X_s(d)\}_{s=t-\ell}^{t-1}}(\cdot|\cdot)$ for $d\in\{0,1\}$ are defined similarly to $p^j_{X_t(0),D}(\cdot,\cdot)$ and $p^j_{\bs X_t(d)|\bs X_{t-1}(d)}(\cdot|\cdot)$ but using $\{\bs X_s(0)\}_{s=1}^\ell$ and $\{\bs X_s(d)\}_{s=t-\ell}^{t-1}$ in place of $X_t(0)$ and $\bs X_{t-1}(d)$, respectively.

\begin{proposition}\label{P-2-k}
Suppose that Assumptions~\ref{A-sample}, \ref{A-Markov-2}, and \crossref{A-P-2-k} hold.
\revtext{If $T_0 \ge (k+1)\ell$, $T \ge 2(k+1)\ell$, and $J \le |\mathcal{X}|^{k\ell}$ for $k,\ell=1,2,\ldots$,}
then: (a) $\boldsymbol{\psi}$ is uniquely identified from 
$\{p_{\boldsymbol{W}}(\boldsymbol{w}; \boldsymbol{\psi}) : \boldsymbol{w} \in \boldsymbol{\mathcal{W}}\}$; (b) for each post-treatment period $t \geq T_0+1$ and  for all  $j\in \mathcal{J}$, we may uniquely identify $ \bs  \mu_{t}^{ATT,j}$   from $\{p_{\bs W}(\bs w;\bs\psi): \bs w\in\bs{\mathcal{W}}\}$ as 
$\bs  \mu_{t}^{ATT,j}
=\sum_{\bs x_{T_0-\ell+1}^{T_0}\in \mathcal{X}^\ell}\allowbreak
\Pr\bigl(\bs X_{T_0-\ell+1}^{T_0}=\bs x_{T_0-\ell+1}^{T_0}\mid D=1,\allowbreak Z=j\bigr)
\times \bigl\{\mathbb{E}\bigl[\bs X_{t} \mid \bs X_{T_0-\ell+1}^{T_0}=\bs x_{T_0-\ell+1}^{T_0},\allowbreak D = 1,\allowbreak Z=j\bigr]
- \mathbb{E}\bigl[\bs X_{t} \mid \bs X_{T_0-\ell+1}^{T_0}=\bs x_{T_0-\ell+1}^{T_0},\allowbreak D = 0,\allowbreak Z=j\bigr]\bigr\}$.

\end{proposition}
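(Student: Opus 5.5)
The plan is to reduce the $\ell$th-order case to the first-order case by \emph{blocking}. Group consecutive outcome vectors into non-overlapping blocks of length $\ell$: let $\tilde{\bs X}_s(d) := (\bs X_{(s-1)\ell+1}(d),\ldots,\bs X_{s\ell}(d))$ for $s=1,\ldots,\lfloor T/\ell\rfloor$. These blocks take values in $\tilde{\mathcal X}:=\mathcal X^\ell$, so $|\tilde{\mathcal X}|=|\mathcal X|^\ell$.

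Under Assumption~\ref{A-Markov-2}, conditional on $(Z=j,D=d)$, the sequence $\{\tilde{\bs X}_s(d)\}$ is a first-order (non-stationary) Markov chain. Each coordinate in block $s$ depends on at most the $\ell$ preceding outcomes, and these lie in blocks $s$ and $s-1$. The block transition probability is therefore the product of the $\ell$ original $\ell$th-order transition probabilities, each evaluated on the appropriate window.

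With $T_0\ge(k+1)\ell$ and $T\ge 2(k+1)\ell$, the blocked panel has $\tilde T_0\ge k+1$ pre-treatment blocks and $\tilde T\ge 2(k+1)$ blocks overall. Treatment begins at period $T_0+1$. If $\ell$ does not divide $T_0$, I would first discard the earliest $T_0 \bmod \ell$ periods, after marginalizing them out, so that treatment onset coincides with a block boundary. The inequalities are preserved because they are stated in multiples of $\ell$. In addition, $J\le|\mathcal X|^{k\ell}=|\tilde{\mathcal X}|^k$.

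The steps for part (a) are as follows.

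First, verify that the blocked model satisfies Assumptions~\ref{A-sample} and \ref{A-Markov} with $\tilde{\mathcal X}$ in place of $\mathcal X$. Also verify that the regularity conditions \crossref{A-P-2-k} are exactly \crossref{A-P-2} stated for the blocked chain, and define \crossref{A-P-2-k} that way. The required non-singularity conditions are rank conditions on matrices of block transition probabilities, whose rows and columns are indexed by $\tilde{\mathcal X}^k=\mathcal X^{k\ell}$.

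Second, apply Proposition~\ref{P-2} to identify the type shares $\bs\pi$, the initial block distribution jointly with $D$, and all block transition kernels for both $d=0$ and $d=1$.

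Third, and this is the main obstacle, recover the original $\ell$th-order kernels $p^j_{\bs X_t(d)\mid\{\bs X_s(d)\}_{s=t-\ell}^{t-1}}$ from the block kernels. I would marginalize the block transition $p(\tilde{\bs x}_s\mid\tilde{\bs x}_{s-1})$ over trailing coordinates to obtain the conditional law of the first $m$ coordinates of block $s$ given block $s-1$. Successive ratios of these marginals then give each one-step kernel. The conditioning window of each kernel is contained in block $s-1$ together with the earlier part of block $s$, so every window is covered.

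The first $\ell$ periods are handled by the identified initial block law. For periods that are dropped or that straddle the onset, I would rerun the argument with block boundaries shifted by $r=0,\ldots,\ell-1$. This is valid whenever the shifted panel still meets the length requirements; if it does not, I would include those shifts in \crossref{A-P-2-k}. The labeling is held consistent across shifts by Assumption~\ref{A-sample}(c), since $\bs\pi$ is invariant to the shift.

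For part (b), mimic the proof of Proposition~\ref{P-1}. Under Assumptions~\ref{A-Markov-2} and \ref{A-transition-Z}, the law of $\bs X_t(0)$ given the last $\ell$ pre-treatment outcomes and $Z=j$ does not depend on $D$. Hence $\E[\bs X_t(0)\mid\bs X_{T_0-\ell+1}^{T_0},D=1,Z=j]=\E[\bs X_t\mid\bs X_{T_0-\ell+1}^{T_0},D=0,Z=j]$. The $\ell$th-order Markov property makes conditioning on the full history equivalent to conditioning on this window.

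Next, use Assumption~\ref{A-anticipation} and \eqref{eq:observed} to replace $\bs X_t(1)$ by $\bs X_t$ on $D=1$. Then iterate expectations over $\bs X_{T_0-\ell+1}^{T_0}$ given $(D=1,Z=j)$, with Assumption~\ref{A-overlap-Z} ensuring that the control conditional is well defined on the treated support. This yields the displayed formula.

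Each term in that formula is a finite composition of objects identified in part (a). The window probability comes from the initial law and the pre-treatment kernels. The conditional means come from products of post-treatment kernels summed over intermediate states, which is Chapman--Kolmogorov applied to the $\ell$th-order chain.
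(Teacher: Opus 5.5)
Your core argument is the same as the paper's. For $T_0=(k+1)\ell$ and $T=2(k+1)\ell$, the paper also forms non-overlapping blocks $\bs X_{(m)}\in\mathcal X^{\ell}$ and notes that they follow a first-order chain with product kernels \eqref{eq:block-kernel}. It takes Assumption~\ref{A-P-2-k} to be Assumption~\ref{A-P-2} for the blocked chain and reruns the argument of Proposition~\ref{P-2}. It then recovers the $\ell$th-order kernels as ratios of partial sums of block kernels, and repeats the proof of Proposition~\ref{P-1} for part~(b).

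The step that fails is your preprocessing when $\ell\nmid T_0$.

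\begin{itemize}
\item \emph{Discarding does not preserve the length condition.} Dropping the first $r:=T_0 \bmod \ell$ periods leaves $T-r$ periods, and $T\ge 2(k+1)\ell$ does not imply $T-r\ge 2(k+1)\ell$. Take $k=1$, $\ell=2$, $T_0=5$, $T=8$, which is admissible. You are left with three complete blocks, but the first-order result needs four. So the claim that ``the inequalities are preserved'' is false.
\item \emph{Shifting does not rescue it.} Suppose $T_0=(k+1)\ell$ and $\ell\nmid T$. Any re-blocking whose last block ends at $T$ has at most $k$ complete blocks inside $\{1,\ldots,T_0\}$. That is a shortfall in length, and no rank condition added to Assumption~\ref{A-P-2-k} can repair it.
\end{itemize}

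The missing observation is that the onset does not need to fall on a block boundary. The argument behind Proposition~\ref{P-2} uses pre-treatment status only for $\xi_1$ and $\xi_2$, the first $k+1$ blocks. That is where the two arms share kernels, in \eqref{eq:treated-mixture} and in $\tau^{\delta(j)}_{\xi_1,\xi_2,D}(\cdot,\cdot,1)$. The treated-arm object $p^{\delta(j)}_{\xi_3,\xi_4|\xi_2}$ identified in the last step may mix untreated factors (periods before $T_0+1$) with treated ones, and marginalization separates them.

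So anchor the blocks at period~1, and let $\xi_4:=\bs X_{(k+2)\ell+1}^{T}$ absorb all remaining periods, including a partial final block, as in \eqref{eq:xi-2}. This is why the paper applies the \emph{proof} of Proposition~\ref{P-2} rather than its statement, which requires the same outcome space in every period. It also lets the boundary-case argument extend to all $T_0\ge(k+1)\ell$ and $T\ge 2(k+1)\ell$ without discarding or shifting.
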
 

\revtext{Under the second-order Markov assumption, for example, Proposition~\ref{P-2-k} with $\ell=2$ and $k=1$ implies that the LTATTs are identified when $J\le|\mathcal{X}|^{2}$, $T_0=4$, and $T=8$, and with $k=2$ when $J\le|\mathcal{X}|^{4}$, $T_0=6$, and $T=12$, provided that the stated regularity conditions are satisfied. Propositions~\ref{P-2} and~\ref{P-1} are the case $\ell=1$. For $k\ge 2$, Assumption~\crossref{A-P-2-k} requires in addition that every initial $\ell$-period history have positive probability in the control arm for every type, and its rank conditions are stated on $k\ell$-period histories; the panel length $T\ge 2(k+1)\ell$ grows with both indices.}

\begin{remark}[Choosing the number of latent types]\label{remark:model-selection}
Our identification results treat the number of latent types $J$ as
known; in practice it must be chosen. We recommend selecting $J$ by
the Bayesian information criterion (BIC) computed from the likelihood
in \eqref{mixture-2}. {BIC does not asymptotically underestimate
the number of mixture components \citep{leroux92as} and, under
regularity conditions, selects it consistently \citep{keribin00sankhya}.
Section~\crossref{sec:bic-postsel} of the Supplemental Appendix verifies
the penalty condition and establishes consistency directly for our
finite-support model, with mis-selection probability $o(n^{-1/2})$.}
 {Proposition~\crossref{prop:bic-postsel}(iii)} implies that
confidence intervals that ignore the selection step are asymptotically
valid. As a finite-sample check, each resample in the weighted bootstrap
recomputes the BIC over the candidate set and selects $J$; the fraction
of resamples that re-select the original-sample choice $\widehat{J}_{\mathrm{BIC}}$
measures the stability of the selection. In our applications, the BIC
re-selects the original-sample value of $J$ in $88.8\%$ of bootstrap
resamples for the Dodd--Frank Act (Section~\ref{sec:dodd-frank}), $98.6\%$
for Norway's patent reform (Section~\ref{sec:norwegian-patent-reform}), and
$100.0\%$ for the Americans with Disabilities Act of 1990 (ADA;
Section~\ref{sec:ada}), indicating that the selected value of $J$ is stable
across resamples.
\end{remark}

\begin{remark}[Diagnostic test for transition independence in pre-treatment periods]
\label{remark:testing-for-transition-independence}
\label{remark:placebo-test}
\label{remark:testing-for-transition-independence-Z}
In addition to the BIC model selection procedure in Remark~\ref{remark:model-selection}, a diagnostic test for transition independence in pre-treatment periods can be conducted,  analogous to the standard pre-trend tests in DiD designs \citep{Roth2022aer}.
Under
Assumptions~\ref{A-transition-Z} and~\ref{A-Markov}, the null becomes
type specific and conditions only on the most recent outcome:
\begin{align}
H_0\colon\,
&\Pr\left(\bs X_{t} = \bs x_{t} \mid \bs X_{t-1} = \bs x_{t-1},\,
  D=0,\, Z=j\right)\nonumber\\
&=\Pr\left(\bs X_{t} = \bs x_{t} \mid \bs X_{t-1} = \bs x_{t-1},\,
  D=1,\, Z=j\right)
\label{eq:conditional-mean-recursive-markovian}
\end{align}
for all $(\bs x_{t-1}, \bs x_{t}) \in \mathcal{X}^{2}$,
$t = 2,\ldots,T_{0}$, and $j \in \mathcal{J}$. Although $Z$ is
unobserved, both sides are identified and estimable from posterior
type probabilities
(Section~\crossref{sec:empirical-transition-probability-construction}).
The treated--control differences in these probabilities, plotted for
each $(\bs x_{t-1}, j)$ across $t$, give a diagnostic analogous to
pre-trends plots
\citep{freyaldenhoven2019pre,kahn2020promise,Rambachan2023}; uniform
confidence bands, bootstrapped jointly across periods, histories, and
types in an extension of \citet{CALLAWAY2021}, provide a formal diagnostic test for the violation of transition independence.\footnote{Reporting ATT estimates only when the diagnostic test for (\ref{eq:conditional-mean-recursive-markovian}) does not reject raises the usual pre-testing concern.
\citet{dechaisemartin2026pretest} show that, for convex symmetric pre-test statistics, naive confidence intervals are asymptotically conservative conditional on non-rejection under the null of valid specification, regardless of the asymptotic dependence between the pre-test statistic and the estimator, and that conditional coverage can exceed unconditional coverage under local violations of the null.}
Alternatively, the placebo version of
\eqref{eq:att-identification-vanilla} that sets the outcome period to
$T_0$ and shifts the conditioning set back one period equals zero
under Assumption~\ref{A-anticipation} and transition independence at
$T_0$.
\end{remark}

\section{Estimation}\label{sec:estimation}

We propose the following two-stage estimation procedure. For brevity, we assume a first-order Markov process, but the extension to higher-order Markov processes is straightforward.%

In the first stage, we estimate $\bs\psi$ by the maximum likelihood estimator

\begin{equation}\label{mle}
\hat{\bs\psi} = \arg\max_{\bs\psi\in \Theta_{\bs\psi}} \sum_{i=1}^n \log p_{\bs W}(\bs W_{i}; \bs \psi),
\end{equation}
where $p_{\bs W}(\bs w; \bs \psi)$ is the likelihood function defined in (\ref{mixture-2}), and $\{\bs W_i\}_{i=1}^n$ is a random sample of $n$ i.i.d. observations as described in Assumption \ref{A-sample}.

In the second stage, define the conditional type probabilities of $Z=j$ given $\bs W=\bs w$ as a function of $\bs\psi$ as
$\tau^j(\bs w; {\bs\psi}) := \pi^{j} p_{\bs W}^j(\bs w ;{ \bs \varphi}^{j} ) \big/ (\sum_{k=1}^J \pi^{k} p_{\bs W }^k(\bs w; {\bs \varphi}^{k}))$.
From the MLE $\hat{\bs\psi}$, we obtain the \textit{estimated type probabilities} of $Z=j$ for each observation as
\begin{equation}\label{posterior-estimate}
\hat\tau_i^j:= {\tau}^j(\bs W_i;\hat{\bs\psi})= \hat\pi^{j} p_{\bs W}^j(\bs W_i ;\hat{\bs \varphi}^{j})\big/\sum\nolimits_{k=1}^J \hat\pi^{k} p_{\bs W }^k(\bs W_i; \hat{\bs \varphi}^{k}).
\end{equation}
Note that the posterior type probability $\hat\tau_i^j$ conditions on the full observation vector $\bs W_i$, which includes both pre- and post-treatment outcomes. This is valid because $\bs W_i$ is an observed (realized) quantity for each unit: the posterior simply uses all available data to classify units into latent types, and no counterfactual quantities enter the computation.

Then, we consistently estimate the LTATTs~\eqref{LTATT} by the sample analogue estimator of (\ref{eq:LTATT-identification}) using $\{ \hat{\tau}^j_i\}_{i=1}^n$ as weights: for $t\geq T_0+1$,
\begin{equation}\label{eq:LTATT-estimator}
\hat{\bs\mu}_{t}^{ATT,j}=\sum_{\bs x\in\mathcal{X}}\hat p^{j}_{T_0}(\bs x)\bigl(\hat m^{j,1}_{t}(\bs x)-\hat m^{j,0}_{t}(\bs x)\bigr),
\end{equation}
where
\begin{equation}\label{eq:P-X0}
\hat p^{j}_{T_0}(\bs x)=\sum\nolimits_{i=1}^n \mathbf{1}\{\bs X_{iT_0} = \bs x, D_i=1\}\hat\tau_i^j\big/\sum\nolimits_{i=1}^n \mathbf{1}\{D_i=1\}\hat\tau_i^j
\end{equation}
and the estimated conditional expectation is
\begin{equation}\label{eq:E-Xt}
\hat m^{j,d}_{t}(\bs x)=\sum\nolimits_{i=1}^n \bs X_{it} \mathbf{1}\{\bs X_{iT_0}= \bs x,D_i=d\}\hat\tau_i^j\big/\sum\nolimits_{i=1}^n \mathbf{1}\{ \bs X_{iT_0}= \bs x,D_i=d\}\hat\tau_i^j.
\end{equation}

We also propose the following consistent estimator for the ATT as a weighted average of the estimated LTATTs~\eqref{eq:LTATT-estimator} across latent types:
\begin{align}\label{eq:ATT-estimator}
\hat{\bs\mu}_t^{ATT}=\sum_{j=1}^J \widehat{\Pr}(Z=j\mid D=1)\, \hat{\bs\mu}_t^{ATT,j},
\end{align} 
where $\widehat{\Pr}(Z=j\mid D=1)={\sum_{i=1}^n \mathbf{1}\{D_i=1\}\hat\tau_i^j}/{\sum_{i=1}^n \mathbf{1}\{D_i=1\}}$.

Let
$\bs I(\bs\psi^*)$ be  the Fisher information matrix for the MLE in (\ref{mle}).  We assume the following regularity conditions for the MLE $\hat{\bs\psi}$.
\begin{assumptionMLE}\label{A-MLE}  (a) $\Theta_{\bs\psi}$ is compact. All PMF entries in $\bs\varphi^j$ and mixture weights $\pi^j$ are bounded away from $0$ and $1$ by a common constant $\epsilon>0$.
(b) $\bs\psi^*$ lies in the interior of $\Theta_{\bs\psi}$. (c) the Fisher information matrix 
$\bs I(\bs\psi^*)$ is nonsingular. 
\end{assumptionMLE}

Let 
${\boldsymbol\theta}_t := \big(\mathrm{vec}({\bs\mu}_{t}^{ATT,1})^\top,\ldots,
\mathrm{vec}({\bs\mu}_{t}^{ATT,J})^\top,\ \mathrm{vec}({\bs\mu}_{t}^{ATT})^\top \big)^\top$ 
and 
${\boldsymbol\theta} := \big( {\boldsymbol\theta}_{T_0+1}^\top, \ldots, {\boldsymbol\theta}_{T}^\top \big)^\top$.  
Denote the true value of ${\boldsymbol\theta}$ by $\bs\theta^*$, and its two-step estimator, defined above, by $\hat{\bs\theta}$.
 
\begin{proposition}[Consistency and asymptotic normality of LTATT and ATT estimators]
\label{P-estimation}
Suppose that Assumptions \ref{A-sample}, \ref{A-Markov},  \crossref{A-P-2}, and \ref{A-MLE} hold.
Then: (a)  $\hat{\boldsymbol\theta} \xrightarrow{p} \boldsymbol\theta^*$. (b) $\sqrt{n}\,(\hat{\boldsymbol\theta}-\boldsymbol\theta^*)
    \Rightarrow \mathcal N(0,\bs V)$,
    where $\bs V = \E[\bs\Psi(\bs W;\bs\psi^*)\bs\Psi(\bs W;\bs\psi^*)^\top]$ is the variance of the combined influence function $\bs\Psi$, accounting for both the second-stage sampling variability and the first-stage estimation error propagated through the score. 
    Explicit expressions are given in the proof.
\end{proposition}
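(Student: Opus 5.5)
The plan is to treat $\hat{\bs\theta}$ as a smooth function of two kinds of estimators: the first-stage MLE $\hat{\bs\psi}$, and sample averages of known functions of $\bs W_i$ that are weighted by $\tau^j(\bs W_i;\hat{\bs\psi})$. Consistency will then follow from the continuous mapping theorem, and asymptotic normality from a two-step (Newey--McFadden type) delta-method argument.

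\textbf{Step 1: first-stage MLE.} Under Assumptions~\ref{A-sample}, \ref{A-Markov}, and \crossref{A-P-2}, Proposition~\ref{P-2} gives that $\bs\psi^*$ is the unique maximizer of $\E[\log p_{\bs W}(\bs W;\bs\psi)]$. This uses Kullback--Leibler uniqueness together with the labeling convention in Assumption~\ref{A-sample}(c).

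Assumption~\ref{A-MLE}(a) makes every component PMF, and hence $p_{\bs W}$, bounded away from zero. The support $\mathcal W$ is finite and $\Theta_{\bs\psi}$ is compact, so $\log p_{\bs W}(\bs w;\bs\psi)$ is continuous in $\bs\psi$, uniformly bounded, and has bounded derivatives of all orders. A uniform law of large numbers therefore holds, and the standard argmax consistency theorem gives $\hat{\bs\psi}\to_p\bs\psi^*$.

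Interiority (Assumption~\ref{A-MLE}(b)) and nonsingular information (Assumption~\ref{A-MLE}(c)) then yield the linear representation
\[
\sqrt n(\hat{\bs\psi}-\bs\psi^*)=\bs I(\bs\psi^*)^{-1}\frac{1}{\sqrt n}\sum_i \bs s(\bs W_i;\bs\psi^*)+o_p(1),
\]
where $\bs s$ denotes the score.

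\textbf{Step 2: population identities for the weighted averages.} Each ingredient of \eqref{eq:LTATT-estimator}–\eqref{eq:ATT-estimator} is a ratio $A_n(\hat{\bs\psi})/B_n(\hat{\bs\psi})$, where
\[
A_n(\bs\psi)=n^{-1}\sum_i g(\bs W_i)\tau^j(\bs W_i;\bs\psi)
\]
for indicator-type functions $g$. Since $\E[g(\bs W)\tau^j(\bs W;\bs\psi^*)]=\E[g(\bs W)\mathbf 1\{Z=j\}]$, the probability limits are the right objects. For example,
\[
\frac{\E[\bs X_t\mathbf 1\{\bs X_{T_0}=\bs x,D=d\}\tau^j]}{\E[\mathbf 1\{\bs X_{T_0}=\bs x,D=d\}\tau^j]}=m^{j,d}_t(\bs x),
\]
and the analogous identities hold for $p^j_{T_0}(\bs x)$ and $\Pr(Z=j\mid D=1)$.

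The denominators are bounded away from zero by Assumption~\ref{A-MLE}(a) together with the overlap condition in Assumption~\ref{A-overlap-Z}. Plugging these limits into \eqref{eq:LTATT-identification} and \eqref{ATT} recovers $\bs\theta^*$ through Proposition~\ref{P-1}.

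\textbf{Step 3: consistency.} Write $A_n(\hat{\bs\psi})-\E[g\tau^j(\cdot;\bs\psi^*)]$ as the sum of two pieces:
\begin{itemize}
\item $A_n(\hat{\bs\psi})-A_n(\bs\psi^*)$, which is bounded by $\sup_{\bs w}\|\partial_{\bs\psi}\tau^j\|\,\|\hat{\bs\psi}-\bs\psi^*\|=o_p(1)$, because $\mathcal W$ is finite and $\tau^j$ is smooth;
\item $A_n(\bs\psi^*)-\E[\cdot]$, which is $o_p(1)$ by the law of large numbers.
\end{itemize}
The map from these finitely many averages to $\bs\theta$ is continuous at the limit, so part~(a) follows by the continuous mapping theorem.

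\textbf{Step 4: asymptotic normality.} Take a mean-value expansion
\[
\sqrt n\bigl(A_n(\hat{\bs\psi})-\E A\bigr)=\frac{1}{\sqrt n}\sum_i\bigl(g\tau^j(\bs W_i;\bs\psi^*)-\E A\bigr)+\bs G_A\sqrt n(\hat{\bs\psi}-\bs\psi^*)+o_p(1),
\]
where $\bs G_A=\E[g(\bs W)\partial_{\bs\psi}\tau^j(\bs W;\bs\psi^*)^\top]$. Uniform convergence of the derivative again follows from the finite support.

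Substituting the influence representation of Step 1 and applying the delta method to the ratios and to the final smooth map $h$ (sums of products of these ratios), we get
\[
\bs\Psi(\bs w;\bs\psi^*)=\nabla h^\top\bigl[\bs a(\bs w)-\E\bs a+\bs G\,\bs I(\bs\psi^*)^{-1}\bs s(\bs w;\bs\psi^*)\bigr],
\]
where $\bs a$ stacks the $g\tau^j$ terms. Part~(b) then follows from the multivariate central limit theorem with $\bs V=\E[\bs\Psi\bs\Psi^\top]$.

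\textbf{Main obstacle.} The delicate points are keeping track of the first-stage correction term $\bs G\bs I^{-1}\bs s$ jointly across all $(j,t,\bs x,d)$, and verifying that every denominator in $h$ is strictly positive at $\bs\psi^*$. The finite support of $\bs W$ makes all the empirical-process steps elementary, so the main remaining work is this bookkeeping.
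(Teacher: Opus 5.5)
Your proposal is correct and follows essentially the same route as the paper. Both arguments use the same four ingredients: MLE consistency and asymptotic linearity via the standard extremum-estimator theorems, with identification supplied by Proposition~\ref{P-2}; uniform convergence of the posterior weights over the finite support; the identity $\E[g(\bs W)\tau^j(\bs W;\bs\psi^*)]=\E[g(\bs W)\mathbf 1\{Z=j\}]$; and a mean-value expansion in $\bs\psi$ that adds the first-stage correction (gradient times $\bs I(\bs\psi^*)^{-1}$ times score) to the second-stage influence function, followed by a final delta method. The only cosmetic difference is in the linearization: you linearize numerator and denominator averages separately and then apply the ratio delta method, whereas the paper expands the ratios $\hat{\bs m}_n(\bs\psi)$ directly, split into a fixed-$\bs\psi^*$ term and a $\bs\psi$-perturbation term; the resulting influence functions coincide.
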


To obtain a consistent estimator $\hat{\boldsymbol{V}}$, we employ a nonparametric weighted bootstrap that yields consistent standard errors for the estimated LTATTs and ATTs while accounting for the dependence structure induced by the two-stage estimation procedure.
Further details on the EM algorithm and the weighted bootstrap are provided in Section~\crossref{sec:estimation-procedure-details} of the Supplemental Appendix.

\section{Applications}\label{sec:empirical-applications}

We illustrate the proposed methodology through three empirical applications, each highlighting a distinct limitation of conventional DiD with discrete outcomes. In \cite{charoenwong2019does}'s study of the Dodd-Frank Act, DiD counterfactuals fall below zero, producing out-of-bounds predictions that our transition-based approach avoids by construction. In \cite{hvide2018university}'s analysis of a Norwegian patent reform, large baseline differences in patenting rates between treated and control groups induce mean-reversion bias, leading DiD to overstate the reform's negative effect. In the ADA application, DiD fails to detect statistically significant employment effects because pre-treatment level differences mask the underlying transition dynamics; our method reveals significant negative effects operating through specific labor-force exit channels.

\subsection{Application to \cite{charoenwong2019does}}\label{sec:dodd-frank}

\cite{charoenwong2019does} investigate how regulatory jurisdiction affects the quality of investment advisor regulation by exploiting a unique policy shift under the Dodd-Frank Act implemented in 2012, which transferred oversight of midsize registered investment advisers (RIAs) from the SEC to state regulators. This 2012 reform left fiduciary standards unchanged but allowed the SEC to focus on other areas. \cite{charoenwong2019does} compare midsize RIAs transitioned to state oversight (treated) with the remaining RIAs under SEC oversight (control) and document an increase in client complaints among midsize advisers under state oversight, indicating a decrease in service quality following the reform.

We revisit \cite{charoenwong2019does} by comparing a DiD estimator and our proposed estimators. The main outcome of interest is the annual rate of client complaints, i.e., whether an RIA received a customer complaint, an indicator of declining service quality.\footnote{\revtext{We note that the regression specification in \cite{charoenwong2019does} contains RIA, year, and state--year fixed
effects together with a third-order polynomial in the number of investment-adviser
representatives employed by the RIA (Table 2, panel A).} This feature makes their reported estimates numerically different from DiD estimates reported here. For consistency with the other empirical examples, we implement a canonical DiD estimator with RIA-level fixed effects and time fixed effects only. The qualitative conclusions remain unchanged: the difference-in-differences estimates still suggest an increase in complaint rates following the reform.}

\begin{figure}[t]
    \centering
    \caption{ATT Estimates of the Dodd-Frank Act on Complaint Rates.}\label{fig:complaint-att-all}
    \includegraphics[width=0.75\textwidth]{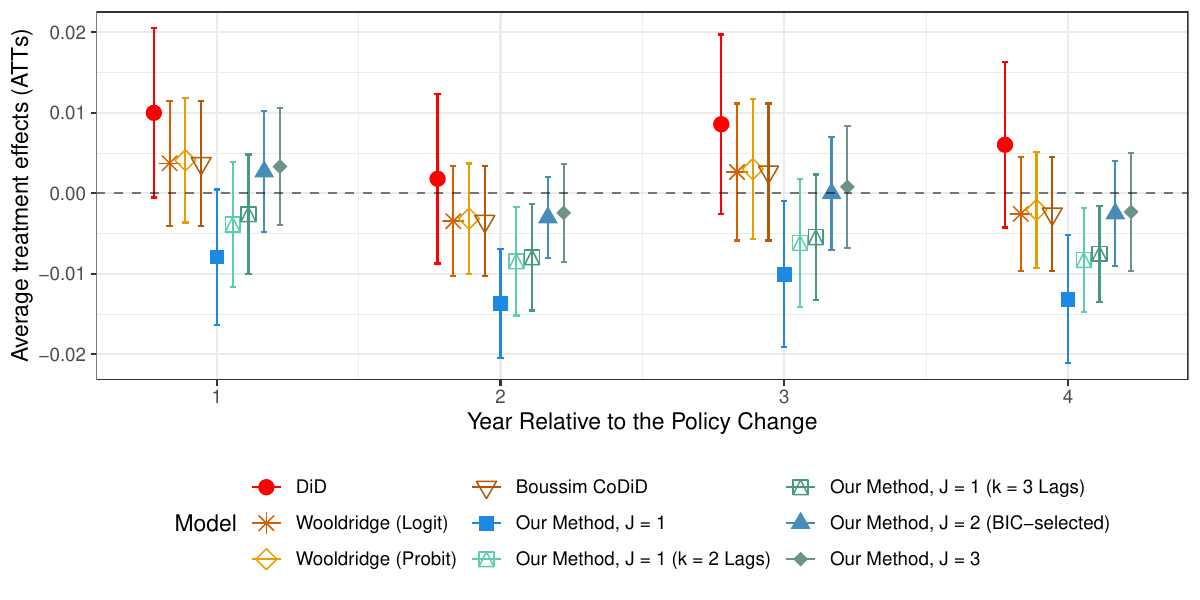}
    \fignote{ATT estimates of the 2012 Dodd-Frank reform on complaint rates. Estimators and markers are as in \autoref{fig:ada-att-simple}; transition-based specifications span $J \in \{1, 2, 3\}$ (BIC selects $J = 2$) and $k \in \{2, 3\}$. Vertical bars indicate 95\% pointwise bootstrap confidence intervals.}
\end{figure}

The counterfactual untreated average outcomes implied by the parallel trends assumption fall below zero as illustrated in \autoref{fig:complaint-counterfactual-intro}, indicating an out-of-bounds issue common in linear probability models. In contrast, our transition-based approach naturally restricts counterfactual untreated potential outcomes within the feasible range of probabilities by construction. This difference in the counterfactuals gives the ATT estimates opposite signs across methods: we find a slight decrease in complaint rates following the reform using our transition-based approach. \autoref{fig:complaint-att-all} summarizes the ATT estimates from both approaches across alternative model specifications, all indicating either improvements or null impacts on service quality, in contrast to the findings implied by DiD.

\subsection{Application to \cite{hvide2018university}}\label{sec:norwegian-patent-reform}

\cite{hvide2018university} examine the impact of the 2003 Norwegian reform that transferred one-third of patent rights from researchers, who previously held full ownership, to universities. \cite{hvide2018university} use a DiD design for the period 1995-2010 comparing inventors who were university researchers employed from 2000-2002 (treated) and non-university inventors (control). Their main TWFE specification (Equation 1 in \citealp{hvide2018university}) uses a dummy variable for annual patent applications as the outcome. They report a statistically significant estimate of -0.045 (Table 9 in \citealp{hvide2018university}), corresponding to a 4.5 percentage point decline in patenting probability for university researchers. We replicate their analysis using the same data to implement both the standard DiD design and our proposed method.

\begin{figure}[t]
    \centering
    \caption{ATT Estimates of the Norwegian Law Reform on Annual Patenting Rates.}\label{fig:patenting-att-all}
    \includegraphics[width=0.75\textwidth]{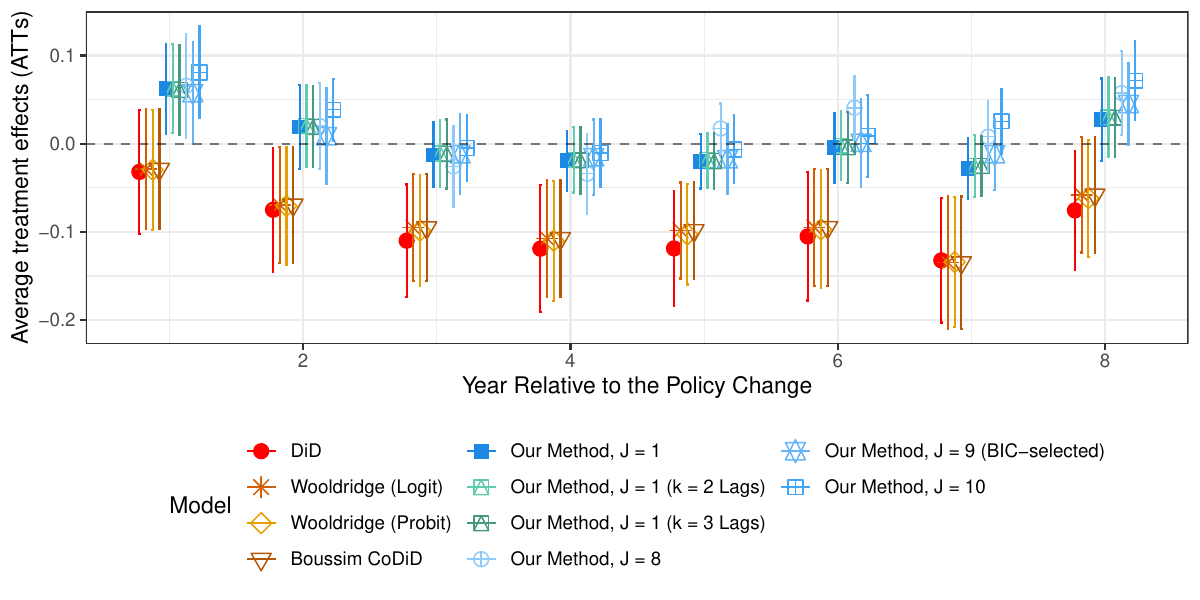}
    \fignote{ATT estimates of the Norwegian patent reform on annual patenting rates \citep{hvide2018university}. Estimators and markers are as in \autoref{fig:ada-att-simple}; transition-based specifications span $J \in \{1, 8, 9, 10\}$ (BIC selects $J = 9$) and $k \in \{2, 3\}$. Vertical bars indicate 95\% pointwise bootstrap confidence intervals.}
\end{figure}

\autoref{fig:patenting-att-all} compares ATT estimates across methods
and post-treatment periods. Both DiD and nonlinear DiD estimates reproduce the sizable
negative effect, whereas our transition-based estimates are close to
zero and, if anything, slightly positive.  For robustness, we condition on additional lagged outcomes ($k = 2$ and $k = 3$) and introduce latent types spanning the BIC selection ($J = 8$, $J = 9$, and $J = 10$) to account for potential violations of the transition independence assumption. Across these specifications the aggregate ATT (the average impact on patenting rates over post-reform periods) is positive throughout, including the model with BIC-selected $J = 9$. The estimates are far from---and, if anything, opposite in sign to---the sizable negative effect implied by DiD, reinforcing that the reform did not reduce patenting.

The large discrepancy between the DiD and our transition-based estimates is likely attributable to substantial pre-treatment differences in patenting rates between university inventors (treated) and non-university inventors (control), as illustrated in \autoref{fig:norway-trend}. In fact, in the year immediately preceding the reform, university inventors patented at nearly twice the rate of their non-university counterparts. As discussed in \autoref{sec:comparison-with-did}, this baseline disparity most likely biases DiD estimates: due to higher baseline rates, mean reversion would lead treated units to experience a steeper decline than controls in the absence of the reform, thereby violating the parallel trends assumption.

\subsection{Employment Effects of the Americans with Disabilities Act of 1990}\label{sec:ada}

\begin{figure}[!t]
    \centering
    \caption{Employment Rates Around the ADA of 1990.}\label{fig:ada-employment-trends}
    \includegraphics[width=0.75\textwidth]{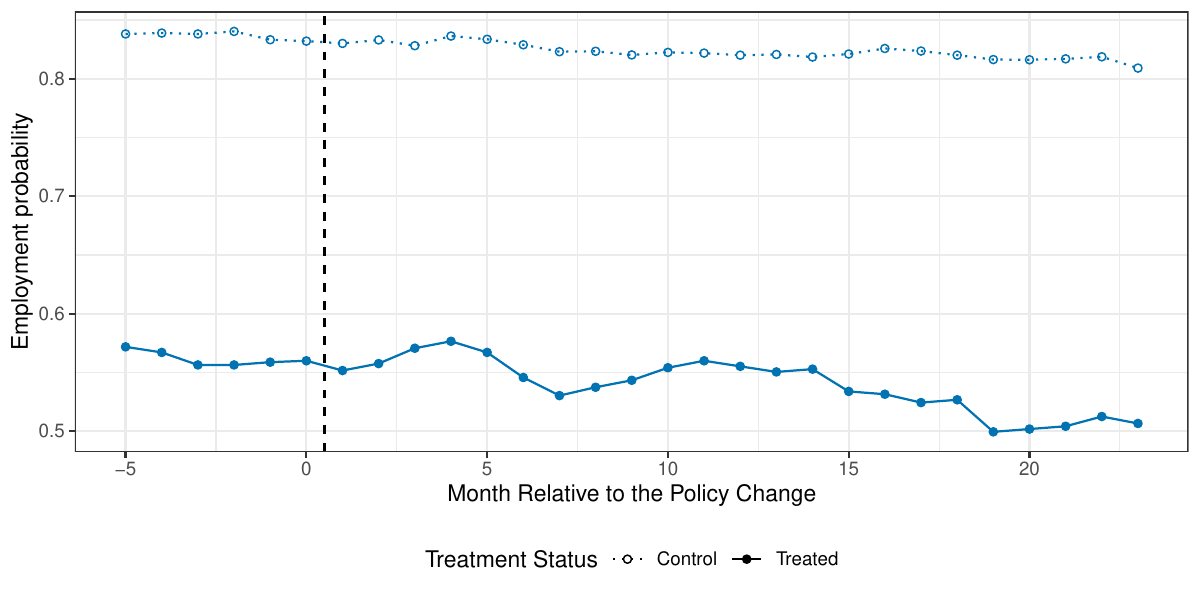}
    \fignote{Average monthly employment rates for individuals with disabilities (treated) and without disabilities (control) over January 1990 to May 1992. Period zero is June 1990, the last month before the ADA's July 1990 enactment, marked by the dashed line.}
\end{figure}

\begin{figure}[!t]
    \centering
    \caption{Difference in Labor Force Transition Probabilities Before the ADA.}\label{fig:ada-employment-transitions}
    \includegraphics[width=0.75\textwidth]{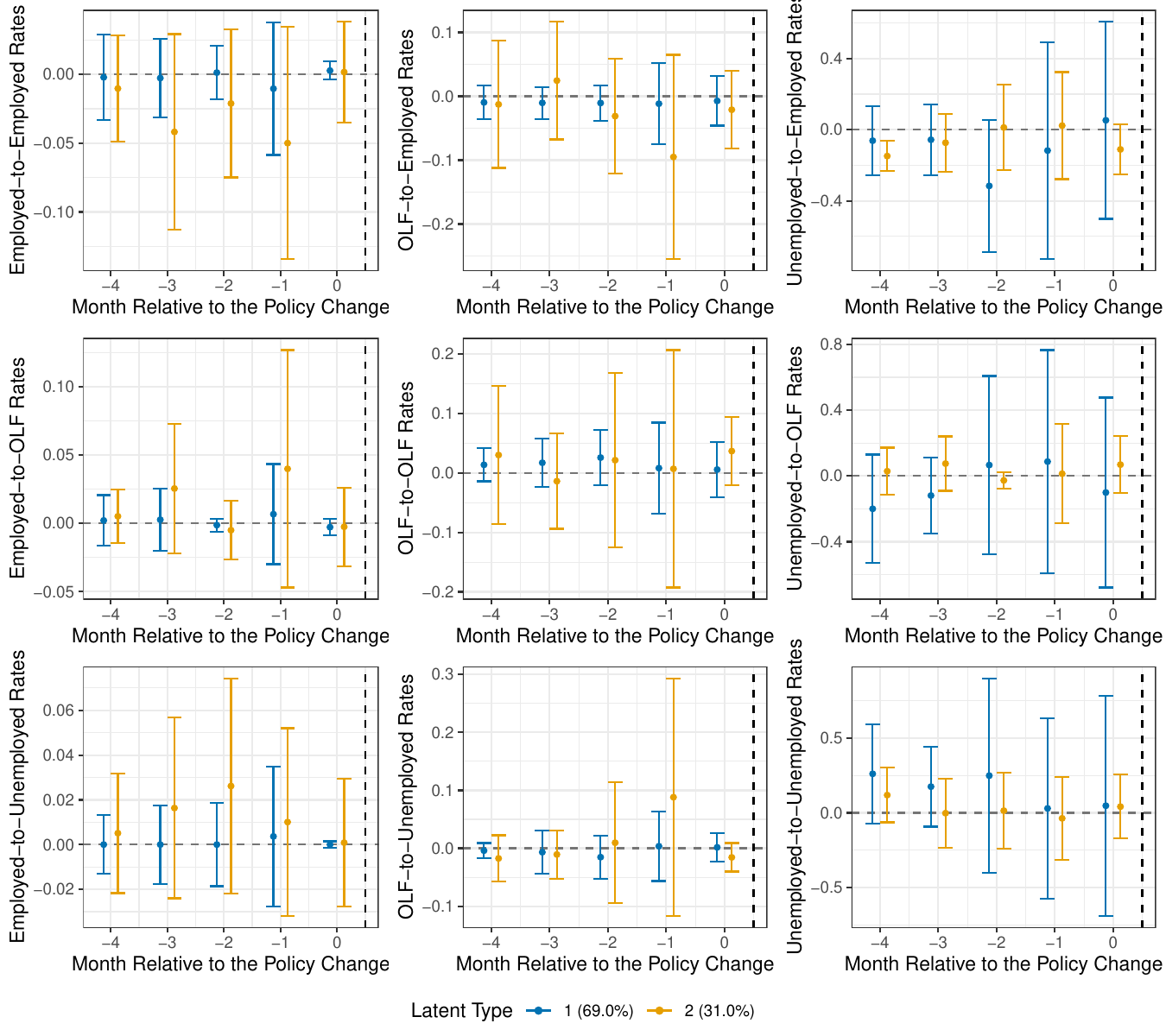}
    \fignote{Treated-minus-control differences in monthly labor force transition probabilities before the ADA, one panel per transition among employed, unemployed, and out-of-labor-force (OLF) states. The figure reports estimates from the BIC-selected model with $J = 2$ latent types. Period zero is the last pre-treatment period. Vertical bars are 95\% bootstrap uniform confidence intervals across periods within each transition pair.}
\end{figure}

\begin{figure}[!t]
    \centering
    \caption{ATT Estimates of the ADA on Employment.}\label{fig:ada-att-all}
    \includegraphics[width=0.75\textwidth]{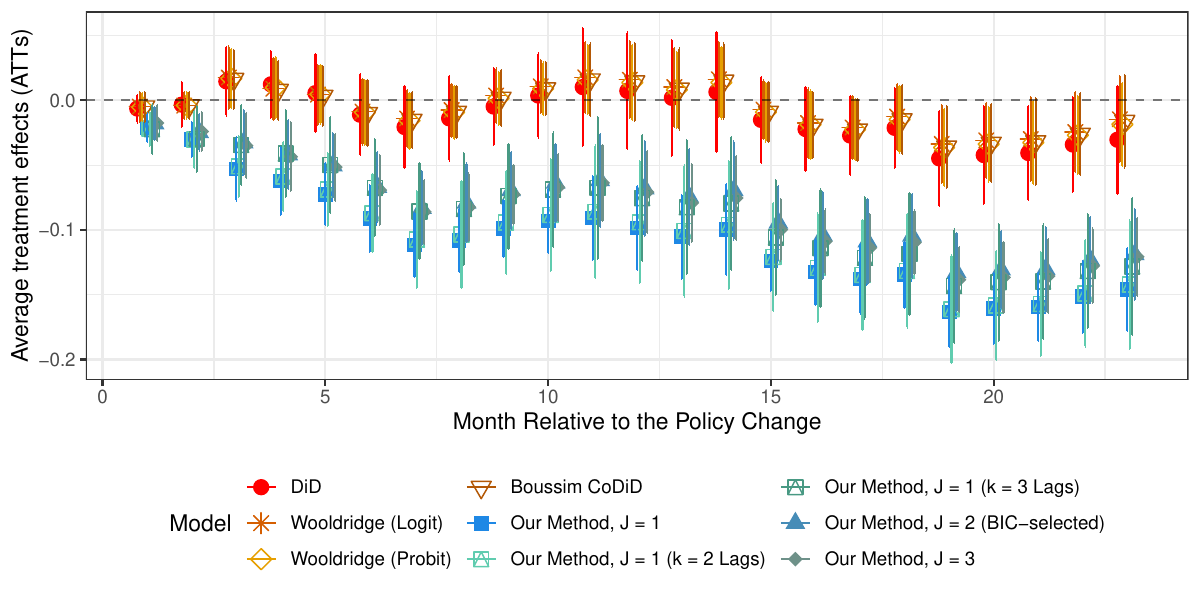}
    \fignote{ATT estimates of the ADA on employment. Estimators and markers are as in \autoref{fig:ada-att-simple}; transition-based specifications span $J \in \{1, 2, 3\}$ (BIC selects $J = 2$) and $k \in \{2, 3\}$. Vertical bars indicate 95\% pointwise bootstrap confidence intervals.}
\end{figure}

In this example, we examine the effects of the Americans with Disabilities Act (ADA) signed into law in 1990 on employment comparing  working-age individuals with work-related disability (treated) and individuals without work-related disability (control). The ADA was enacted to prohibit discrimination against individuals with disabilities in employment, public services, and accommodations. Title I of the ADA specifically mandated that employers with 15 or more employees could not discriminate against qualified individuals with disabilities in hiring, advancement, or discharge decisions, and required that reasonable accommodations be provided in the workplace.

We follow previous event-study analyses of the ADA's employment effects \citep{acemoglu2001consequences,lise2023revisiting}.  The outcome is
monthly labor force status, which takes one of three mutually
exclusive values: employed, unemployed, and out of the labor force
(OLF).  We use data from Rotation Group 1 of the 1990 Survey of Income and Program Participation (SIPP) panel \citep{SIPP1990}, which provides monthly observations from January 1990 through May 1992. Taking the ADA signing in July 1990 as the treatment date yields six pre-treatment and up to 22 post-treatment monthly observations per individual, thereby capturing the full labor force transition dynamics before and after the ADA. Since the ADA was implemented nationwide without explicit state-level variation, we compare individuals with and without disabilities, as in difference-in-differences strategies used in prior research \citep{acemoglu2001consequences,lise2023revisiting}. Following \citet{lise2023revisiting}, we
restrict the sample to adults aged 21--58 and classify individuals
reporting a work-limiting disability as disabled. We refer to the disabled
as the treated group and to the non-disabled as the control group.

We first note that the treated and control groups had different baseline employment rates as well as diverging trends even before the introduction of the ADA, as illustrated in \autoref{fig:ada-employment-trends}. For example, in the pre-treatment period, the employment rate among individuals with disabilities was 27.5 percentage points lower than that of individuals without disabilities in our sample. This pre-existing disparity raises concerns about the validity of the parallel trends assumption due to potential mean reversion characterized in Proposition~\ref{proposition:dbt-vs-did}. Beyond employment status (employed versus the others), similar level differences are present across all possible labor force states, as shown in Figure~\crossref{fig:ada-labor-force-status-trends} of the Supplemental Appendix.

In contrast, treated and control groups exhibit similar pre-treatment labor force transition probabilities under the BIC-selected model, supporting the plausibility of transition independence (\autoref{fig:ada-employment-transitions}). In addition to the BIC selection procedure, we provide a diagnostic test to assess transition independence before the ADA, by computing treated--control differences in transition probabilities for all nine possible transition pairs and reporting bootstrap uniform confidence intervals as suggested in Remark \ref{remark:testing-for-transition-independence-Z}. Across all specifications ($J=1,2,3$), the estimated pre-treatment differences in transition probabilities are close to zero and are negligible. Specifications with alternative values of $J$ yield the same pattern (Figures~\crossref{fig:ada-employment-transitions-J2} and \crossref{fig:ada-employment-transitions-J3} in the Supplemental Appendix).

Our transition-based estimates indicate a statistically significant negative impact on employment across all specifications ($J=1$, $J=2$, and $J=3$), whereas the standard and nonlinear DiD models, likely biased due to differences in pre-treatment employment levels, do not yield statistically significant negative estimates. The ATT estimates across different methods and specifications are provided in \autoref{fig:ada-att-all}. The aggregate ATT is significantly negative for the BIC-selected $J=2$, as well as for models with different $J$, with bootstrap confidence bands excluding zero in each case. Overall, the results indicate that the ADA could have resulted in an unintended short-run negative impact on employment for individuals with disabilities.

\begin{revblock}
A distinctive advantage of the transition-based framework is its ability to decompose treatment effects into specific transition channels via the flow decomposition in \eqref{eq:flow-decomposition}. \autoref{fig:ada-decomposition-intro} applies this decomposition using the $J=2$ model to identify how the employment declines arise over periods, namely, which cumulative inflows and
outflows account for the employment decline.  The dominant channel is
increased cumulative transitions from employment into OLF among
individuals with disabilities after the ADA; reduced inflows from OLF
into employment play a secondary role, and flows through unemployment
contribute little. The ADA's adverse employment effect thus operates
through the participation margin, a mechanism invisible to standard
DiD, which estimates only the net change in outcome levels.
\end{revblock}

\FloatBarrier
 
\appendix
\section{Proofs}

\subsection{Proof of Proposition \ref{proposition:dbt-vs-did}}
By definition,
$\bs\mu^{\text{DiD}}_t=\E[\bs X_t-\bs X_{T_0}\mid D=1]-\E[\bs X_t-\bs X_{T_0}\mid D=0]$\allowbreak{} and
$\bs\mu^{\text{ATT}}_t=\E[\bs X_t(1)-\bs X_t(0)\mid D=1]$.
By Assumption~\ref{A-anticipation},
$\bs X_t=\bs X_t(0)$, $t\geq T_0+1$, for $D=0$ and $\bs X_{T_0}=\bs X_{T_0}(0)$ for all units, so
\begin{equation}\label{eq:diff}
\bs\mu^{\text{DiD}}_t-\bs\mu^{\text{ATT}}_t
= \E[\bs X_t(0)-\bs X_{T_0}(0)\mid D=1]
 -\E[\bs X_t(0)-\bs X_{T_0}(0)\mid D=0]. %
\end{equation} 
Applying the Law of Iterated Expectations to the first term,
$\E[\bs X_t(0)\mid D=1]=\E\!\left[\E[\bs X_t(0)\mid \bs X_1^{T_0},D=1]\mid D=1\right]$.
By  Assumption~\ref{A-transition},
\begin{equation}\label{eq:diff-2}
\E[\bs X_t(0)\mid \bs X_1^{T_0},D=1]
=\E[\bs X_t(0)\mid \bs X_1^{T_0},D=0]
=\E[\bs X_t\mid \bs X_1^{T_0},D=0].
\end{equation}
Using (\ref{eq:diff-2}) and the same argument for $D=0$, equation (\ref{eq:diff}) becomes
$\bs\mu^{\text{DiD}}_t-\bs\mu^{\text{ATT}}_t
= \E[\E[\bs X_t\mid \bs X_1^{T_0},D=0]-\bs X_{T_0}\mid D=1]\allowbreak
 -\E[\E[\bs X_t\mid \bs X_1^{T_0},D=0]-\bs X_{T_0}\mid D=0]$.
Under Assumption~\ref{A-overlap}, expand the above equation over the support $\mathcal{X}^{T_0}$:
\begin{multline*}
\bs\mu^{\text{DiD}}_t-\bs\mu^{\text{ATT}}_t
= \sum_{\bs x_1^{T_0}\in \mathcal{X}^{T_0}}
\underbrace{\E[\bs X_t-\bs x_{T_0}\mid \bs X_1^{T_0}=\bs x_1^{T_0},D=0]}_{\text{control mean trend}}\\
\times\left\{
\Pr\left(\bs X_1^{T_0}=\bs x_1^{T_0}\mid D=1\right)
-\Pr\left(\bs X_1^{T_0}=\bs x_1^{T_0}\mid D=0\right)
\right\}.
\end{multline*}
This proves the stated result. $\qedsymbol$

\subsection{Proof of Proposition \crossref{P-2}}
For brevity, we prove only the ``boundary'' case $T_0=k+1$, $T=2(k+1)$,
and $J=|\mathcal{X}|^{k}$; the proof extends to smaller $J$ and to
larger $T_0$ and $T$.
Partition $(\bs X_1,...,\bs X_T)$ as $(\bs X_1,...,\bs X_T) = (\xi_1,\xi_2,\xi_3,\xi_4)$, where 
\begin{align}
&\xi_1:=\bs X_1^k,\  \xi_2:= \bs X_{k+1},\  \xi_3=\bs X_{k+2}, \text{  and  }  \xi_4:=\bs X_{k+3}^{T}.\label{eq:xi}
\end{align} 
Note that, when $k=1$, we have $\xi_t=\bs X_t$ for $t=1,...,4$. For $d\in\{0,1\}$, define the corresponding potential outcomes $\xi_1(d):=\bs X_1^k (d)$, $\xi_2(d):= \bs X_{k+1}(d)$, $\xi_3(d)=\bs X_{k+2}(d)$, and $\xi_4(d):=\bs X_{k+3}^{T}(d)$. 
Because $\bs X_t=\bs X_t(0)$ for $t=1,2,...,T$ for all control units with $D=0$, we have $\xi_j = \xi_j(0)$ for $j=1,2,3,4$ for all units with $D=0$. Then, the probability mass function of $(\xi_2,\xi_3,\xi_4)$ given $(\xi_1,D)=(\xi_1,0)$ is written as
  \begin{align}\label{mixture-3}
&p_{\xi_2,\xi_3,\xi_4|\xi_1,D}(\xi_2,\xi_3,\xi_4|\xi_1,0)%
 = \sum_{j=1}^J\lambda^j_1(\xi_2 |\xi_1)\lambda^j_2(\xi_3|\xi_2)\lambda^j_{3}(\xi_4|\xi_3),
 \end{align} 
where   $\tau^j_{\xi_1(0),D}(\xi_1,d):=  \frac{ \pi^j p^j_{\xi_1,D}(\xi_1,d) }{ \sum_{k=1}^J \pi^k p^k_{\xi_1,D}(\xi_1,d) }, \quad \lambda^j_1(\xi_2 |\xi_1):=\tau_{\xi_1(0),D}^j(\xi_1,0) p^j_{\xi_2(0)|\xi_1(0)}(\xi_2|\xi_1)$, $ \lambda^j_2(\xi_3|\xi_2):=p^j_{\xi_3(0)|\xi_2(0)}(\xi_3|\xi_2)$, and $\lambda^j_{3}(\xi_4|\xi_3):=p^j_{\xi_4(0)|\xi_3(0)}(\xi_4|\xi_3)$.
 Evaluating (\ref{mixture-3})  at $(\xi_1,\xi_2,\xi_3,\xi_4)=( a ,\xi_2,\xi_3, b )$ and  $p_{\xi_2,\xi_3|\xi_1,D}(\xi_2,\xi_3|\xi_1,0)= $ \\ $\sum_{j=1}^J  \tau_{\xi_1(0),D}^j(\xi_1,0) p^j_{\xi_2|\xi_1}(\xi_2|\xi_1)  p^j_{\xi_3(0)|\xi_2(0)}(\xi_3|\xi_2)$ at $(\xi_1,\xi_2,\xi_3)=( a ,\xi_2,\xi_3)$ gives
\begin{align}
p_{\xi_2,\xi_3,\xi_4|\xi_1,D}(\xi_2,\xi_3,b|a,0)&= \sum_{j=1}^J \lambda^j_1(\xi_2|a) \lambda^j_2(\xi_3|\xi_2)\lambda^j_3(b|\xi_3) \quad \text{and}\label{p-w}\\
p_{\xi_2,\xi_3|\xi_1,D}(\xi_2,\xi_3|a,0) & =\sum_{j=1}^J  \lambda^j_1(\xi_2|a) \lambda^j_2(\xi_3|\xi_2). \label{p-w-2}
\end{align}
Denote $q_{\xi_2,\xi_3}( a , b ) :=p_{\xi_2,\xi_3,\xi_4|\xi_1,D}(\xi_2,\xi_3,b|a,0)$ and $\bar q_{\xi_2, \xi_3}( a) :=p_{\xi_2,\xi_3|\xi_1,D}(\xi_2,\xi_3|a,0)$.

 Evaluating (\ref{p-w}) at $ a = a _1,...,a_J$ and $ b =  b _1,..., b _{J-1}$ gives $J(J-1)$ equations while evaluating (\ref{p-w-2}) at $ a = a _1,...,a_J$ gives $J$ equations.  
We collect  these $J(J-1)+J=J^2$ equations as
\begin{equation}
\bs Q_{\xi_2,\xi_3} =  {\bs L}_{\xi_3}  {\bs D}_{\xi_3|\xi_2} \bar {\bs L}_{\xi_2}^\top , \label{decomp}
\end{equation}
where 
\begin{equation}
\begin{aligned}
&\bs Q_{\xi_2,\xi_3}:= 
\begin{pmatrix}
\bar q_{\xi_2,\xi_3}( a _1) &\bar q_{\xi_2,\xi_3}( a _2) &\cdots &\bar q_{\xi_2,\xi_3}( a _J) \\
q_{\xi_2,\xi_3}( a _1, b _1) &q_{\xi_2,\xi_3}( a _2, b _1) &\cdots &q_{\xi_2,\xi_3}( a _J, b _1) \\
\vdots & \vdots & \ddots & \vdots \\
q_{\xi_2,\xi_3}( a _1, b _{J-1}) &q_{\xi_2,\xi_3}( a _2, b _{J-1}) &\cdots &q_{\xi_2,\xi_3}( a _J, b _{J-1}) 
\end{pmatrix},
\end{aligned} \label{L}
\end{equation} 
 \begin{equation}
\begin{aligned}  
&  
\bar{\bs L}_{\xi_2}:=
\begin{pmatrix}
\lambda^1_{1}(\xi_2|a_1)&\cdots&\lambda^J_{1}(\xi_2|a_1)\\
\vdots & \ddots & \dots  \\
\lambda^1_{1}(\xi_2|a_J)&\cdots&\lambda^J_{1}(\xi_2|a_J)
\end{pmatrix},\quad
 {\bs L}_{\xi_3}:=
\begin{pmatrix}
1& \cdots &1\\
\lambda^1_{3}(b_1|\xi_3) & \cdots &\lambda^J_{3}(b_1|\xi_3)\\
\vdots & \ddots &  \vdots \\
\lambda^1_{3}(b_{J-1}|\xi_3) & \cdots &\lambda^J_{3}(b_{J-1}|\xi_3)
\end{pmatrix},
\end{aligned} \label{LL}
\end{equation}  
and ${\bs D}_{\xi_3|\xi_2} := \text{diag}\left(\lambda^1_2(\xi_3|\xi_2),...,\lambda^J_2(\xi_3|\xi_2)\right)$.  Define 
$ \tau_{\xi_1(0),\xi_2(0),D}^j(\xi_1,\xi_2,1):=  $ \\ $ \tau_{\xi_1(0),D}^j(\xi_1,1) p^j_{\xi_2(0)|\xi_1(0)}(\xi_2|\xi_1).$
We make the following assumption.

\begin{assumptionID}\label{A-P-2} (a) There exists a value $\xi_3^*$ that satisfies the following condition: 
 For every $\xi_3\in \mathcal{X}$, we can find $( \check\xi_2,\bar \xi_2,\bar \xi_3)\in\mathcal{X}^3$, $(a_1,...,a_{J})\in \mathcal{X}^{kJ}$ and $(b_1,...,b_{J-1})\in \mathcal{X}^{(T-(k+2))(J-1)}$ such that (i)   $\bar{\bs L}_{\check\xi_2}$, $\bar{\bs L}_{\bar \xi_2}$, ${\bs L}_{ \xi_3}$, ${\bs L}_{ \xi_3^*}$, and ${\bs L}_{\bar \xi_3}$ are non-singular; (ii) all the diagonal elements of $\bs D_{ \xi_3, \xi_3} $ defined in  (\ref{D-z}) with $ \xi_3^*= \xi_3$ take  distinct values.
(b)  For every $(\xi_2, \xi_3, \xi_4)\in \mathcal{X}^{T-k}$,  $p^j_{\xi_3(0)|\xi_2(0)}(\xi_3|\xi_2)> 0$ and $p^j_{\xi_4(0)|\xi_3(0)}(\xi_4|\xi_3)> 0$ for $j=1,...,J$.  \revtext{(c) For every $x_1\in\mathcal{X}$: (i) if $k\ge 2$,
$p^j_{\bs X_1(0),D}(x_1,0)>0$ for $j=1,\ldots,J$; and (ii) there exist
$\{c_l\}_{l=1}^J$ with $c_l=(c_{l,2},\ldots,c_{l,k+1})\in\mathcal{X}^{k}$
such that, writing $c_{l,1}:=x_1$, the $J\times J$ matrix
$$
\begin{pmatrix}
 p^1_{\bs X_2^{k+1}(0)|\bs X_1(0)}(c_1|x_1)& p^2_{\bs X_2^{k+1}(0)|\bs X_1(0)}(c_1|x_1) &\cdots& p^J_{\bs X_2^{k+1}(0)|\bs X_1(0)}(c_1|x_1)\\
 \vdots&\vdots&\ddots &\vdots\\
 p^1_{\bs X_2^{k+1}(0)|\bs X_1(0)}(c_J|x_1) &p^2_{\bs X_2^{k+1}(0)|\bs X_1(0)}(c_J|x_1)&\cdots&p^J_{\bs X_2^{k+1}(0)|\bs X_1(0)}(c_J|x_1)
\end{pmatrix}$$
is non-singular, where
$p^j_{\bs X_2^{k+1}(0)|\bs X_1(0)}(c_l|x_1):=\prod_{t=2}^{k+1}p^j_{\bs X_t(0)|\bs X_{t-1}(0)}(c_{l,t}|c_{l,t-1})$
is the type-$j$ probability of the $k$-period path $c_l$ given
$\bs X_1(0)=x_1$. When $k=1$, $c_l\in\mathcal{X}$ and the matrix has
elements $p^j_{\xi_2(0)|\xi_1(0)}(c_l|\xi_1)$.} (d) For each value of $\xi_2\in \mathcal{X}$, there exists $\{e_j\}_{j=1}^J\in\mathcal{X}^{kJ}$ such that
$$
\begin{pmatrix}
 \tau_{\xi_1(0),\xi_2(0),D}^1(e_1,\xi_2,1)& \tau_{\xi_1(0),\xi_2(0),D}^2(e_1,\xi_2,1)& \cdots &  \tau_{\xi_1(0),\xi_2(0),D}^J(e_1,\xi_2,1)\\
 \vdots & \vdots &  \ddots & \vdots\\
 \tau_{\xi_1(0),\xi_2(0),D}^1(e_J,\xi_2,1)& \tau_{\xi_1(0),\xi_2(0),D}^2(e_J,\xi_2,1)& \cdots &  \tau_{\xi_1(0),\xi_2(0),D}^J(e_J,\xi_2,1)
\end{pmatrix}
$$
is non-singular.
\end{assumptionID}
 
We fix the value of  $\xi_3^*$  and, for each $ \xi_3$, we choose $( \check\xi_2,\bar \xi_2,\bar \xi_3)$, $(a_1,...,a_{J})$, and $(b_1,...,b_{J-1})$ as stated in Assumption \ref{A-P-2}(a). 
Evaluating (\ref{decomp}) at four different points, $({\xi}_3, \check\xi_2)$, $(\bar{\xi}_3,\check\xi_2)$, $(\bar {\xi}_3, \bar{\xi}_2)$, and $(  {\xi}_3^*,\bar {\xi}_2)$ gives
\[
\begin{aligned}
\bs Q_{ \check\xi_2,{\xi}_3} &= {\bs L}_{{\xi}_3} {\bs D}_{{\xi}_3|\check\xi_2} \bar {\bs L}_{\check\xi_2}^\top,\quad 
\bs Q_{\check\xi_2,\bar{\xi}_3}  = {\bs L}_{ \bar{\xi}_3} {\bs D}_{\bar {\xi}_3|  \check\xi_2} \bar {\bs L}_{\check\xi_2}^\top, \\
\bs Q_{ \bar{\xi}_2,\bar {\xi}_3} &= {\bs L}_{\bar {\xi}_3} {\bs D}_{\bar {\xi}_3|\bar{\xi}_2} \bar {\bs L}_{\bar{\xi}_2}^\top,\quad
 \bs Q_{\bar {\xi}_2,{\xi}_3^*}  = {\bs L}_{{\xi}_3^*}{\bs D}_{ {\xi}_3^*|\bar {\xi}_2} \bar {\bs L}_{\bar {\xi}_2}^\top.
\end{aligned}
\]  
Then, following the identification argument in  \cite{carroll10jns}, under Assumption \ref{A-P-2}(a)(b), we have%
\begin{align}
 A _{ \xi_3, \xi_3^*}  & :=  \bs Q_{\check\xi_2,{\xi}_3} (\bs Q_{\check\xi_2,\bar{\xi}_3})^{-1} \bs Q_{\bar{\xi}_2,\bar {\xi}_3} ( \bs Q_{\bar {\xi}_2,{\xi}_3^*} )^{-1}   = {\bs L}_{{\xi}_3}  {\bs D}_{ {\xi}_3,  {\xi}_3^*} {\bs L}_{{\xi}_3^*} ^{-1}, \label{A_z}\\
\text{ where }& {\bs D}_{ {\xi}_3,  {\xi}_3^*} : = \bs D_{{\xi}_3| \check\xi_2} (\bs D_{\bar{\xi}_3|\check\xi_2})^{-1} \bs D_{\bar {\xi}_3| \bar{\xi}_2} ( \bs D_{{\xi}_3^*|\bar {\xi}_2} )^{-1}. \label{D-z}
\end{align}

We  first identify ${\bs L}_{{\xi}_3}$ for all $ \xi_3\in  \mathcal{X}$ up to an unknown permutation matrix. Evaluating (\ref{A_z}) and (\ref{D-z}) at $ \xi_3^*= \xi_3$, we have  
 $\bs A_{ \xi_3, \xi_3}  {\bs L}_{{\xi}_3}= {\bs L}_{{\xi}_3} {\bs D}_{{\xi}_3, {\xi}_3}$. Because $\bs A_{\xi_3,\xi_3}$ has $J$ distinct eigenvalues under
Assumption \ref{A-P-2}(a), the eigenvalues of $\bs A_{\xi_3,\xi_3}$
determine the diagonal elements of $\bs D_{\xi_3,\xi_3}$ up to their
ordering, while the right eigenvectors of $\bs A_{\xi_3,\xi_3}$ determine
the columns of $\bs L_{\xi_3}$ up to multiplicative constants and the
ordering of its columns. Namely, collecting the right eigenvectors of
$\bs A_{\xi_3,\xi_3}$ into a matrix in descending order of their
eigenvalues, we obtain a matrix $\bs B$ that can be written as
$\bs B = \bs L_{\xi_3}\Delta_{\xi_3}\bs C$, where $\Delta_{\xi_3}$ is an
unknown permutation matrix and $\bs C$ is some diagonal matrix with
non-zero diagonal elements. Because
$\bs A_{\xi_3,\xi_3}\bs L_{\xi_3} = \bs L_{\xi_3}\bs D_{\xi_3,\xi_3}$ and
diagonal matrices commute, $\bs B$ satisfies
$\bs A_{\xi_3,\xi_3}\bs B = \bs L_{\xi_3}\bs D_{\xi_3,\xi_3}\Delta_{\xi_3}\bs C
= \bs L_{\xi_3}\Delta_{\xi_3}\bs C\,\tilde{\bs D}_{\xi_3,\xi_3}
= \bs B\,\tilde{\bs D}_{\xi_3,\xi_3}$, where
$\tilde{\bs D}_{\xi_3,\xi_3} := \Delta_{\xi_3}^{-1}\bs D_{\xi_3,\xi_3}\Delta_{\xi_3}$
is the diagonal matrix whose diagonal elements are the eigenvalues of
$\bs A_{\xi_3,\xi_3}$ in descending order.

We can determine the diagonal matrix $\bs C\tilde{\bs D}_{\xi_3,\xi_3}$ from
the first row of $\bs A_{\xi_3,\xi_3}\bs B = \bs B\tilde{\bs D}_{\xi_3,\xi_3}
= \bs L_{\xi_3}\Delta_{\xi_3}\bs C\tilde{\bs D}_{\xi_3,\xi_3}$ because the
first row of $\bs L_{\xi_3}\Delta_{\xi_3}$ is a vector of ones: the first
row of $\bs L_{\xi_3}$ is a vector of ones, and post-multiplication by a
permutation matrix only re-orders its entries. Then,
$\bs L_{\xi_3}\Delta_{\xi_3}$ is determined from $\bs A_{\xi_3,\xi_3}\bs B$
and $\bs C\tilde{\bs D}_{\xi_3,\xi_3}$ as
$\bs L_{\xi_3}\Delta_{\xi_3} = \bs A_{\xi_3,\xi_3}\bs B(\bs C\tilde{\bs D}_{\xi_3,\xi_3})^{-1}$
in view of
$\bs A_{\xi_3,\xi_3}\bs B = \bs L_{\xi_3}\Delta_{\xi_3}\bs C\tilde{\bs D}_{\xi_3,\xi_3}$,
where $\bs C\tilde{\bs D}_{\xi_3,\xi_3}$ is invertible because the diagonal
elements of $\bs D_{\xi_3,\xi_3}$ are non-zero under Assumption
\ref{A-P-2}(b). Repeating the above argument for all values of
$\xi_3\in\mathcal{X}$, the eigenvalue decomposition algorithm identifies
the matrices

 Next, we identify permutation matrices that re-arrange ${\bs L}_{{\xi}_3} \Delta_{{\xi}_3} $ into a common order of latent types across different values of $ \xi_3$, following the identification argument in \cite{HigginsJochmans21}. Pre- and post-multiplying (\ref{A_z}) by $\tilde {\bs L}_{{\xi}_3}^{-1}$ and $\tilde {\bs L}_{ \xi_3^*}$, respectively, we obtain 
$\tilde {\bs D}_{ \xi_3, {\xi}_3^*} := \tilde {\bs L}_{{\xi}_3}^{-1} \bs A_{ \xi_3, {\xi}_3^*} \tilde {\bs L}_{ \xi_3^*} = \Delta_{{\xi}_3}^{-1} {\bs D}_{ \xi_3, {\xi}_3^*} \Delta_{ \xi_3^*}= \left( \Delta_{{\xi}_3}^{-1} \Delta_{ \xi_3^*} \right) \left( \Delta_{ \xi_3^*}^{-1} {\bs D}_{ \xi_3, {\xi}_3^*} \Delta_{ \xi_3^*} \right)$,
where the last equality inserts the identity matrix $\Delta_{ \xi_3^*} \Delta_{ \xi_3^*}^{-1}$. Define the relative permutation matrix $\bs P_{\xi_3} := \Delta_{{\xi}_3}^{-1} \Delta_{ \xi_3^*}$ and the diagonal matrix $\bs D^*_{\xi_3,\xi_3^*} := \Delta_{ \xi_3^*}^{-1} {\bs D}_{ \xi_3, {\xi}_3^*} \Delta_{ \xi_3^*}$. We observe that $\tilde {\bs D}_{ \xi_3, {\xi}_3^*} = \bs P_{\xi_3} \bs D^*_{\xi_3,\xi_3^*}$.
Because $\bs P_{\xi_3}$ is a permutation matrix, $\tilde {\bs D}_{ \xi_3, {\xi}_3^*}$ is a column-permuted diagonal matrix. Specifically, the $j$-th column of $\tilde {\bs D}_{ \xi_3, {\xi}_3^*}$ contains exactly one non-zero element, which corresponds to a diagonal element of $\bs D^*_{\xi_3,\xi_3^*}$. Therefore, each diagonal element of $\bs D^*_{\xi_3,\xi_3^*}$ is identified by the sum of the elements in the corresponding column of $\tilde {\bs D}_{ \xi_3, {\xi}_3^*}$. Consequently, the matrix $\bs D^*_{\xi_3,\xi_3^*} = \Delta_{ \xi_3^*}^{-1} {\bs D}_{ \xi_3, {\xi}_3^*} \Delta_{ \xi_3^*}$ is identified.

With $\tilde {\bs D}_{ \xi_3, {\xi}_3^*}$ and $\bs D^*_{\xi_3,\xi_3^*}$ identified, we can identify the relative permutation matrix as $\bs P_{\xi_3} = \tilde {\bs D}_{ \xi_3, {\xi}_3^*} (\bs D^*_{\xi_3,\xi_3^*})^{-1}$. Substituting the definitions back into this expression, we have identified $\Delta_{{\xi}_3}^{-1} \Delta_{ \xi_3^*} = \tilde {\bs D}_{ \xi_3, {\xi}_3^*} ( \Delta_{ \xi_3^*}^{-1} {\bs D}_{ \xi_3, {\xi}_3^*} \Delta_{ \xi_3^*} )^{-1}$.
Finally, we identify ${\bs L}_{{\xi}_3}$ up to a common permutation matrix $\Delta_{ \xi_3^*}$ (which does not depend on $\xi_3$ under Assumption \ref{A-P-2}(a)) by aligning the columns of $\tilde {\bs L}_{{\xi}_3}$ using the identified relative permutation as
\begin{equation}\label{L-star}
{\bs L}_{{\xi}_3}^* := {\bs L}_{{\xi}_3} \Delta_{ \xi_3^*} = \tilde {\bs L}_{{\xi}_3} (\Delta_{{\xi}_3}^{-1} \Delta_{ \xi_3^*}) = \tilde {\bs L}_{{\xi}_3} \tilde {\bs D}_{ \xi_3, {\xi}_3^*} \left( \Delta_{ \xi_3^*}^{-1} {\bs D}_{ \xi_3, {\xi}_3^*} \Delta_{ \xi_3^*} \right)^{-1}.
\end{equation}
 
In the next step, we identify $\{\tau_{\xi_1(0),D}^j(\cdot,0), p^j_{\xi_2(0)|\xi_1(0)}(\cdot|\cdot),  p^j_{\xi_3(0)|\xi_2(0)}(\cdot|\cdot)\}_{j=1}^J$ up to a permutation matrix $\Delta_{{\xi}_3^*}$. Evaluating $
p_{\xi_2,\xi_3,\xi_4|\xi_1,D}(\xi_2,\xi_3,\xi_4|\xi_1,0)=  \sum_{j=1}^J \tau_{\xi_1(0),D}^j(\xi_1,0)$ \\ $ p^j_{\xi_2(0)|\xi_1(0)}(\xi_2|\xi_1) p^j_{\xi_3(0)|\xi_2(0)}(\xi_3|\xi_2) p^j_{\xi_4(0)|\xi_3(0)}(\xi_4|\xi_3)$ at  $\xi_4=b_1,b_2,...,b_{J-1}$ and collecting them into a vector together with $
p_{\xi_2,\xi_3|\xi_1,D}({\xi}_2,{\xi}_3|{\xi}_1,0)=\sum_{j=1}^J \tau_{\xi_1(0),D}^j(\xi_1,0)$ \\ $ p^j_{\xi_2(0)|\xi_1(0)}(\xi_2|\xi_1) p^j_{\xi_3(0)|\xi_2(0)}(\xi_3|\xi_2)$  gives
\begin{equation}\label{r}
\bs q_{{\xi}_2,{\xi}_3|\xi_1}  = {\bs L}_{{\xi}_3}  \bs \ell_{{\xi}_2,\xi_3|{\xi}_1} = {\bs L}_{{\xi}_3}^* \Delta_{ \xi_3^*}^{-1}  \bs \ell_{{\xi}_2,\xi_3|{\xi}_1}
\end{equation}
where
\begin{align*}
\bs q_{{\xi}_2,{\xi}_3|\xi_1} &:=
\begin{pmatrix}
p_{\xi_2,\xi_3|\xi_1,D}({\xi}_2,{\xi}_3|{\xi}_1,0)\\
p_{\xi_2,\xi_3,\xi_4|\xi_1,D}(\xi_2,\xi_3,b_1|\xi_1,0)\\ 
\vdots\\
p_{\xi_2,\xi_3,\xi_4|\xi_1,D}(\xi_2,\xi_3,b_{J-1}|\xi_1,0)
\end{pmatrix}\ \text{ and }\\
\bs \ell_{{\xi}_2,{\xi}_3|\xi_1}&  :=\begin{pmatrix}
\ell^{1}({\xi}_2,\xi_3|{\xi}_1) \\
\vdots\\
\ell^{J}({\xi}_2,\xi_3|{\xi}_1)
\end{pmatrix} :=
\begin{pmatrix}
 \tau_{\xi_1(0),D}^1(\xi_1,0) p^1_{\xi_2(0)|\xi_1(0)}(\xi_2|\xi_1) p^1_{\xi_3(0)|\xi_2(0)}(\xi_3|\xi_2)\\ 
\vdots\\
 \tau_{\xi_1(0),D}^J(\xi_1,0) p^J_{\xi_2(0)|\xi_1(0)}(\xi_2|\xi_1) p^J_{\xi_3(0)|\xi_2(0)}(\xi_3|\xi_2)
\end{pmatrix}.
\end{align*}
From (\ref{L-star}) and (\ref{r}), we identify $\bs \ell_{{\xi}_1,{\xi}_2,\xi_3}$  up to $\Delta_{ \xi_3^*}$  for all $(\xi_1,\xi_2,\xi_3)\in |\mathcal{X}|^{k+2}$ as
  \begin{equation}\label{d-3}
\Delta_{ \xi_3^*}^{-1} \bs \ell_{{\xi}_2,\xi_3|{\xi}_1} =\begin{pmatrix}
\ell^{\delta(1)}({\xi}_2,\xi_3|{\xi}_1),
\ldots,
\ell^{\delta(J)}({\xi}_2,\xi_3|{\xi}_1)
\end{pmatrix}^\top
=\left( {\bs L}_{{\xi}_3}^* \right)^{-1}\bs q_{{\xi}_2,{\xi}_3|\xi_1},
  \end{equation}
  where  
 $ \delta: \{1,2,...,J\} \rightarrow \{1,2,...,J\}$ 
  is a permutation implied by $ \Delta_{ \xi_3^*}$.  Then, $\tau_{\xi_1(0),D}^{\delta(j)}(\xi_1,0)$ is identified from $\ell^{\delta(j)}({\xi}_2,\xi_3|{\xi}_1)$ in (\ref{d-3}) as $\tau_{\xi_1(0),D}^{\delta(j)}(\xi_1,0)
=\sum_{(\xi_2,\xi_3)\in\mathcal{X}^2} \ell^{\delta(j)}({\xi}_2,\xi_3|{\xi}_1)
= \sum_{(\xi_2,\xi_3)\in\mathcal{X}^2}
  \tau_{\xi_1(0),D}^{\delta(j)}(\xi_1,0)\, p^{\delta(j)}_{\xi_2(0)|\xi_1(0)}(\xi_2|\xi_1)\, p^{\delta(j)}_{\xi_3(0)|\xi_2(0)}(\xi_3|\xi_2)$.
  Given that $\tau_{\xi_1(0),D}^{\delta(j)}(\xi_1,0)$ is identified,  $p^{\delta(j)}_{\xi_2(0)|\xi_1(0)}(\xi_2|\xi_1)$ and $p^{\delta(j)}_{\xi_3(0)|\xi_2(0)}(\xi_3|\xi_2)$ are also identified as
   $p^{\delta(j)}_{\xi_2(0)|\xi_1(0)}(\xi_2|\xi_1)=\sum_{\xi_3\in\mathcal{X}}\ell^{\delta(j)}({\xi}_2,\xi_3|{\xi}_1)/ \tau_{\xi_1(0),D}^{\delta(j)}(\xi_1,0)$ and
   $p^{\delta(j)}_{\xi_3(0)|\xi_2(0)}(\xi_3|\xi_2)=$\\ $ \ell^{\delta(j)}({\xi}_2,\xi_3|{\xi}_1)/( \tau_{\xi_1(0),D}^{\delta(j)}(\xi_1,0)p^{\delta(j)}_{\xi_2(0)|\xi_1(0)}(\xi_2|\xi_1)).$
  This proves the identification of \\ $\{ \tau_{\xi_1(0),D}^{\delta(j)}(\cdot,0), p^{\delta(j)}_{\xi_2(0)|\xi_1(0)}(\cdot|\cdot), p^{\delta(j)}_{\xi_3(0)|\xi_2(0)}(\cdot|\cdot)\}_{j=1}^J$  up to $\Delta_{ \xi_3^*}$.
  
To identify $\{p^j_{\xi_4(0)|\xi_3(0)}(\cdot|\cdot)\}_{j=1}^J$ up to $\Delta_{ \xi_3^*}$, define
\begin{align*}
\bar{\bs q}_{\check{\xi}_2,{\xi}_3,\xi_4} &:=
\begin{pmatrix}
p_{\xi_2,\xi_3,\xi_4|\xi_1,D}(\check\xi_2,\xi_3,\xi_4|a_1,0)\\
\vdots\\
p_{\xi_2,\xi_3,\xi_4|\xi_1,D}(\check\xi_2,\xi_3,\xi_4|a_J,0)
\end{pmatrix}\ \text{ and }\
\bar{\bs \ell}_{{\xi}_4|\xi_3} :=\begin{pmatrix}
p^1_{\xi_4(0)|\xi_3(0)}(\xi_4|\xi_3) \\
\vdots\\
p^J_{\xi_4(0)|\xi_3(0)}(\xi_4|\xi_3)
\end{pmatrix}.
\end{align*}
For each $(\xi_3,\xi_4)$, we have the linear system: $\bar{\bs q}_{\check\xi_2,{\xi}_3,\xi_4} = \bar {\bs L}_{\check\xi_2} {\bs D}_{{\xi}_3|\check\xi_2} \bar{\bs \ell}_{{\xi}_4|\xi_3} = $ \\ $(\bar {\bs L}_{\check\xi_2}\Delta_{ \xi_3^*}) ( \Delta_{ \xi_3^*}^{-1}{\bs D}_{{\xi}_3|\check\xi_2} \Delta_{ \xi_3^*} ) (\Delta_{ \xi_3^*}^{-1} \bar{\bs \ell}_{{\xi}_4|\xi_3})$. %
The matrices $\bar {\bs L}_{\check\xi_2}\Delta_{ \xi_3^*}$  and $\Delta_{ \xi_3^*}^{-1}{\bs D}_{{\xi}_3|\check\xi_2} \Delta_{ \xi_3^*}$ are now identified from 
$\{ \tau_{\xi_1(0),D}^{\delta(j)}(\cdot,0),\, p^{\delta(j)}_{\xi_2(0)|\xi_1(0)}(\cdot|\cdot),\, p^{\delta(j)}_{\xi_3(0)|\xi_2(0)}(\cdot|\cdot)\}_{j=1}^J$ 
up to $\Delta_{ \xi_3^*}$. Both are invertible by assumption.
We can therefore solve for the permuted component vector as
$\Delta_{ \xi_3^*}^{-1} \bar{\bs \ell}_{{\xi}_4|\xi_3} =
\begin{pmatrix}
p^{\delta(1)}_{\xi_4(0)|\xi_3(0)}(\xi_4|\xi_3), 
\ldots,
p^{\delta(J)}_{\xi_4(0)|\xi_3(0)}(\xi_4|\xi_3)
\end{pmatrix}^\top
= ( \Delta_{ \xi_3^*}^{-1}{\bs D}_{{\xi}_3|\check\xi_2} \Delta_{ \xi_3^*} )^{-1} (\bar {\bs L}_{\check\xi_2}\Delta_{ \xi_3^*})^{-1}$ $\bar{\bs q}_{\check\xi_2,{\xi}_3,\xi_4}$. 
Therefore, $\{p^{\delta(j)}_{\xi_4(0)|\xi_3(0)}(\cdot|\cdot)\}_{j=1}^J$ is identified. %

\revtext{We proceed to prove the identification of $\{\tau_{\xi_1(0),D}^j(\cdot,1), p^j_{\xi_3(1)|\xi_2(1)}(\cdot|\cdot), p^j_{\xi_4(1)|\xi_3(1)}(\cdot |\cdot)\}_{j=1}^J$ up to the permutation $\delta(\cdot)$ implied by $\Delta_{ \xi_3^*}$. First, to identify $\{\tau_{\xi_1(0),D}^{\delta(j)}(\cdot,1)\}_{j=1}^J$, we use the $k$-period block $\bs X_2^{k+1}:=(\bs X_2,\ldots,\bs X_{k+1})$. Because $T_0=k+1$, this block is observed without treatment for both $D=0$ and $D=1$ by Assumption~\crossref{A-anticipation}, and by Assumptions~\crossref{A-transition-Z} and~\crossref{A-Markov}, we have $p^j_{\bs X_2^{k+1}(0)|\bs X_1(0)}(x_2^{k+1}|x_1)=\prod_{t=2}^{k+1}p^j_{\bs X_t(0)|\bs X_{t-1}(0)}(x_t|x_{t-1})$.
Let $\tau^j_{\bs X_1(0),D}(x_1,d):=$ $\pi^j p^j_{\bs X_1(0),D}(x_1,d)$ $/\sum_{i=1}^J\pi^i p^i_{\bs X_1(0),D}(x_1,d)$ denote the type share given the first-period outcome and treatment status. Then, for $d\in\{0,1\}$,
\begin{equation}\label{eq:treated-mixture}
p_{\bs X_2^{k+1}|\bs X_1,D}(x_2^{k+1}|x_1,d)=\sum_{j=1}^J\tau^{j}_{\bs X_1(0),D}(x_1,d)\,p^{j}_{\bs X_2^{k+1}(0)|\bs X_1(0)}(x_2^{k+1}|x_1).
\end{equation}

We identify  $p^j_{\bs X_2^{k+1}(0)|\bs X_1(0)}(x_2^{k+1}|x_1)$ up to $\delta(\cdot)$, from objects recovered in the preceding steps as follows. For $\xi_1=(x_1,x_2^{k})\in\mathcal{X}^{k}$, $\tau_{\xi_1(0),D}^{\delta(j)}(\xi_1,0)\,p_{\xi_1,D}(\xi_1,0)=\pi^{\delta(j)}p^{\delta(j)}_{\xi_1(0),D}(\xi_1,0)$, where $p_{\xi_1,D}(\xi_1,0)$ is the observed PMF of $(\xi_1,D)$, and, by Assumption~\crossref{A-Markov}, $p^{\delta(j)}_{\xi_1(0),D}(\xi_1,0)=p^{\delta(j)}_{\bs X_1(0),D}(x_1,0)$ $\prod_{t=2}^{k}p^{\delta(j)}_{\bs X_t(0)|\bs X_{t-1}(0)}(x_t|x_{t-1})$. Hence 
\begin{equation}\label{eq:path-id}
p^{\delta(j)}_{\bs X_2^{k+1}(0)|\bs X_1(0)}(x_2^{k+1}|x_1)
=\frac{\tau_{\xi_1(0),D}^{\delta(j)}(\xi_1,0)\,p_{\xi_1,D}(\xi_1,0)\,p^{\delta(j)}_{\xi_2(0)|\xi_1(0)}(x_{k+1}|\xi_1)}{\sum_{\tilde x_2^{k}\in\mathcal{X}^{k-1}}\tau_{\xi_1(0),D}^{\delta(j)}((x_1,\tilde x_2^{k}),0)\,p_{\xi_1,D}((x_1,\tilde x_2^{k}),0)},
\end{equation}
where the denominator equals $\pi^{\delta(j)}p^{\delta(j)}_{\bs X_1(0),D}(x_1,0)$ and is positive by Assumption~\ref{A-P-2}(c)(i), and the factors in the numerator are identified in \eqref{d-3} and the step following it. %

For each $x_1\in\mathcal{X}$ with $p_{\bs X_1,D}(x_1,1)>0$, choose $\{c_l\}_{l=1}^J$ as in Assumption~\ref{A-P-2}(c)(ii) and evaluate \eqref{eq:treated-mixture} at $x_2^{k+1}=c_1,\ldots,c_J$ and $d=0,1$. This yields the matrix system
\begin{align}
&\begin{pmatrix}
p_{\bs X_2^{k+1}|\bs X_1,D}(c_1|x_1,0)& \cdots &p_{\bs X_2^{k+1}|\bs X_1,D}(c_J|x_1,0)\\
p_{\bs X_2^{k+1}|\bs X_1,D}(c_1|x_1,1)& \cdots &p_{\bs X_2^{k+1}|\bs X_1,D}(c_J|x_1,1)
\end{pmatrix}=\nonumber\\
&
\begin{pmatrix}
\tau_{\bs X_1(0),D}^{\delta(1)}(x_1,0) & \cdots& \tau_{\bs X_1(0),D}^{\delta(J)}(x_1,0)\\
\tau_{\bs X_1(0),D}^{\delta(1)}(x_1,1) &\cdots& \tau_{\bs X_1(0),D}^{\delta(J)}(x_1,1)
\end{pmatrix}
\begin{pmatrix}
p^{\delta(1)}_{\bs X_2^{k+1}(0)|\bs X_1(0)}(c_1|x_1)&\cdots & p^{\delta(1)}_{\bs X_2^{k+1}(0)|\bs X_1(0)}(c_J|x_1)\\
\vdots & \ddots & \vdots\\
p^{\delta(J)}_{\bs X_2^{k+1}(0)|\bs X_1(0)}(c_1|x_1)&\cdots & p^{\delta(J)}_{\bs X_2^{k+1}(0)|\bs X_1(0)}(c_J|x_1)
\end{pmatrix}. \label{mat-sys}
\end{align}
The $2\times J$ matrix on the left is observed while 
the $J\times J$ matrix on the right has been identified in (\ref{eq:path-id}) and is invertible by Assumption~\ref{A-P-2}(c)(ii); its columns are indexed by $k$-period paths, of which there are $|\mathcal{X}|^{k}\ge J$. Thus, the $2\times J$ matrix of permuted first-period type shares is identified by inverting \eqref{mat-sys}, and its second row is $\{\tau_{\bs X_1(0),D}^{\delta(j)}(x_1,1)\}_{j=1}^J$. Finally, by Bayes' rule and Assumption~\crossref{A-Markov}, for $\xi_1=(x_1,x_2^{k})$ with $p_{\xi_1,D}(\xi_1,1)>0$,
$\tau_{\xi_1(0),D}^{\delta(j)}(\xi_1,1)=\frac{\tau_{\bs X_1(0),D}^{\delta(j)}(x_1,1)\,p^{\delta(j)}_{\bs X_2^{k}(0)|\bs X_1(0)}(x_2^{k}|x_1)}{\sum_{i=1}^J\tau_{\bs X_1(0),D}^{\delta(i)}(x_1,1)\,p^{\delta(i)}_{\bs X_2^{k}(0)|\bs X_1(0)}(x_2^{k}|x_1)}$,
where $p^{\delta(j)}_{\bs X_2^{k}(0)|\bs X_1(0)}(x_2^{k}|x_1):=\sum_{x_{k+1}\in\mathcal{X}}p^{\delta(j)}_{\bs X_2^{k+1}(0)|\bs X_1(0)}(x_2^{k+1}|x_1)$ has been identified and equals one when $k=1$, and the denominator equals $p_{\bs X_2^{k}|\bs X_1,D}(x_2^{k}|x_1,1)>0$. This identifies $\{\tau_{\xi_1(0),D}^{\delta(j)}(\cdot,1)\}_{j=1}^J$.}

Next, for the identification of $\{p^{\delta(j)}_{\xi_3(1)|\xi_2(1)}(\cdot|\cdot), p^{\delta(j)}_{\xi_4(1)|\xi_3(1)}(\cdot |\cdot)\}_{j=1}^J$, define 
$\tau_{\xi_1,\xi_2,D}^{\delta(j)}(\xi_1,\xi_2,1):= \tau_{\xi_1(0),D}^{\delta(j)}(\xi_1,1) p^{\delta(j)}_{\xi_2(0)|\xi_1(0)}(\xi_2|\xi_1)$ and 
$p_{\xi_3,\xi_4|\xi_2} ^{\delta(j)}(\xi_3,\xi_4|\xi_2):=p^{\delta(j)}_{\xi_3(1)|\xi_2(1)}(\xi_3|\xi_2)p^{\delta(j)}_{\xi_4(1)|\xi_3(1)}(\xi_4|\xi_3)$. 
Note that $\tau_{\xi_1,\xi_2,D}^{\delta(j)}$ is identified from the preceding steps. The conditional PMF for $D=1$ can be expressed as:
$
p_{\xi_2,\xi_3,\xi_4|\xi_1,D}(\xi_2,\xi_3,\xi_4|\xi_1,1)= \sum_{j=1}^J \tau_{\xi_1,\xi_2,D}^{\delta(j)}(\xi_1,\xi_2,1) p_{\xi_3,\xi_4|\xi_2} ^{\delta(j)}(\xi_3,\xi_4|\xi_2).
$
For each $\xi_2 \in \mathcal{X}$, choose $\{e_j\}_{j=1}^J$ as in Assumption~\ref{A-P-2}(d). Evaluating the PMF at $\xi_1 = e_1, \ldots, e_J$ yields:
\begin{align*}
\begin{pmatrix}
p_{\xi_2,\xi_3,\xi_4|\xi_1,D}(\xi_2,\xi_3,\xi_4|e_1,1)\\
\vdots\\
p_{\xi_2,\xi_3,\xi_4|\xi_1,D}(\xi_2,\xi_3,\xi_4|e_J,1)
\end{pmatrix}
&=\begin{pmatrix}
\tau_{\xi_1,\xi_2,D}^{\delta(1)}(e_1,\xi_2,1)& \cdots & \tau_{\xi_1,\xi_2,D}^{\delta(J)}(e_1,\xi_2,1)\\
\vdots & \ddots & \vdots\\
\tau_{\xi_1,\xi_2,D}^{\delta(1)}(e_J,\xi_2,1)& \cdots & \tau_{\xi_1,\xi_2,D}^{\delta(J)}(e_J,\xi_2,1)
\end{pmatrix}\!\!
\begin{pmatrix}
p_{\xi_3,\xi_4|\xi_2} ^{\delta(1)}(\xi_3,\xi_4|\xi_2)\\
\vdots\\
p_{\xi_3,\xi_4|\xi_2} ^{\delta(J)}(\xi_3,\xi_4|\xi_2)
\end{pmatrix}.
\end{align*}
The $J \times J$ matrix of $\tau_{\xi_1,\xi_2,D}^{\delta(j)}$ components is identified and is invertible by Assumption~\ref{A-P-2}(d). We can therefore solve for the vector of permuted product probabilities to identify $\begin{pmatrix}
p_{\xi_3,\xi_4|\xi_2} ^{\delta(1)}(\xi_3,\xi_4|\xi_2),
\ldots,
p_{\xi_3,\xi_4|\xi_2} ^{\delta(J)}(\xi_3,\xi_4|\xi_2)
\end{pmatrix}^\top$. 
The individual components\\ $\{p^{\delta(j)}_{ \xi_3(1)|\xi_2(1)}(\cdot |\cdot ),p^{\delta(j)}_{\xi_4(1)|\xi_3(1)}(\cdot |\cdot )\}_{j=1}^J$ are then identified by marginalizing $p_{\xi_3,\xi_4|\xi_2} ^{\delta(j)}$ (e.g., summing over $\xi_4$ to find $p^{\delta(j)}_{\xi_3(1)|\xi_2(1)}$).

Finally, we identify the base probabilities $\{\pi^{\delta(j)},p^{\delta(j)}_{\xi_1(0),D}(\cdot,\cdot)\}_{j=1}^J$ using the identified $\{\tau_{\xi_1(0),D}^{\delta(j)}(\cdot,\cdot)\}_{j=1}^J$ and the observable PMF $p_{\xi_1,D}(\xi_1,d)$: 
$\pi^{\delta(j)} = \sum_{(\xi_1,d)} \tau_{\xi_1(0),D}^{\delta(j)}(\xi_1,d) p_{\xi_1,D}(\xi_1,d)$ and
$p^{\delta(j)}_{\xi_1(0),D}(\xi_1,d) = \frac{\tau^{\delta(j)}_{\xi_1(0),D}(\xi_1,d)p_{\xi_1,D}(\xi_1,d)}{\pi^{\delta(j)}}$.
Under Assumption \crossref{A-sample}(c), re-ordering the components such that $\pi^1<\pi^2<\cdots<\pi^J$ fixes the permutation $\delta(\cdot)$ and uniquely identifies all components, which completes the proof.
 $\qedsymbol$

\clearpage
\begin{center}
\vspace*{1cm}
{\Large\textbf{Online Supplemental Appendix}}\\[6pt]
\end{center}

\setcounter{section}{0}
\renewcommand{\thesection}{S.\arabic{section}}
\def\theHsection{supp.\arabic{section}}
\def\theHsubsection{supp.\arabic{section}.\arabic{subsection}}
\setcounter{equation}{0}
\renewcommand{\theequation}{S.\arabic{equation}}
\def\theHequation{supp.\arabic{equation}}
\setcounter{figure}{0}
\renewcommand{\thefigure}{S.\arabic{figure}}
\def\theHfigure{supp.\arabic{figure}}
\setcounter{table}{0}
\renewcommand{\thetable}{S.\arabic{table}}
\def\theHtable{supp.\arabic{table}}
\setcounter{theorem}{0}
\renewcommand{\thetheorem}{S.\arabic{theorem}}
\def\theHtheorem{supp.thm.\arabic{theorem}}
\setcounter{lemma}{0}
\renewcommand{\thelemma}{S.\arabic{lemma}}
\def\theHlemma{supp.lem.\arabic{lemma}}
\setcounter{proposition}{0}
\renewcommand{\theproposition}{S.\arabic{proposition}}
\def\theHproposition{supp.prop.\arabic{proposition}}
\setcounter{corollary}{0}
\renewcommand{\thecorollary}{S.\arabic{corollary}}
\def\theHcorollary{supp.cor.\arabic{corollary}}
\setcounter{definition}{0}
\renewcommand{\thedefinition}{S.\arabic{definition}}
\def\theHdefinition{supp.def.\arabic{definition}}
\setcounter{example}{0}
\renewcommand{\theexample}{S.\arabic{example}}
\def\theHexample{supp.exa.\arabic{example}}
\setcounter{remark}{0}
\renewcommand{\theremark}{S.\arabic{remark}}
\def\theHremark{supp.rem.\arabic{remark}}
\setcounter{assumption}{0}
\renewcommand{\theassumption}{S.\arabic{assumption}}
\def\theHassumption{supp.assu.\arabic{assumption}}
\setcounter{footnote}{0}%
\def\theHfootnote{supp.\arabic{footnote}}
\section{Proofs (Continued)}\label{sec:proofs-continued}

  \subsection{Proof of Proposition \crossref{prop:att-identification-vanilla}}
The result follows from the identity $\E\left[\mathbf{X}_t(1) \mid D = 1\right]=\E\left[\mathbf{X}_t \mid D = 1\right]$ and 
\begin{align*}
\E\left[\mathbf{X}_t(0) \mid D = 1\right]&=\E\left[\E\left[\mathbf{X}_t(0)\mid \bs X_1^{T_0}(0), D = 1\right]  \mid D = 1\right]\\
&=\E\left[\E\left[\mathbf{X}_t(0)\mid \bs X_1^{T_0}(0), D = 0\right]  \mid D = 1\right]\\
&=\E\left[\E\left[\mathbf{X}_t\mid\bs X_1^{T_0}, D = 0\right]  \mid D = 1\right],
\end{align*}
where the first equality follows from the law of iterated expectations, where the outer expectation over $\bs X_1^{T_0}(0)$ is taken with respect to $\Pr(\bs X_1^{T_0}(0) \mid D=1)$;
the second from Assumption \crossref{A-transition} together with the no-anticipation assumption (Assumption~\crossref{A-anticipation}) and the common support assumption (Assumption~\crossref{A-overlap}); and the third from the fact that $\bs X_t(0)=\bs X_t$ for all $t$ when $D=0$, which also uses Assumption~\crossref{A-anticipation} to replace $\bs X_1^{T_0}(0)$ with the observed $\bs X_1^{T_0}$ in the conditioning set. Substituting these expressions into the definition of $\bs\mu_t^{\text{ATT}}$ in \crosseqref{eq:ATT} yields equation \crosseqref{eq:att-identification-vanilla}. $\qedsymbol$

\subsection{Proof of Proposition \ref{proposition:wooldridge-index}}
Let $\mu^{\text{DiD},(k)}_t$ and $\mu^{\text{ATT},(k)}_t$ denote the $k$th elements of $\bs\mu^{\text{DiD}}_t$ and $\bs\mu^{\text{ATT}}_t$. Using the identity $\E[X^{(k)}_t(1)\mid D=1]=\E[X^{(k)}_t\mid D=1]$, we have $\mu^{\text{ATT},(k)}_t=\E[X^{(k)}_t\mid D=1]-\E[X^{(k)}_t(0)\mid D=1]$, so subtracting $\mu^{\text{ATT},(k)}_t$ from the estimand (\ref{eq:index-did-estimand}) cancels $\E[X^{(k)}_t\mid D=1]$ and expresses the bias as
\begin{equation}\label{eq:index-did-bias-step1}
\E\!\left[X^{(k)}_t(0)\mid D=1\right]-G\!\left(G^{-1}\!\left(\E\!\left[X^{(k)}_{T_0}\mid D=1\right]\right)+\Delta^{(k)}_t\right).
\end{equation}
By Assumption~\ref{A-anticipation}, $\bs X_{T_0}=\bs X_{T_0}(0)$ for all units and $\bs X_t=\bs X_t(0)$ for control units, so the $k$th element of (\ref{eq:diff}) reads 
$\mu^{\text{DiD},(k)}_t-\mu^{\text{ATT},(k)}_t=\E\!\left[X^{(k)}_t(0)\mid D=1\right]-\E\!\left[X^{(k)}_{T_0}\mid D=1\right]-\E\!\left[X^{(k)}_t\mid D=0\right]+\E\!\left[X^{(k)}_{T_0}\mid D=0\right]$.
Proposition~\ref{proposition:dbt-vs-did} equates the same difference with the $k$th element of (\ref{eq:did-bias-decomposition}); combining the two expressions and rearranging,
\begin{align}\label{eq:index-did-bias-step2}
&\E\!\left[X^{(k)}_t(0)\mid D=1\right]
=\sum_{\bs x_1^{T_0}\in\mathcal{X}^{T_0}}\E\!\left[X^{(k)}_t-x^{(k)}_{T_0}\mid \bs X_1^{T_0}=\bs x_1^{T_0},D=0\right]\nonumber\\
&\quad\quad\qquad\times\left\{\Pr\left(\bs X_1^{T_0}=\bs x_1^{T_0}\mid D=1\right)-\Pr\left(\bs X_1^{T_0}=\bs x_1^{T_0}\mid D=0\right)\right\}\nonumber\\
&\quad\quad\qquad+\E\!\left[X^{(k)}_{T_0}\mid D=1\right]+\E\!\left[X^{(k)}_{t}\mid D=0\right]-\E\!\left[X^{(k)}_{T_0}\mid D=0\right].
\end{align}
Next, $G\!\left(G^{-1}\!\left(\E[X^{(k)}_{T_0}\mid D=0]\right)+\Delta^{(k)}_t\right)=\E[X^{(k)}_t\mid D=0]$ by the definition of $\Delta^{(k)}_t$, so the definition of $\lambda^{(k)}_t$ in (\ref{eq:index-did-weight}) gives%
\begin{align}\label{eq:index-did-bias-step3}
&G\!\left(G^{-1}\!\left(\E\!\left[X^{(k)}_{T_0}\mid D=1\right]\right)+\Delta^{(k)}_t\right)\nonumber\\
&=\E\!\left[X^{(k)}_t\mid D=0\right]\ +\lambda^{(k)}_t\left\{\E\!\left[X^{(k)}_{T_0}\mid D=1\right]-\E\!\left[X^{(k)}_{T_0}\mid D=0\right]\right\}.
\end{align}
Substituting (\ref{eq:index-did-bias-step2}) and (\ref{eq:index-did-bias-step3}) into (\ref{eq:index-did-bias-step1}) and collecting terms, the bias equals%
\begin{align*}
&\sum_{\bs x_1^{T_0}\in\mathcal{X}^{T_0}}\E\!\left[X^{(k)}_t-x^{(k)}_{T_0}\mid \bs X_1^{T_0}=\bs x_1^{T_0},D=0\right]\\
&\quad\qquad\times\left\{\Pr\left(\bs X_1^{T_0}=\bs x_1^{T_0}\mid D=1\right)-\Pr\left(\bs X_1^{T_0}=\bs x_1^{T_0}\mid D=0\right)\right\}\\
&\quad\qquad+\left(1-\lambda^{(k)}_t\right)\left\{\E\!\left[X^{(k)}_{T_0}\mid D=1\right]-\E\!\left[X^{(k)}_{T_0}\mid D=0\right]\right\}.
\end{align*}
Since $\E[X^{(k)}_{T_0}\mid D=d]=\sum_{\bs x_1^{T_0}\in\mathcal{X}^{T_0}}x^{(k)}_{T_0}\Pr(\bs X_1^{T_0}=\bs x_1^{T_0}\mid D=d)$ for $d=0,1$, the last term merges into the sum and replaces the netted-out $x^{(k)}_{T_0}$ with $\lambda^{(k)}_t x^{(k)}_{T_0}$, which yields (\ref{eq:index-did-bias}). $\qedsymbol$

\subsection{Proof of Proposition \crossref{P-1}}

The expectation of untreated potential outcomes for treated units can be expressed as
\begin{align*}
  &  \mathbb{E} \left[ \bs X_t (0) \mid D = 1, Z_i = j \right] 
    =\mathbb{E} \left[ \mathbb{E} \left[\bs X_t (0) \mid  \bs X_{1}^{T_0}(0), D = 1,Z = j  \right] \mid D = 1, Z = j \right]   \\
    &=\mathbb{E} \left[ \mathbb{E} \left[\bs X_t (0) \mid  \bs X_{1}^{T_0}(0), D = 0,Z = j  \right] \mid D = 1, Z = j \right] \\
    &=\mathbb{E} \left[ \mathbb{E} \left[\bs X_t (0) \mid \bs X_{T_0}(0), D = 0,Z = j  \right] \mid D = 1, Z = j \right] \\
    & = \sum_{\bs x_{T_0}\in \mathcal{X}} \Pr(\bs X_{T_0}=\bs x_{T_0}\mid D = 1, Z = j )  \mathbb{E} \left[\bs X_t  \mid \bs X_{T_0}=\bs x_{T_0}, D = 0,Z = j  \right],
\end{align*}
where the first equality follows from the law of iterated expectations, the second from transition independence, the third from the Markov property of $\bs X_t(0)$, and the fourth from the definition of conditional expectations. 

Similarly,
$
\mathbb{E} \left[ \bs X_t (1) \mid D = 1, Z_i = j \right]
= \sum_{\bs x_{T_0}}
\Pr(\bs X_{T_0} = \bs x_{T_0} \mid D = 1, Z = j)$\\ $\times
\mathbb{E} \left[\bs X_t \mid \bs X_{T_0} = \bs x_{T_0}, D = 1, Z = j \right].
$
Therefore,   (\crossref{eq:LTATT-identification}) follows from subtracting the two objects given
$
\bs\mu_t^{ATT,j}
= \mathbb{E} \left[ \bs X_t (1) \mid D = 1, Z_i = j \right]
- \mathbb{E} \left[ \bs X_t (0) \mid D = 1, Z_i = j \right].
$

Moreover, since Proposition~\crossref{P-2} implies that the $j$th type-specific probabilities and expectations appearing in (\crossref{eq:LTATT-identification}) are identified from  $\{p_{\bs W}(\bs w;\bs\psi): \bs w\in\bs{\mathcal W}\}$, it follows directly from (\crossref{eq:LTATT-identification}) that $\bs\mu_t^{ATT,j}$ is identified.
$\qedsymbol$

\subsection{Proof of Proposition \crossref{P-2-k}}
\begin{revblock}
Suppose that $T_0=(k+1)\ell$, $T=2(k+1)\ell$, and $J=|\mathcal{X}|^{k\ell}$; as in the proof of Proposition~\crossref{P-2}, the proof extends to smaller $J$ and to larger $T_0$ and $T$. Let $\bs X_{(m)}:=\bs X_{(m-1)\ell+1}^{m\ell}\in\mathcal{X}^{\ell}$, $m=1,\ldots,2k+2$, denote consecutive blocks of $\ell$ periods, and partition $(\bs X_1,\ldots,\bs X_T)=(\xi_1,\xi_2,\xi_3,\xi_4)$ as
\begin{align}
&\xi_1:=(\bs X_{(1)},\ldots,\bs X_{(k)})=\bs X_1^{k\ell},\quad \xi_2:=\bs X_{(k+1)},\quad \xi_3:=\bs X_{(k+2)},\nonumber\\
&\text{and}\quad \xi_4:=(\bs X_{(k+3)},\ldots,\bs X_{(2k+2)})=\bs X_{(k+2)\ell+1}^{T},
\label{eq:xi-2}
\end{align}
so that $\xi_1,\xi_4\in\mathcal{X}^{k\ell}$ and $\xi_2,\xi_3\in\mathcal{X}^{\ell}$; when $\ell=1$, (\ref{eq:xi-2}) is \crosseqref{eq:xi}, and when $k=1$ it consists of four blocks of $\ell$ periods. For $d\in\{0,1\}$, define $\bs X_{(m)}(d)$ and the potential outcomes $\xi_1(d),\ldots,\xi_4(d)$ by replacing $\bs X_t$ with $\bs X_t(d)$. Under Assumption~\crossref{A-Markov-2}, conditional on $(D,Z)$ the block sequence $(\bs X_{(1)},\ldots,\bs X_{(2k+2)})$ is a first-order Markov chain on $\mathcal{X}^{\ell}$. Its initial joint mass is $p^j_{\{\bs X_s(0)\}_{s=1}^{\ell},D}(\bs x_1^{\ell},d)$, the initial distribution of the component $\bs\varphi^j$, and, writing $(\bs x_{(m)},\bs x_{(m+1)})=(\bs x_{(m-1)\ell+1}^{m\ell},\bs x_{m\ell+1}^{(m+1)\ell})$, its block transition probabilities are
\begin{equation}\label{eq:block-kernel}
p^j_{\bs X_{(m+1)}(d)|\bs X_{(m)}(d)}(\bs x_{(m+1)}|\bs x_{(m)}):=\prod_{t=m\ell+1}^{(m+1)\ell}p^j_{\bs X_t(d)|\{\bs X_s(d)\}_{s=t-\ell}^{t-1}}\bigl(\bs x_t\,\big|\,\bs x_{t-\ell}^{t-1}\bigr),
\end{equation}
for $d=0$ when $m\le k$ and $d\in\{0,1\}$ when $m\ge k+1$: the product of the $\ell$th-order transition probabilities along the block, each conditioning on the $\ell$ outcomes that precede it, which lie in $(\bs x_{(m)},\bs x_{(m+1)})$. Because $T_0=(k+1)\ell$, the blocks $\bs X_{(1)},\ldots,\bs X_{(k+1)}$ are pre-treatment, and by Assumptions~\crossref{A-anticipation}, \crossref{A-transition-Z}, and~\crossref{A-Markov-2} the transitions among them are governed by the untreated block transition probabilities in both arms. The PMF of $\bs W$ therefore has the form of the mixture representation used in the proof of Proposition~\crossref{P-2}, with $(\bs X_{(1)},\ldots,\bs X_{(2k+2)})$ in place of $(\bs X_1,\ldots,\bs X_{2k+2})$ and \eqref{eq:block-kernel} in place of the first-order transition probabilities; the transition probabilities between the $\xi$'s of (\ref{eq:xi-2}) are the corresponding products, for instance $p^j_{\xi_2(0)|\xi_1(0)}(\xi_2|\xi_1)=p^j_{\bs X_{(k+1)}(0)|\bs X_{(k)}(0)}(\xi_2|\bs x_{(k)})$ and $p^j_{\xi_4(d)|\xi_3(d)}(\xi_4|\xi_3)=\prod_{m=k+2}^{2k+1}p^j_{\bs X_{(m+1)}(d)|\bs X_{(m)}(d)}(\bs x_{(m+1)}|\bs x_{(m)})$.

\begin{assumptionIDk}\label{A-P-2-k}
Parts (a), (b), and (d) of Assumption~\crossref{A-P-2} hold with the $\xi$'s defined in (\ref{eq:xi-2}) in place of those in \crosseqref{eq:xi} and with the transition probabilities \eqref{eq:block-kernel} in place of the first-order ones, where $\xi_2$, $\xi_3$, $\check\xi_2$, $\bar\xi_2$, and $\bar\xi_3$ range over $\mathcal{X}^{\ell}$ and $\xi_1$, $\xi_4$, $a_l$, $b_l$, and $e_l$ over $\mathcal{X}^{k\ell}$ (in part (b), $(\xi_2,\xi_3,\xi_4)\in\mathcal{X}^{(k+2)\ell}$); and, in place of part (c): for every $\bs x_1^{\ell}\in\mathcal{X}^{\ell}$, (i) if $k\ge 2$, $p^j_{\{\bs X_s(0)\}_{s=1}^{\ell},D}(\bs x_1^{\ell},0)>0$ for $j=1,\ldots,J$; and (ii) there exist $\{c_l\}_{l=1}^J$ with $c_l=(c_{l,\ell+1},\ldots,c_{l,(k+1)\ell})\in\mathcal{X}^{k\ell}$ such that, writing $(c_{l,1},\ldots,c_{l,\ell}):=\bs x_1^{\ell}$, the $J\times J$ matrix
$$
\begin{pmatrix}
 p^1_{\bs X_{\ell+1}^{(k+1)\ell}(0)|\bs X_1^{\ell}(0)}(c_1|\bs x_1^{\ell})& \cdots& p^J_{\bs X_{\ell+1}^{(k+1)\ell}(0)|\bs X_1^{\ell}(0)}(c_1|\bs x_1^{\ell})\\
 \vdots&\ddots &\vdots\\
 p^1_{\bs X_{\ell+1}^{(k+1)\ell}(0)|\bs X_1^{\ell}(0)}(c_J|\bs x_1^{\ell})&\cdots&p^J_{\bs X_{\ell+1}^{(k+1)\ell}(0)|\bs X_1^{\ell}(0)}(c_J|\bs x_1^{\ell})
\end{pmatrix}
$$
is non-singular, where $p^j_{\bs X_{\ell+1}^{(k+1)\ell}(0)|\bs X_1^{\ell}(0)}(c_l|\bs x_1^{\ell}):=\prod_{t=\ell+1}^{(k+1)\ell}p^j_{\bs X_t(0)|\{\bs X_s(0)\}_{s=t-\ell}^{t-1}}(c_{l,t}|c_{l,t-\ell},\ldots,c_{l,t-1})$ is the type-$j$ probability of the $k\ell$-period path $c_l$ of $\bs X_{\ell+1}^{(k+1)\ell}=(\bs X_{(2)},\ldots,\bs X_{(k+1)})$ given $\bs X_1^{\ell}(0)=\bs x_1^{\ell}$, the product of the block transition probabilities \eqref{eq:block-kernel} along the path. The rows are drawn from the $|\mathcal{X}|^{k\ell}\ge J$ values of that path. When $k=1$, part (c)(i) is vacuous and (ii) requires, for every $\xi_1\in\mathcal{X}^{\ell}$, a non-singular matrix with elements $p^j_{\xi_2(0)|\xi_1(0)}(c_l|\xi_1)$, $c_l\in\mathcal{X}^{\ell}$; when $\ell=1$, this assumption is Assumption~\crossref{A-P-2}.
\end{assumptionIDk}

Under Assumption~\ref{A-P-2-k}, the proof of Proposition~\crossref{P-2} applies with $\mathcal{X}^{\ell}$ as the outcome space: replace $(\bs X_1,\ldots,\bs X_{2k+2})$ by $(\bs X_{(1)},\ldots,\bs X_{(2k+2)})$, the $\xi$'s of \crosseqref{eq:xi} by those of (\ref{eq:xi-2}) together with the corresponding potential outcomes, the first-order transition probabilities by the block transition probabilities \eqref{eq:block-kernel}, and Assumption~\crossref{A-Markov} by Assumption~\crossref{A-Markov-2}. In particular, in the step that identifies $\{\tau^{j}_{\xi_1(0),D}(\cdot,1)\}_{j=1}^J$, the $k$-period block that follows the first period becomes the $k$-block path $\bs X_{\ell+1}^{(k+1)\ell}$ that follows $\bs X_{(1)}=\bs X_1^{\ell}$. For $k\ge 2$ we recover its type-specific probabilities from the control-arm objects identified in the preceding steps by the ratio of that step, whose denominator $\pi^{\delta(j)}p^{\delta(j)}_{\{\bs X_s(0)\}_{s=1}^{\ell},D}(\bs x_1^{\ell},0)$ is positive by Assumption~\ref{A-P-2-k}(c)(i); for $k=1$ the path is $\xi_2$ itself and $p^{\delta(j)}_{\xi_2(0)|\xi_1(0)}$ is identified directly. We then invert the $2\times J$ system for each $\bs x_1^{\ell}\in\mathcal{X}^{\ell}$ at the values $c_1,\ldots,c_J$ of Assumption~\ref{A-P-2-k}(c)(ii) and, for $k\ge 2$, convert the type shares given $\bs X_1^{\ell}$ into those given $\xi_1$ by Bayes' rule. This establishes identification of $\big\{\pi^{j}, p^{j}_{\xi_1(0),D}(\cdot,\cdot), p^{j}_{\xi_2(0)\mid\xi_1(0)}(\cdot\mid\cdot), p^{j}_{\xi_3(0)\mid\xi_2(0)}(\cdot\mid\cdot), p^{j}_{\xi_4(0)\mid\xi_3(0)}(\cdot\mid\cdot), p^{j}_{\xi_3(1)\mid\xi_2(1)}(\cdot\mid\cdot), p^{j}_{\xi_4(1)\mid\xi_3(1)}(\cdot\mid\cdot)\big\}_{j=1}^J$. The components of $\bs\psi$ follow by the chain rule, on partial paths of positive type-$j$ probability: $p^{j}_{\xi_2(0)|\xi_1(0)}$ and $p^{j}_{\xi_3(d)|\xi_2(d)}$ are the block transition probabilities \eqref{eq:block-kernel} for $m=k$ and $m=k+1$; the initial distribution and those for $m\le k-1$ follow from $p^{j}_{\xi_1(0),D}$, and those for $m\ge k+2$ from $p^{j}_{\xi_4(d)|\xi_3(d)}$, in each case by summing out the later blocks and dividing; and then, for $t=m\ell+1,\ldots,(m+1)\ell$, the $\ell$th-order transition probability $p^j_{\bs X_t(d)|\{\bs X_s(d)\}_{s=t-\ell}^{t-1}}(\bs x_t|\bs x_{t-\ell}^{t-1})$ as the ratio of the sums of $p^j_{\bs X_{(m+1)}(d)|\bs X_{(m)}(d)}(\cdot|\bs x_{(m)})$ over the coordinates after $t$ and after $t-1$. This proves part~(a).
\end{revblock}

Part~(b) follows similarly by repeating the proof of 
Proposition~\crossref{P-1} under Assumption~\crossref{A-Markov-2}
in place of Assumption~\crossref{A-Markov}. \qedsymbol

\subsection{Proof of Proposition \crossref{P-estimation}}

The consistency of the MLE $\hat{\bs\psi}$ follows directly from Theorem~2.5 of \citet{Newey1994}.
Specifically, condition~(i) of that theorem is satisfied by Proposition~\crossref{P-2}, while conditions~(ii)–(iv) are implied by Assumption~\crossref{A-MLE}(a).
Asymptotic normality then follows from Theorem~3.3 of \citet{Newey1994}, whose conditions~(i)–(v) are readily verified under Assumption~\crossref{A-MLE}.
Hence, $\hat{\bs\psi} \xrightarrow{p} \bs\psi^*$ and
\begin{align}
&\sqrt n(\hat{\bs\psi}-\bs\psi^*)=\bs I(\bs\psi^*)^{-1}\,\frac{1}{\sqrt n}\sum_{i=1}^n \bs S(\bs W_i;\bs\psi^*)+o_p(1)\ \Rightarrow \ N(0,  \bs I(\bs\psi^*)^{-1}),\label{eq:asy-linear}%
\end{align}
where $\bs S(\bs W_i;\bs\psi^*)$ is the individual score function evaluated at $\bs\psi^*$ for the MLE in (\crossref{mle}).

By continuity of the posterior mapping, the consistency of $\hat{\bs\psi}$ implies that
\begin{equation}\label{eq:uniform-tau}
\sup_i \big| \hat\tau_i^j - \tau^j(\bs W_i; \bs\psi^*) \big| \xrightarrow{p} 0,
\qquad
\tau^{j}(\bs W; \bs\psi^*) := \Pr(Z=j \mid \bs W; \bs\psi^*).
\end{equation}
Furthermore,  Assumption~\crossref{A-MLE}(a) ensures that
$\E[\tau^j(\bs W; \bs\psi)]$ and
$\E[\tau^j(\bs W; \bs\psi)
  \mathbf{1}\{\bs X_{T_0} = \bs x_{T_0}, D=d\}]$
are strictly positive in a neighborhood of~$\bs\psi^*$.

Then, in (\crossref{eq:P-X0}), 
$\widehat{\Pr}(\bs X_{T_0} = \bs x_{T_0} \mid D=d, Z=j)
= \frac{\frac{1}{n}\sum_{i=1}^n
  \mathbf{1}\{\bs X_{iT_0} = \bs x_{T_0}, D_i = d\}\,
  \tau^j(\bs W_i; \bs\psi^*) + o_p(1)}
{\frac{1}{n}\sum_{i=1}^n\mathbf{1}\{D_i=d\} \tau^j(\bs W_i; \bs\psi^*) + o_p(1)} $
$\xrightarrow{p}
\frac{\E\!\big[\mathbf{1}\{\bs X_{T_0} = \bs x_{T_0}, D = d\}\, \tau^j(\bs W; \bs\psi^*)\big]}
{\E\!\big[\mathbf{1}\{D=d\}\tau^j(\bs W; \bs\psi^*)\big]} 
= \Pr(\bs X_{T_0} = \bs x_{T_0} \mid D=d, Z=j),$  
where the first equality follows from~\eqref{eq:uniform-tau}, the second from the Law of Large Numbers and the Continuous Mapping Theorem, and the last from
$\E[\tau^j(\bs W; \bs\psi^*) g(\bs W)]
  = \E[\mathbf{1}\{Z=j\} g(\bs W)]$
for any measurable function~$g$.  

Identical arguments apply to
$\widehat{\E}\!\left[\bs X_t \mid
  \bs X_{T_0} = \bs x_{T_0}, D=d, Z=j\right]$ and $\widehat{\Pr}(Z=j\mid D=1)$ 
in~(\crossref{eq:E-Xt}) and (\crossref{eq:ATT-estimator}).
Hence, $\widehat{\E}\!\left[\bs X_t \mid
  \bs X_{T_0} = \bs x_{T_0}, D=d, Z=j\right]
 \xrightarrow{p}
\E\!\left[\bs X_t \mid
  \bs X_{T_0} = \bs x_{T_0}, D=d, Z=j\right]$ and
 $ \widehat{\Pr}(Z=j\mid D=1)  \xrightarrow{p}  {\Pr}(Z=j\mid D=1).$

Both $\hat{\bs\mu}_{t}^{ATT,j}$ and $\hat{\bs\mu}_{t}^{ATT}$  in (\crossref{eq:LTATT-estimator}) and (\crossref{eq:ATT-estimator})
are continuously differentiable functions of these probabilities and expectations.
Therefore, by the Continuous Mapping Theorem, 
$\hat{\bs\mu}_{t}^{ATT,j} \xrightarrow{p} \bs\mu_{t}^{ATT,j}$ and  $
\hat{\bs\mu}_{t}^{ATT} \xrightarrow{p} \bs\mu_{t}^{ATT},$ 
which establishes part~(a).

Let $\hat{\bs m}_n(\hat{\bs\psi})$ denote the vector collecting the estimators for the probabilities and expectations defined in 
(\crossref{eq:P-X0})-(\crossref{eq:ATT-estimator}).
Specifically, for each $j=1,2,\ldots,J$ and $(\bs x_{T_0}, d)\in \mathcal{X}\times\{0,1\}$,
$
\hat{m}_{1,n}^{j,\bs x_{T_0}}(\hat{\bs\psi})
 =
\frac{
  \frac{1}{n}\sum_{i=1}^n 
  \mathbf{1}\{\bs X_{iT_0}=\bs x_{T_0}, D_i=1\}\,
  \tau^j(\bs W_i;\hat{\bs\psi})
}{
  \frac{1}{n}\sum_{i=1}^n \mathbf{1}\{D_i=1\} \tau^j(\bs W_i;\hat{\bs\psi})
}$,\\ 
$\hat{m}_{2,n}^{j,\bs x_{T_0},d}(\hat{\bs\psi})
 =
\frac{
  \frac{1}{n}\sum_{i=1}^n 
  \bs X_{it}\,
  \mathbf{1}\{\bs X_{iT_0}=\bs x_{T_0}, D_i=d\}\,
  \tau^j(\bs W_i;\hat{\bs\psi})
}{
  \frac{1}{n}\sum_{i=1}^n 
  \mathbf{1}\{\bs X_{iT_0}=\bs x_{T_0}, D_i=d\}\,
  \tau^j(\bs W_i;\hat{\bs\psi})
}$, and
$\hat{m}_{3,n}^j(\hat{\bs\psi})= \frac{ \frac{1}{n}\sum_{i=1}^n \mathbf{1}\{D_i=1\}\hat\tau_i^j}{ \frac{1}{n}\sum_{i=1}^n \mathbf{1}\{D_i=1\}}$, 
and $\hat{\bs m}_n(\hat{\bs\psi})$ stacks all such elements into a vector.
Let $\bs m(\bs\psi)$ denote the population counterpart, with elements
$m_{1}^{j,\bs x_{T_0}}(\bs\psi)
=
\frac{\E\left[\mathbf{1}\{\bs X_{T_0}=\bs x_{T_0},\, D=1\}\,
  \tau^j(\bs W;\bs\psi)\right]}
{\E\left[\mathbf{1}\{D=1\}\,\tau^j(\bs W;\bs\psi)\right]}$
and analogously for $m_{2}^{j,\bs x_{T_0},d,t}$ and $m_{3}^{j}$.
Because $\tau^j(\bs W;\bs\psi^*) = \Pr(Z=j \mid \bs W)$ under correct
specification and the remaining arguments of each ratio are functions
of $\bs W$, the law of iterated expectations implies that
$\bs m(\bs\psi^*)$ equals the vector of population probabilities and
expectations in (\crossref{eq:P-X0})--(\crossref{eq:ATT-estimator}).
   Note that
\begin{align}
\sqrt{n}\big(\hat{\bs m}(\hat{\bs\psi})-\bs m(\bs\psi^*)\big)
=  \underbrace{\sqrt n\!\left(\hat{\bs m}_{n}(\bs\psi^*)-\bs m(\bs\psi^*)\right)}_{(I)}
+\underbrace{\sqrt n\!\left(\hat{\bs m}_{n}(\hat{\bs\psi})-\hat{\bs m}_{n}(\bs\psi^*)\right)}_{(II)}.\label{eq:m}
\end{align}

For term (II), apply a mean-value (componentwise) expansion of the map 
$\bs\psi \mapsto \hat{\bs m}_{n}(\bs\psi)$: there exists
$\tilde{\bs\psi}$ on the line segment between $\hat{\bs\psi}$ and $\bs\psi^*$ such that 
$(II)
= \Big(\nabla_{\bs\psi}\hat{\bs m}_{n}(\tilde{\bs\psi})\Big)^\top \sqrt n\,(\hat{\bs\psi}-\bs\psi^*)$, 
where $\nabla_{\bs\psi}\hat{\bs m}_{n}(\tilde{\bs\psi})=\partial \hat{\bs m}_{n}(\bs\psi)/\partial \bs\psi|_{\bs\psi=\tilde{\bs\psi}}$.
Since $\hat{\bs\psi}\xrightarrow{p}\bs\psi^*$, we have $\tilde{\bs\psi}\xrightarrow{p}\bs\psi^*$. 
Under the Uniform Law of Large Numbers for the class 
$\{\nabla_{\bs\psi}\hat{\bs m}_{n}(\bs\psi): \bs\psi \in \mathcal N(\bs\psi^*)\}$, which holds because $\mathcal{W}$ is finite, combined with continuity of $\tau^j(\cdot;\bs\psi)$ and its derivatives in a neighborhood $\mathcal N(\bs\psi^*)$, it follows that
$\nabla_{\bs\psi}\hat{\bs m}_{n}(\tilde{\bs\psi}) \ \xrightarrow{p}\ \nabla_{\bs\psi}\bs m(\bs\psi^*)$.
Combining this with the asymptotic linear representation of the MLE in  (\ref{eq:asy-linear}), 
we obtain
\begin{align}
(II) =\frac{1}{\sqrt n} \sum_{i=1}^n\bs A^* \bs S(\bs W_i;\bs\psi^*) \;+\; o_p(1),\label{eq:II}
\end{align}
where
$\bs A^*:= \Big(\nabla_{\bs\psi} \bs m(\bs\psi^*)\Big)^\top \bs I(\bs\psi^*)^{-1} $.

For term (I), each component of $\hat{\bs m}_n(\bs\psi^*)$ is a ratio of empirical means, e.g.
$\hat m_{1,n}^{j,\bs x_{T_0}}(\bs\psi^*)
=  {n^{-1}\sum_{i=1}^n  a_{\bs\psi^*}}/{n^{-1}\sum_{i=1}^n  b_{\bs\psi^*}},$ 
 where
$a_{\bs\psi^*}(\bs W):=\tau^j(\bs W;\bs\psi^*)\,\mathbf 1\{\bs X_{T_0}=\bs x_{T_0},D=1\}$ and  
$b_{\bs\psi^*}(\bs W):= \tau^j(\bs W;\bs\psi^*)\mathbf{1}\{D_i=1\}.$ 
Let $A_i:=a_{\bs\psi^*}(\bs W_i)$ and $B_i:=b_{\bs\psi^*}(\bs W_i)$ with 
$\mu_A:=\E[A_i]$ and $\mu_B:=\E[ B_i]>0$.
Write $\hat m_{1,n}^{j,\bs x_{T_0}}(\bs\psi^*)=g(\bar A_n,\bar B_n)$ with $g(a,b)=a/b$ and 
$\bar A_n=n^{-1}\sum_{i=1}^n  A_i$, $\bar B_n=n^{-1}\sum_{i=1}^n  B_i$.
By the multivariate CLT and a first-order delta method, 
$\sqrt n\big(\hat m_{1,n}^{j,\bs x_{T_0}}(\bs\psi^*)-m_1^{j,\bs x_{T_0}}(\bs\psi^*)\big)
=
\nabla g(\mu_A,\mu_B)^\top
\sqrt n
\begin{pmatrix}
\bar A_n-\mu_A\\
\bar B_n-\mu_B
\end{pmatrix}
+o_p(1).$ 
Since $\nabla g(\mu_A,\mu_B)=(1/\mu_B,\,-\mu_A/\mu_B^2)^\top$, this equals
$\frac{1}{\sqrt n}\sum_{i=1}^n
\left[
\frac{A_i-\mu_A}{\mu_B}
-
\frac{\mu_A}{\mu_B^2}(B_i-\mu_B)
\right]
+o_p(1)
=
\frac{1}{\sqrt n}\sum_{i=1}^n$ $ \phi_1^{j,\bs x_{T_0}}(\bs W_i;\bs\psi^*)+o_p(1),$ 
where, using $m_1^{j,\bs x_{T_0}}(\bs\psi^*)=\mu_A/\mu_B$,
\\
$\phi_1^{j,\bs x_{T_0}}(\bs W;\bs\psi^*)
=\frac{\tau^j(\bs W;\bs\psi^*)}{\E[\mathbf{1}\{D=1\} \tau^j(\bs W;\bs\psi^*)]}
\Big[\mathbf 1\{\bs X_{T_0}=\bs x_{T_0},D=1\}-\mathbf{1}\{D=1\}m_1^{j,\bs x_{T_0}}(\bs\psi^*)\Big].$ 
Analogously, $\phi_2^{j,\bs x_{T_0},d}(\bs W;\bs\psi^*)
=
\frac{\tau^j(\bs W;\bs\psi^*)\mathbf 1\{\bs X_{T_0}=\bs x_{T_0},D=d\}}
{\E[\tau^j(\bs W;\bs\psi^*)\mathbf 1\{\bs X_{T_0}=\bs x_{T_0},D=d\}]}
\big[\bs X_t-m_2^{j,\bs x_{T_0},d}(\bs\psi^*)\big]$ and
$\phi_3^j(\bs W;\bs\psi^*)=\frac{\mathbf 1\{D=1\}}{\Pr(D=1)}
\Big(\tau^j(\bs W;\bs\psi^*)-\E[\tau^j(\bs W;\bs\psi^*)\mid D=1]\Big).$
Stacking all such elements into a vector $\bs\phi(\bs W;\bs\psi^*)$ gives
\begin{equation}
(I)=\frac{1}{\sqrt n}\sum_{i=1}^n \bs\phi(\bs W_i;\bs\psi^*)+o_p(1),
\quad
\E[\bs\phi(\bs W;\bs\psi^*)]=\bs 0.\label{eq:I}
\end{equation}

 Therefore, combining (\ref{eq:m})–(\ref{eq:I}), we obtain the asymptotic linear representation:
\[
\sqrt{n}\big(\hat{\bs m}(\hat{\bs\psi})-\bs m(\bs\psi^*)\big)
=\frac{1}{\sqrt{n}}\sum_{i=1}^n\Big(
\bs\phi(\bs W_i;\bs\psi^*)+\bs A^*\,\bs S(\bs W_i;\bs\psi^*)\Big)+o_p(1).
\]

Let $\bs h(\cdot)$ denote the continuously differentiable function that maps the moment vector $\bs m$ to the vector of LTATT and ATT parameters, 
$\boldsymbol\theta^* = \bs h(\bs m(\bs\psi^*))$, 
as defined by (\crossref{eq:LTATT-estimator})–(\crossref{eq:ATT-estimator}).
Denote its Jacobian by 
$\nabla_{\bs m}\bs h(\bs m) := \partial \bs h(\bs m)/\partial \bs m$.
Then, by the multivariate delta method,
\[
\sqrt{n}(\hat{\boldsymbol\theta}-\boldsymbol\theta^*)
=\frac{1}{\sqrt{n}}\sum_{i=1}^n\bs\eta(\bs W_i)+o_p(1)
\;\Rightarrow\;
\mathcal N(0,\bs V),
\]
where the influence function and asymptotic covariance matrix are given by
\[
\bs\eta(\bs W_i)
:=\left(\nabla_{\bs m}\bs h(\bs m(\bs \psi^*))\right)^\top\,
\big[\bs\phi(\bs W_i;\bs\psi^*)+\bs A^*\,\bs S(\bs W_i;\bs\psi^*)\big],
\quad
\bs V=\E[\bs\eta(\bs W_i)\bs\eta(\bs W_i)^\top].
\]
This establishes part (b). \qedsymbol

\subsection{Proof of Proposition~\ref{prop:bic-postsel}}\label{subsec:prop-bic-postsel}
Let \(\hat p_n\) denote the empirical PMF of \(\{\bs W_i\}_{i=1}^n\),
\(\ell_n(\bs\psi^*)\) the sample log-likelihood at the true parameter, and
\(K(q,p) := \sum_{\bs w} q(\bs w)\log\{q(\bs w)/p(\bs w)\}\). Each component
assigns each cell a product of \(T\) entries in \([\epsilon, 1-\epsilon]\),
so cells satisfy \(p_{\bs\psi}(\bs w) \ge \epsilon^{T}\) on every
\(\Theta_J\) and \(p^*(\bs w) \ge \epsilon^{T+1}\).
\emph{Under-selection} (void when \(J_0 = 1\)). For \(J < J_0\), the identity
\(\sum_{\bs w} \hat p_n \log(p_{\bs\psi}/p^*) = -K(p^*, p_{\bs\psi}) +
\sum_{\bs w}(\hat p_n - p^*)\log(p_{\bs\psi}/p^*)\) and the bound
\(|\log(p_{\bs\psi}/p^*)| \le L := (T+1)\log(1/\epsilon)\) give
\(n^{-1}\{\ell_J - \ell_n(\bs\psi^*)\} \le -\delta + L\|\hat p_n - p^*\|_1\),
where \(\delta := \min_{J < J_0}\inf_{\Theta_J} K(p^*, p_{\bs\psi})\) is
positive: each infimum is attained (\(\Theta_J\) compact,
\(\bs\psi \mapsto K(p^*, p_{\bs\psi})\) continuous with cells in
\([\epsilon^T, 1]\)), and a zero would exhibit a structural representation
of \(p^*\) with fewer than \(J_0\) components. Since
\(\{\widehat{J}_{\mathrm{BIC}} = J\}\) requires
\(2(\ell_J - \ell_{J_0}) \ge \{\mathrm{df}(J) - \mathrm{df}(J_0)\}\log n\)
and \(\ell_{J_0} \ge \ell_n(\bs\psi^*)\) (as
\(\bs\psi^* \in \Theta_{J_0}\)), it forces
\(L\|\hat p_n - p^*\|_1 \ge \delta - \mathrm{df}(J_{\max})(\log n)/(2n) \ge
\delta/2\) for all large \(n\), an event of probability at most
\(2^{M} e^{-n\delta^2/(8L^2)}\) by the Bretagnolle--Huber--Carol inequality
\citep[Proposition~A.6.6]{vandervaartwellner96book}.
\emph{Over-selection.} For \(J > J_0\),
\(\{\widehat{J}_{\mathrm{BIC}} = J\}\) requires
\(\Delta \log n \le 2(\ell_J - \ell_{J_0}) \le
2\{\ell_{\mathrm{sat}} - \ell_n(\bs\psi^*)\} = 2nK(\hat p_n, p^*)\), because
\(\mathrm{df}\) is affine in \(J\) with increment \(\Delta\),
\(\ell_J \le \ell_{\mathrm{sat}} := n\sum_{\bs w}\hat p_n(\bs w)\log \hat
p_n(\bs w)\) (the saturated multinomial maximum), and
\(\ell_{J_0} \ge \ell_n(\bs\psi^*)\); the right side is the deviance of the
\emph{simple} null \(p^*\) in the \(M\)-cell multinomial model, under which
no parameter is estimated, so no regularity of the mixture family enters.
Theorem~2.1 of \citet{kallenberg1985aos} at threshold
\(d_n := \Delta(\log n)/(2n)\), whose hypotheses (\(p^*(\bs w) \ge a/M\)
with \(a := M\epsilon^{T+1}\); \(2nd_n = \Delta\log n \ge M\);
\(d_n < 0.15\)) hold for all large \(n\), bounds its tail probability by
\(C(\log n)^{(M-2)/2}\, n^{-\Delta/2}\) for all such \(n\); the union over
the at most \(J_{\max} - J_0\) over-fitting candidates enlarges \(C\) only,
and extending the bound by the trivial bound one below the threshold keeps
it in the class \(n^{-\Delta/2 + o(1)}\), since a finite prefix does not
affect the exponent's limit. The two bounds and
\(\Delta \ge 4\) prove (i) and (ii).
\emph{(iii).} On \(\{\widehat{J}_{\mathrm{BIC}} = J_0\}\), an event of
probability \(1 - o(n^{-1/2})\), the reported fit is the constrained
maximizer \crosseqref{mle} at the true \(J_0\), so
\(\hat{\bs\mu}^{\mathrm{ATT}}(\widehat{J}_{\mathrm{BIC}})\) and any
interval procedure computed from that fit (with common bootstrap weights)
coincide with the estimator and intervals of
Proposition~\crossref{P-estimation}; this gives the total-variation bound
and transfers both the limit law and asymptotic coverage.

\section{Transition Independence Conditional on Covariates}\label{sec:supp-covariates}
If additional covariates are available, we may relax
Assumption~\crossref{A-transition} by conditioning on discrete control
variables $V_{t}\in \mathcal{V}$.
\begin{assumptionTIC}[Transition independence conditional on covariates]\label{A-transition-x} For all $t =T_0+1,..,T$,
$\Pr\!\bigl(\bs X_{t}(0) = \bs x_{t} \bigm| \bs X_1^{T_0} = \bs x_1^{T_0},\allowbreak V_t=v_t,\allowbreak D = 1\bigr)
=\Pr\!\bigl(\bs X_{t}(0) = \bs x_{t} \bigm| \bs X_1^{T_0} = \bs x_1^{T_0},\allowbreak V_t=v_t,\allowbreak D = 0\bigr)$
holds for all $\bs x_t\in\mathcal{X}$, $\bs x_1^{T_0}\in \mathcal{X}^{T_0}$, and $v_t\in\mathcal{V}$.
\end{assumptionTIC}
Here $V_t$ collects observed covariates that are pre-determined or otherwise
unaffected by treatment; if $V_t$ can respond to treatment, the conditioning
involves post-treatment variables and the assumption becomes a cross-world
restriction. With common support extended to conditioning on
$(\bs X_1^{T_0},V_t)$, the ATT is identified by averaging the
covariate-conditional control means over the treated group's joint
distribution of $(\bs X_1^{T_0},V_t)$, in direct analogy with
Proposition~\crossref{prop:att-identification-vanilla}.

\section{Extension to Staggered Treatment Adoption}\label{sec:staggered}
\begin{revblock}
Our identification strategy can be extended to staggered treatment adoption. First, we replace a treatment group $D_{i}$ with a treatment \textit{cohort}: the period in which treatment is given, $G_{i} \in \mathcal{G}$ with $\mathcal{G} \subset \mathcal{T}$, where we set $G_i=0$ for never-treated units and rule out adoption in period 0. Let $\mathcal{G}_{+}:=\mathcal{G}\setminus\{0\}$ denote the set of treated cohorts. Following \citet{CALLAWAY2021} and \citet{Sun2021}, let $Y_{it}(0)$ denote the potential outcome if unit $i$ is never treated and let $Y_{it}(g)$ denote the potential outcome if treatment begins in period $g\in\mathcal{G}_{+}$. For brevity, we focus on binary outcomes with $Y_{it}(g)\in\mathcal{Y}:=\{0,1\}$. Let $H_{i,g-1}(0):=(Y_{i0}(0),\ldots,Y_{i,g-1}(0))$ and $H_{i,g-1}:=(Y_{i0},\ldots,Y_{i,g-1})$ denote the pre-treatment potential and observed histories.

By extending \cite{CALLAWAY2021} to our transition independence setup, one can use never-treated units, not-yet-treated units, or both as controls. The following assumptions state the two restrictions separately.
\begin{samepage}
\begin{assumption}[Transition independence with a never-treated group]\label{A-transition-staggered-never-treated}
For each $g\in\mathcal{G}_{+}$ and $t\geq g$,
  \begin{align}
  &\Pr\bigl(Y_{it}(0)=y_t \mid H_{i,g-1}(0)=h,G_i=g\bigr)\nonumber\\
  &=\Pr\bigl(Y_{it}(0)=y_t \mid H_{i,g-1}(0)=h,G_i=0\bigr)
  \end{align}
for every $y_t\in\mathcal{Y}$ and every $h\in\mathcal{Y}^{g}$ belonging to both the support of $H_{i,g-1}(0)$ given $G_i=g$ and its support given $G_i=0$.
\end{assumption}
\end{samepage}

\begin{samepage}
\begin{assumption}[Transition independence with not-yet-treated groups]\label{A-transition-staggered-not-yet-treated}
For each $g\in\mathcal{G}_{+}$ and $t\geq g$,
  \begin{align}
  &\Pr\bigl(Y_{it}(0)=y_t \mid H_{i,g-1}(0)=h,G_i=g\bigr)\nonumber\\
  &=\Pr\bigl(Y_{it}(0)=y_t \mid H_{i,g-1}(0)=h,G_i>t,G_i\neq0\bigr)
  \end{align}
for every $y_t\in\mathcal{Y}$ and every $h\in\mathcal{Y}^{g}$ belonging to both the support of $H_{i,g-1}(0)$ given $G_i=g$ and its support given $G_i>t$ with $G_i\neq0$.
\end{assumption}
\end{samepage}

The two restrictions permit the researcher to choose controls according to the research design. Let $\mathcal{G}(g,t)$ denote the selected control set for the cohort-period pair $(g,t)$, defined by:
\[
\mathcal{G}(g,t)=
\begin{cases}
\{0\}, & \text{if only never-treated units are used},\\
\{g'\in\mathcal{G}_{+}:g'>t\}, & \text{if only not-yet-treated units are used},\\
\{0\}\cup\{g'\in\mathcal{G}_{+}:g'>t\}, & \text{if both groups are used}.
\end{cases}
\]
Let $\overline{G}$ denote the latest evaluation period supported by the selected controls:
\[
\overline{G}=
\begin{cases}
T, & \text{if never-treated units are included among the controls},\\
\max\mathcal{G}_{+}-1, & \text{if only not-yet-treated units are used}.
\end{cases}
\]
The second case applies even when the sample also contains never-treated units because an analysis using only not-yet-treated controls excludes them.

\begin{samepage}
\begin{assumption}[No anticipatory effects for staggered treatment adoption]\label{A-anticipation-staggered}
For each $g\in\mathcal{G}_{+}$, $Y_{it}(g)=Y_{it}(0)$ for all units with $G_i=g$ and all $t<g$.
\end{assumption}
\end{samepage}

\begin{samepage}
\begin{assumption}[Common support for staggered treatment adoption]\label{A-overlap-staggered}
There exists a constant $\epsilon>0$ such that
\[
\Pr\bigl(G_i\in\mathcal{G}(g,t)\mid H_{i,g-1}=h\bigr)\geq\epsilon
\]
for every $g\in\mathcal{G}_{+}$, every $g\leq t\leq\overline{G}$, and every $h\in\mathcal{Y}^{g}$ satisfying $\Pr(H_{i,g-1}=h\mid G_i=g)>0$.
\end{assumption}
\end{samepage}
Common support requires the selected controls to represent every pre-treatment history observed in cohort $g$, providing a staggered-adoption counterpart of Assumption~\crossref{A-overlap}.

For every $g\in\mathcal{G}_{+}$ and $t\geq g$, define the cohort-specific ATT by
\[
ATT_{g,t}:=\mathbb{E}\bigl[Y_{it}(g)-Y_{it}(0)\mid G_i=g\bigr].
\]

The following proposition establishes identification of the cohort-specific ATT analogous to Proposition~\crossref{prop:att-identification-vanilla} of the main text.

\begin{proposition}\label{prop:att-identification-staggered}
Fix one of the three control-group choices above. Suppose that the corresponding transition independence assumption (Assumption~\ref{A-transition-staggered-never-treated} or~\ref{A-transition-staggered-not-yet-treated}), or both assumptions when both groups are used, holds together with Assumptions~\ref{A-anticipation-staggered} and \ref{A-overlap-staggered}. Then, for every $g\in\mathcal{G}_{+}$ and $g\leq t\leq\overline{G}$, $ATT_{g,t}$ is identified by
\[
ATT_{g,t}=\mathbb{E}\left[Y_{it}-\mathbb{E}\bigl[Y_{it}\mid H_{i,g-1},G_i\in\mathcal{G}(g,t)\bigr]\mid G_i=g\right].
\]
\end{proposition}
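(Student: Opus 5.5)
The argument mirrors the proof of Proposition~\ref{prop:att-identification-vanilla}, applied one cohort--period pair $(g,t)$ at a time. Fix $g\in\mathcal{G}_{+}$ and $g\le t\le\overline{G}$, and split
\[
ATT_{g,t}=\mathbb{E}[Y_{it}(g)\mid G_i=g]-\mathbb{E}[Y_{it}(0)\mid G_i=g].
\]
For the first term we use that the observed outcome of a unit in cohort $g$ at $t\ge g$ is $Y_{it}(g)$. Hence $\mathbb{E}[Y_{it}(g)\mid G_i=g]=\mathbb{E}[Y_{it}\mid G_i=g]$. The whole task is therefore the counterfactual term $\mathbb{E}[Y_{it}(0)\mid G_i=g]$.

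For the counterfactual term we proceed in four steps.

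\textbf{Step 1.} Apply the law of iterated expectations, conditioning on the pre-treatment history:
\[
\mathbb{E}[Y_{it}(0)\mid G_i=g]=\mathbb{E}\bigl[\mathbb{E}[Y_{it}(0)\mid H_{i,g-1}(0),G_i=g]\mid G_i=g\bigr].
\]
By Assumption~\ref{A-anticipation-staggered}, $H_{i,g-1}(0)=H_{i,g-1}$ for cohort-$g$ units, since all periods before $g$ are untreated.

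\textbf{Step 2.} Replace the cohort-$g$ conditional mean by the control-group conditional mean. For each history $h$ in the support of $H_{i,g-1}$ given $G_i=g$, Assumption~\ref{A-overlap-staggered} gives $\Pr(G_i\in\mathcal{G}(g,t)\mid H_{i,g-1}=h)\ge\epsilon>0$, so the conditional mean on the control side is well defined. The three control-group choices are handled as follows.
\begin{itemize}
\item \emph{Never-treated controls.} Assumption~\ref{A-transition-staggered-never-treated} directly equates the conditional law of $Y_{it}(0)$ given $h$ for $G_i=g$ and $G_i=0$.
\item \emph{Not-yet-treated controls.} Assumption~\ref{A-transition-staggered-not-yet-treated} equates the cohort-$g$ conditional law with that of the pooled set $\{G_i>t,G_i\neq0\}=\mathcal{G}(g,t)$.
\item \emph{Both groups.} The two assumptions give a common conditional law $q_t(\cdot\mid h)$ on each subgroup. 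The law on the union is a mixture of the two subgroup laws, with weights $\Pr(G_i=0\mid h,G_i\in\mathcal{G}(g,t))$ and its complement. Since both components equal $q_t(\cdot\mid h)$, the mixture equals it as well.
\end{itemize}
One technicality arises when $h$ is in the support of one control subgroup but not the other. The transition-independence assumptions are stated only on the intersection of supports, so in that case the mixture reduces to the subgroup that has positive mass, and the argument goes through unchanged. Establishing this is the main obstacle, and I would handle it first by writing the union conditional as the explicit weighted sum above and dropping zero-weight terms.

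\textbf{Step 3.} Move from potential to observed outcomes on the control side. Every unit in $\mathcal{G}(g,t)$ is untreated through $t$: never-treated units trivially, and units with $g'>t$ by Assumption~\ref{A-anticipation-staggered} applied to cohort $g'$. Hence $Y_{it}=Y_{it}(0)$ and $H_{i,g-1}=H_{i,g-1}(0)$ for these units. It follows that $\mathbb{E}[Y_{it}(0)\mid H_{i,g-1}(0)=h,G_i\in\mathcal{G}(g,t)]=\mathbb{E}[Y_{it}\mid H_{i,g-1}=h,G_i\in\mathcal{G}(g,t)]$.

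\textbf{Step 4.} Substitute the result of Step 3 back into the outer expectation of Step 1, taken over the observed distribution of $H_{i,g-1}$ given $G_i=g$. Combined with the first term, this yields the stated formula for $ATT_{g,t}$. The restriction $t\le\overline{G}$ guarantees that $\mathcal{G}(g,t)$ is nonempty. In the not-yet-treated case, for instance, $t\le\max\mathcal{G}_{+}-1$ ensures that some cohort $g'>t$ exists.
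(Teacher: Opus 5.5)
Your proposal is correct and takes the same route as the paper's proof. Both arguments average over the cohort-$g$ history distribution by iterated expectations, then use transition independence and no anticipation to replace the cohort-$g$ conditional mean of $Y_{it}(0)$ with the observed control conditional mean, which is well defined by common support, and finally use $Y_{it}=Y_{it}(g)$ for the treated term. Your explicit mixture argument for the pooled never-/not-yet-treated control set, and your handling of histories lying in only one control subgroup's support, fill in details the paper leaves implicit, and you handle them correctly.
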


\begin{proof}
Fix a history $h$ for units in cohort $g$. 
Because the selected controls remain untreated through period $t$, their observed outcome equals $Y_{it}(0)$. The corresponding transition independence restriction and Assumption~\crossref{A-anticipation-staggered} therefore give
\[
\mathbb{E}\bigl[Y_{it}(0)\mid H_{i,g-1}(0)=h,G_i=g\bigr]
=\mathbb{E}\bigl[Y_{it}\mid H_{i,g-1}=h,G_i\in\mathcal{G}(g,t)\bigr],
\]
whose right-hand side is well-defined by Assumption~\crossref{A-overlap-staggered}. Iterated expectations identify $\mathbb{E}[Y_{it}(0)\mid G_i=g]$ by averaging the right-hand side over the history distribution in cohort $g$. Since $Y_{it}=Y_{it}(g)$ for cohort $g$ in periods $t\geq g$, subtracting $\mathbb{E}[Y_{it}(0)\mid G_i=g]$ from its observed mean yields the stated expression.
\end{proof}
\end{revblock}

There are several ways to aggregate the cohort-specific ATTs to derive the overall ATT.  One approach is to use the average of the cohort-specific ATTs, weighted by the cohort distribution, as follows:
    $ATT_t = \sum_{g \in \mathcal{G} \setminus \{0\}; g \leq t } \Pr(G = g \mid G \neq 0, g \leq t )  ATT_{g,t},$
where the weights are given by the cohort distribution $\Pr(G = g \mid G \neq 0, g \leq t )$.  The choice of aggregation method depends on the research question and the context of the data. For further examples of alternative aggregation of cohort-specific ATTs and discussions on their corresponding weights for estimation, see \cite{CALLAWAY2021}.

\section{Equivalence of Transition Independence and Conditional Parallel Trends}\label{sec:supp-equivalence}
The following proposition states formally the equivalence recorded in
Remark~\crossref{remark:equivalence} of the main text.
\begin{proposition}\label{prop:supp-equivalence}
Suppose that Assumption \crossref{A-anticipation} holds. Then Assumption
\crossref{A-transition} holds if and only if Assumption
\crossref{A-conditional-trend} holds.
\end{proposition}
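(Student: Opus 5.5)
The proof reduces each assumption, history by history, to the same statement: the conditional probability mass function of $\bs X_t(0)$ given the pre-treatment path coincides across $D=0$ and $D=1$. Fix a post-treatment period $t\ge T_0+1$ and a path $\bs x_1^{T_0}\in\mathcal{X}^{T_0}$ on which both conditional distributions are defined; Assumption~\crossref{A-overlap} guarantees this for every path with positive probability. First, by Assumption~\crossref{A-anticipation} and \eqref{eq:observed}, $\bs X_s=\bs X_s(0)$ for $s\le T_0$. So the event $\{\bs X_1^{T_0}=\bs x_1^{T_0}\}$ in Assumption~\crossref{A-conditional-trend} coincides with $\{\bs X_1^{T_0}(0)=\bs x_1^{T_0}\}$ in Assumption~\crossref{A-transition}. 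This is the only place no anticipation enters: it makes the two conditioning sets literally the same.

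Second, on this event $\bs X_{T_0}(0)=\bs x_{T_0}$ is a constant. Hence for $d\in\{0,1\}$,
\[
\E\bigl[\bs X_t(0)-\bs X_{T_0}(0)\bigm|\bs X_1^{T_0}=\bs x_1^{T_0},D=d\bigr]=\E\bigl[\bs X_t(0)\bigm|\bs X_1^{T_0}=\bs x_1^{T_0},D=d\bigr]-\bs x_{T_0}.
\]
Assumption~\crossref{A-conditional-trend} is therefore equivalent to equality across $d$ of the conditional mean vectors $\E[\bs X_t(0)\mid \bs X_1^{T_0}=\bs x_1^{T_0},D=d]$.

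Third, $\bs X_t(0)$ is a one-hot vector in $\mathcal{X}$, as defined in \eqref{eq:binary}. Its $k$th coordinate has conditional mean
\[
\Pr\bigl(\bs X_t(0)=\bs e_k\bigm|\bs X_1^{T_0}=\bs x_1^{T_0},D=d\bigr),
\]
where $\bs e_k$ is the $k$th unit vector, and $\mathcal{X}=\{\bs e_1,\dots,\bs e_K\}$. So equality of the mean vectors is equality of the entire conditional PMFs over $\mathcal{X}$, which is exactly \eqref{eq:transition}. The converse direction follows by reading the same chain backward: equal PMFs give equal means, and subtracting the constant $\bs x_{T_0}$ gives equal mean changes. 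Quantifying over all $t$ and all paths completes the equivalence.

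No step here is a serious obstacle. The only care needed is in the support bookkeeping. Both assumptions should be read on paths with positive probability in both arms, and Assumption~\crossref{A-overlap} ensures that every path with positive probability has positive probability in both arms. Given that, the conditional expectations and probabilities in the two statements are well defined on exactly the same set of paths.
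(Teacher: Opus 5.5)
Your proof is correct and follows essentially the same route as the paper's: no anticipation makes the two conditioning events coincide, $\bs X_{T_0}(0)$ is fixed by the conditioning path and cancels, and because $\bs X_t(0)$ is one-hot its conditional mean vector is its conditional PMF over $\mathcal{X}$, so equal mean changes and equal transition probabilities are the same statement history by history. One quibble: the proposition assumes only no anticipation, so you should not appeal to common support, and you do not need to, since the equivalence holds path by path on whatever set of histories both conditionals are defined (the paper simply restricts to positive-probability histories).
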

\begin{proof}
Assumption \crossref{A-anticipation} lets the observed pre-treatment path
stand in for the untreated path in the conditioning sets, on histories with
positive probability. Suppose that Assumption~\crossref{A-transition} holds.
Let $\bar{\bs{x}}^{(k)} \in \mathcal{X}$ denote the value of $\bs X_t(0)$ corresponding to $Y_t(0) = \bar y^{(k)}$, where the $k$-th element equals $1$ and all other elements equal $0$.
Then, $\Pr(\bs X_t(0)= \bar{\bs{x}}^{(k)}  \mid \bs X_{1}^{T_0}(0)=\bs x_{1}^{T_0},D=d)=  \Pr(X^{(k)}_t(0)= 1 \mid \bs X_{1}^{T_0}(0)=\bs x_{1}^{T_0},D=d)=  \E[X^{(k)}_t(0)\mid \bs X_{1}^{T_0}(0)=\bs x_{1}^{T_0},D=d]$. Therefore, Assumption \crossref{A-transition} implies that  $\mathbb{E}[\bs X_{t} (0)\mid \bs X_{1}^{T_0}(0)=\bs x_{1}^{T_0},D=0]=\mathbb{E}[\bs X_{t} (0)\mid  \bs X_{1}^{T_0}(0)=\bs x_{1}^{T_0},D=1]$ for all $t \geq T_0+1$, and Assumption \crossref{A-conditional-trend} is satisfied.

Conversely, suppose that Assumption \crossref{A-conditional-trend} holds. Then, for each $k = 1, \ldots, K$, $\mathbb{E}[ X^{(k)}_{t} (0)\mid \bs X_{1}^{T_0}(0)=\bs x_{1}^{T_0},D=0]=\mathbb{E}[X^{(k)}_{t} (0)\mid  \bs X_{1}^{T_0}(0)=\bs x_{1}^{T_0},D=1]$. Substituting $\mathbb{E}[ X^{(k)}_{t} (0)\mid \bs X_{1}^{T_0}(0)=\bs x_{1}^{T_0},D=d]= \Pr(\bs X_t(0)= \bar{\bs{x}}^{(k)}  \mid \bs X_{1}^{T_0}(0)=\bs x_{1}^{T_0},D=d)$ into both sides yields
$\Pr\left(\bs X_t(0)= \bar{\bs{x}}^{(k)}  \mid \bs X_{1}^{T_0}(0)=\bs x_{1}^{T_0},D=0\right)=\Pr\left(\bs X_t(0)=\bar{\bs{x}}^{(k)}  \mid \bs X_{1}^{T_0}(0)=\bs x_{1}^{T_0},D=1\right)$
for all $k = 1, \ldots, K$.
Since $\mathcal{X} = \{ \bar{\bs{x}}^{(1)}, \ldots,  \bar{\bs{x}}^{(K)}\}$, it follows that Assumption~\crossref{A-transition} holds.
\end{proof}

\section{Estimation Procedure and Weighted Bootstrap}\label{sec:estimation-procedure-details}

\paragraph*{Estimation Procedure.}  We provide below a detailed description of the two-stage estimator presented in the main text.  In the first stage, we estimate the latent transition probabilities using the EM algorithm; in the second stage, we compute the counterfactual expected untreated potential outcome for each latent type to derive the average treatment effect on the treated for post-treatment observations as follows:
    \begin{enumerate} 
        \setcounter{enumi}{-1}
        \item 
        Initiate the process with a given starting value, $\hat{\bs\psi}^{(0)} =(\hat{\bs\pi}^{(0)},\hat{\bs\varphi}^{1(0)},....,\hat{\bs\varphi}^{J(0)}) \in \Theta_{\bs\psi}$.
        \item Estimate $\psi$ by employing the following EM algorithm, starting with $s = 1$:
        \begin{enumerate}
            \item (E-step) Compute the posterior probabilities $\hat\tau_i^{j(s)}$ for each $i$ and $j$:
               $ \hat\tau_i^{j(s)} =  \frac{ \hat\pi^{j(s-1)} p_{\bs W}^j(\bs W_i ;\hat{ \bs \varphi}^{j(s-1)} ) }{ \sum_{k=1}^J \hat\pi^{k(s-1)} p_{\bs W }^k(\bs W_i; \hat{\bs \varphi}^{k(s-1)}) }.$
            \item (M-step) Update the mixture weights $\hat{\bs\pi}^{(s)}=(\pi^{1(s)},\ldots,\pi^{J(s)})^\top$ and type-specific transition probabilities
            \[
           \hat{\bs\varphi}^{j(s)}=\bigl\{ \{\hat p^{j(s)}_{\bs X_1(0),D}(\cdot,\cdot),\, \hat p^{j(s)}_{\bs X_t(0)|\bs X_{t-1}(0)}(\cdot|\cdot)\}_{t=2}^{T},\, \{\hat p^{j(s)}_{\bs X_t(1)|\bs X_{t-1}(1)}(\cdot|\cdot)\}_{t=T_0+1}^{T}\bigr\}
            \]
             for each latent type $j$ as
            \begin{align*}
              & \hat \pi^{j(s)} = \frac{1}{n} \sum_{i=1}^n \hat\tau_i^{j(s)},  \quad
               \hat p^{j(s)}_{\bs X_1(0),D}(\bs x_1,d) =\frac{ \sum_{i=1}^n \mathbf{1}\{\bs X_{i1} = \bs x_1, D_i=d\} \hat\tau_i^{j(s)}}{ \sum_{i=1}^n \hat\tau_i^{j(s)}},   \\ 
          &  \hat    p_{\bs X_t(0) \mid \bs X_{t-1}(0)}^{j(s)}(\bs x_t | \bs x_{t-1}) = \left\{
            \begin{array}{ll}  
  \frac{ \sum_{i=1}^n \mathbf{1}\{ \bs X_{it} = \bs x_t, \bs X_{it-1}= \bs x_{t-1}\} \hat\tau_i^{j(s)}  }{ \sum_{i=1}^n \mathbf{1}\{ \bs X_{it-1}= \bs x_{t-1}\} \hat\tau_i^{j(s)} } &\text{for $t=2,...,T_0$,}\\
          \frac{\sum_{i=1}^n \mathbf{1}\{  \bs X_{it} = \bs x_t, \bs X_{it-1}= \bs x_{t-1},D_i=0 \} \hat\tau_i^{j(s)} }{\sum_{i=1}^n \mathbf{1}\{  \bs X_{it-1}= \bs x_{t-1},D_i=0\} \hat\tau_i^{j(s)} }&\text{for $t=T_0+1,...,T$,}\\ 
            \end{array}\right.\\
           & \hat    p_{\bs X_t(1) \mid \bs X_{t-1}(1)}^{j(s)}(\bs x_t | \bs x_{t-1}) =  \frac{\sum_{i=1}^n \mathbf{1}\{  \bs X_{it} = \bs x_t,\bs X_{it-1}= \bs x_{t-1},D_i=1 \} \hat\tau_i^{j(s)}  }{\sum_{i=1}^n \mathbf{1}\{     \bs X_{it-1}= \bs x_{t-1},D_i=1\} \hat\tau_i^{j(s)} }\\
            &\hspace{12em}\text{for $t=T_0+1,\ldots,T$.}
            \end{align*} 
            \item Set $s = s + 1$ and repeat until convergence.
        \end{enumerate}
        \item  Given the first-step estimate, we estimate LTATTs and ATTs as in (\crossref{eq:LTATT-estimator}) and (\crossref{eq:ATT-estimator}), respectively.
            \end{enumerate}

\paragraph*{Weighted Bootstrap.}
To consistently estimate the asymptotic covariance matrix $\boldsymbol{V}$, we employ a nonparametric weighted bootstrap. 
Specifically, we generate $B$ bootstrap replications $\{\hat{\boldsymbol{\theta}}^{(b)}\}_{b=1}^B$ by reweighting each observation’s contribution in the estimation process as follows:  

\begin{enumerate}
    \item[{1.}] \textit{Generate weights.}  
    For each bootstrap replication $b = 1, \ldots, B$, draw $n$ i.i.d.\ positive weights 
    $\{\zeta_i^{(b)}\}_{i=1}^n$ from the standard exponential distribution, $\mathrm{Exp}(1)$.

    \item[{2.}] \textit{Weighted MLE.}  
    Re-estimate the parameter vector $\boldsymbol{\psi}$ by maximizing the weighted log-likelihood function  
       $ \sum_{i=1}^n \zeta_i^{(b)} \log p_{\boldsymbol{W}}(\boldsymbol{W}_i; \boldsymbol{\psi}),$
    implementing the weights $\zeta_i^{(b)}$ within the Expectation–Maximization (EM) algorithm.

    \item[{3.}] \textit{Labeling correction.}  
    After each weighted MLE, re-order the components of $\hat{\boldsymbol{\psi}}^{(b)}$ according to the identification constraint in Assumption~\crossref{A-sample}(c), ensuring consistent labeling of latent types across bootstrap replications.

    \item[{4.}] \textit{Weighted sample analogue.}  
    Compute the bootstrap estimate $\hat{\boldsymbol{\theta}}^{(b)}$ by replacing all summations in the sample-analogue estimators
    \crosseqref{eq:LTATT-estimator}, \crosseqref{eq:P-X0}, and \crosseqref{eq:E-Xt}
    with their weighted counterparts, incorporating both the random weights $\zeta_i^{(b)}$ and the estimated posterior probabilities $\hat{\tau}_i^{j(b)}$ obtained from $\hat{\boldsymbol{\psi}}^{(b)}$.

    \item[{5.}] \textit{Covariance estimation.}  
    Estimate the asymptotic covariance matrix $\boldsymbol{V}$ as
       ${ \hat{\boldsymbol{V}}}/{n}
        = \frac{1}{B - 1}
        \sum_{b=1}^B
        \left(
            \hat{\boldsymbol{\theta}}^{(b)} - \bar{\boldsymbol{\theta}}
        \right)
        \left(
            \hat{\boldsymbol{\theta}}^{(b)} - \bar{\boldsymbol{\theta}}
        \right)^{\!\top}$ with
       $ \bar{\boldsymbol{\theta}}
        = \frac{1}{B}
        \sum_{b=1}^B \hat{\boldsymbol{\theta}}^{(b)}.$
\end{enumerate}

\paragraph*{Cluster-Level Weighting.}
When the data exhibit cluster-level dependence (e.g., individuals within states, or firms within industries), the bootstrap weights are drawn at the cluster level rather than the individual level. Let $C$ denote the number of clusters. For each bootstrap replication $b$, we draw independent exponential weights $\{w_c^{(b)}\}_{c=1}^C$ with $w_c^{(b)} \sim \mathrm{Exp}(1)$, normalize them to $\tilde{w}_c^{(b)} = w_c^{(b)} / \sum_{c'=1}^C w_{c'}^{(b)}$, and assign each unit $i$ in cluster $c$ the weight $\zeta_i^{(b)} = \tilde{w}_c^{(b)}$. This weighted bootstrap procedure at the cluster level follows \cite{CameronGelbachMiller2008} and \cite{MacKinnonWebb2017}, preserving the within-cluster correlation structure while allowing for inference that is robust to cluster-level heterogeneity.

\phantomsection
\paragraph*{Uniform Confidence Intervals.}\label{para:uniform-ci}
We construct uniform (simultaneous) confidence intervals for sequences of treatment effect estimates $\{\hat{\theta}_t\}_{t=1}^T$ using a studentized bootstrap procedure. Unlike pointwise confidence intervals that control coverage at each individual time period, uniform confidence intervals provide simultaneous coverage across all periods:
$\Pr\left(\theta_t \in \text{CI}_t \text{ for all } t = 1, \ldots, T\right) \geq 1 - \alpha.$
Let $\hat{\theta}_t$ denote the point estimate for period $t$. Given $B$ bootstrap estimates $\{\hat{\theta}^{(b)}_t\}_{b=1}^B$ from the weighted bootstrap procedure:
\begin{enumerate}
    \item Compute the bootstrap standard error for each period:
    $\hat{\sigma}_t = \sqrt{\frac{1}{B-1} \sum_{b=1}^{B} \left(\hat{\theta}^{(b)}_t - \bar{\theta}_t\right)^2},$
    where $\bar{\theta}_t = B^{-1} \sum_{b=1}^{B} \hat{\theta}^{(b)}_t$.

    \item For each bootstrap sample $b$ and period $t$, compute the studentized statistic:
 $   t^{(b)}_t = \frac{\hat{\theta}^{(b)}_t - \hat{\theta}_t}{\hat{\sigma}_t}.$

    \item For each bootstrap sample, compute the supremum of the absolute studentized statistics:
  $  M^{(b)} = \max_{t \in \{1, \ldots, T\}} |t^{(b)}_t|.$

    \item Obtain the critical value $c_{1-\alpha}$ as the $(1-\alpha)$-quantile of $\{M^{(1)}, \ldots, M^{(B)}\}$.

    \item Construct the uniform confidence interval for each $t$:
   $ \left[\hat{\theta}_t - c_{1-\alpha} \cdot \hat{\sigma}_t, \hat{\theta}_t + c_{1-\alpha} \cdot \hat{\sigma}_t\right].$
\end{enumerate}
The key distinction from pointwise intervals is that uniform intervals use a single critical value $c_{1-\alpha}$ derived from the joint distribution of the supremum statistic, rather than period-specific quantiles. This accounts for the multiple testing problem inherent in examining effects across many time periods. When the uniform confidence band lies entirely above (or below) zero for a range of periods, we can conclude that the treatment effect is significantly positive (or negative) for all those periods simultaneously at the $(1-\alpha)$ confidence level.

\paragraph*{Computational details.}
In the empirical applications, the EM algorithm is initialized using a two-stage multistart procedure to reduce the risk of converging to a poor local optimum. In the short-run stage, a large pool of initial values is run for a small number of EM iterations; the most promising candidates are then run to convergence in the long-run stage. The initialization pool combines two sources. First, we draw a set of random starting values as Dirichlet draws over the outcome and transition simplices with a mix of concentration parameters, so that the pool spans both sparse and smooth distributions; in the empirical applications we use a large fixed pool of $20{,}000$ random starts for every latent dimension $J$ in all three applications. Second, we construct a set of structured, data-driven warm starts: $5{,}000$ random Dirichlet partitions of the units into $J$ groups, each converted into a starting value by a single maximization step on the implied group assignment. The random and structured starts are ranked together by their short-run likelihood, and the top $2{,}000$ candidates are run to convergence in the long-run stage. Standard errors are obtained via the weighted bootstrap with $B = 500$ replications; each replication refines the full-sample estimate on the reweighted data rather than repeating the global multistart search.

\begin{revblock}
\section{Consistency of BIC Selection}\label{sec:bic-postsel}

This section shows that the BIC rule of
Section~\ref{sec:estimation-procedure-details} recovers the number of latent
types fast enough that inference treating the selected number as known
remains asymptotically valid. Let \(p^*\) denote the true PMF of \(\bs W\)
on \(\mathcal{W}\), let \(M := |\mathcal{W}| = 2K^{T}\), and let \(J_0\) be
the smallest \(J\) for which \(p^*\) is a \(J\)-component mixture whose
components have the structural form of Assumptions~\crossref{A-sample}(a)
and~\crossref{A-Markov}: a type-specific PMF for \((\bs X_1, D)\) and
type-specific first-order Markov transitions, shared across treatment arms
before \(T_0\) and arm-specific afterward (absent this restriction on the
components, minimality would be vacuous). Each \(\ell_J\) entering
\(\mathrm{BIC}(J)\) is maximized over the space \(\Theta_J\), with component
PMF entries in \([\epsilon, 1-\epsilon]\) for the \(\epsilon\) of
Assumption~\crossref{A-MLE}(a) and mixture weights unrestricted on the
simplex (the two-sided weight bounds of Assumption~\crossref{A-MLE}(a)
would empty the one-component candidate, where \(\pi^1 = 1 > 1-\epsilon\)).
Throughout, \(\widehat{J}_{\mathrm{BIC}}\) denotes the smallest minimizer
of \(\mathrm{BIC}(J)\) over \(\{1,\ldots,J_{\max}\}\).
Write
\(\hat{\bs\mu}^{\mathrm{ATT}}(J) :=
(\hat{\bs\mu}^{ATT}_{T_0+1},\ldots,\hat{\bs\mu}^{ATT}_{T})\) for the ATT
estimator \crosseqref{eq:ATT-estimator} computed from the estimated $J$-component mixture model,
with population counterpart \(\bs\mu^{\mathrm{ATT}}\).

\begin{proposition}[BIC selection of the number of latent types]\label{prop:bic-postsel}
Suppose Assumptions~\crossref{A-sample}(a),(c) and~\crossref{A-Markov} hold,
Assumption~\crossref{A-MLE}(a),(b) holds at \(J = J_0\) (its weight bounds
read as void when \(J_0 = 1\), where \(\pi^1 = 1\)), and
\(J_0 \le J_{\max}\). Then:
(i) \(\Pr(\widehat{J}_{\mathrm{BIC}} = J_0) \to 1\).
(ii) \(\Pr(\widehat{J}_{\mathrm{BIC}} < J_0) \le C e^{-cn}\) for some
constants \(c, C > 0\), and
\(\Pr(\widehat{J}_{\mathrm{BIC}} > J_0) \le n^{-\Delta/2 + o(1)}\), where
\(\Delta := \mathrm{df}(J_0+1) - \mathrm{df}(J_0) = 1 + (2K-1) +
n_{\mathrm{pre}}K(K-1) + 2 n_{\mathrm{post}}K(K-1) \ge 4\); in particular
\(\Pr(\widehat{J}_{\mathrm{BIC}} \ne J_0) = o(n^{-1/2})\). 
(iii) If, in addition, the hypotheses of
Proposition~\crossref{P-estimation} hold at \(J = J_0\) with
Assumption~\crossref{A-sample}(b) replaced by selection of
\(\widehat{J}_{\mathrm{BIC}}\) over \(\{1,\ldots,J_{\max}\}\), then the laws
of \(\hat{\bs\mu}^{\mathrm{ATT}}(\widehat{J}_{\mathrm{BIC}})\) and
\(\hat{\bs\mu}^{\mathrm{ATT}}(J_0)\) differ by at most
\(\Pr(\widehat{J}_{\mathrm{BIC}} \ne J_0) = o(n^{-1/2})\) in total variation;
hence \(\sqrt{n}\{\hat{\bs\mu}^{\mathrm{ATT}}(\widehat{J}_{\mathrm{BIC}}) -
\bs\mu^{\mathrm{ATT}}\} \Rightarrow \mathcal{N}(0, \bs V^{\mathrm{ATT}})\),
where \(\bs V^{\mathrm{ATT}}\) is the ATT block of \(\bs V\) in
Proposition~\crossref{P-estimation}, and any confidence interval for
\(\bs\mu^{\mathrm{ATT}}\) with asymptotic coverage \(1-\alpha\) under known
\(J_0\), including the fixed-model weighted-bootstrap intervals of
Section~\ref{sec:estimation-procedure-details}, retains asymptotic coverage
\(1-\alpha\) at \(\widehat{J}_{\mathrm{BIC}}\).
\end{proposition}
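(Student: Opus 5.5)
The argument exploits the finiteness of the support $\mathcal{W}$, with $M=|\mathcal{W}|$ cells. Every candidate fit $p_{\bs\psi}$, $\bs\psi\in\Theta_J$, is then a PMF on $M$ cells whose entries are bounded below by $\epsilon^{T}$, because each component multiplies $T$ entries lying in $[\epsilon,1-\epsilon]$. Consequently $p^*\ge \epsilon^{T+1}$, and $|\log(p_{\bs\psi}/p^*)|\le L:=(T+1)\log(1/\epsilon)$ uniformly. Write $\ell_J$ for the maximized log-likelihood over $\Theta_J$ and $\mathrm{df}(J)$ for its parameter count, which is affine in $J$ with increment $\Delta$. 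We then bound under- and over-selection separately.

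\emph{Under-selection.} For $J<J_0$ we decompose
\[
n^{-1}\{\ell_n(\bs\psi)-\ell_n(\bs\psi^*)\}=-K(p^*,p_{\bs\psi})+\textstyle\sum_{\bs w}(\hat p_n-p^*)\log(p_{\bs\psi}/p^*).
\]
This gives $n^{-1}(\ell_J-\ell_n(\bs\psi^*))\le-\delta+L\|\hat p_n-p^*\|_1$, where $\delta=\min_{J<J_0}\inf_{\Theta_J}K(p^*,p_{\bs\psi})$. The first real step is to show $\delta>0$. Compactness of $\Theta_J$ and continuity of $K$ (the cells are bounded away from zero) ensure the infimum is attained. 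A zero value would represent $p^*$ with fewer than $J_0$ structural components, contradicting the minimality of $J_0$. Next, selecting $J$ requires $2(\ell_J-\ell_{J_0})\ge(\mathrm{df}(J)-\mathrm{df}(J_0))\log n$. Since $\ell_{J_0}\ge\ell_n(\bs\psi^*)$, this forces $\|\hat p_n-p^*\|_1\ge\delta/(2L)$ for large $n$. A multinomial $L_1$ concentration bound (Bretagnolle--Huber--Carol) then gives probability at most $2^Me^{-cn}$.

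\emph{Over-selection.} For $J>J_0$, bound $\ell_J$ by the saturated multinomial log-likelihood and $\ell_{J_0}$ from below by $\ell_n(\bs\psi^*)$. Over-selection then implies $2nK(\hat p_n,p^*)\ge\Delta\log n$. The left side is the likelihood-ratio deviance for the \emph{simple} hypothesis $p^*$ in an $M$-cell multinomial, so none of the irregularity of mixtures at the boundary enters. I would apply a large-deviation bound for this statistic at moderate thresholds (Kallenberg's 1985 bound, or the Chernoff/method-of-types bound $\Pr(K(\hat p_n,p^*)\ge d)\le (n+1)^M e^{-nd}$). With $d=\Delta\log n/(2n)$, this yields $n^{-\Delta/2+o(1)}$. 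The crude types bound loses a factor $n^M$, so I would try the sharper Kallenberg bound, which gives $C(\log n)^{(M-2)/2}n^{-\Delta/2}$. I expect this to be the main obstacle: obtaining a polynomial rate with exponent exactly $\Delta/2$ requires a moderate-deviation deviance bound whose prefactor is only polylogarithmic, and its regularity hypotheses (cells bounded below, threshold conditions) must be checked. A union over the finitely many $J\le J_{\max}$ completes the step. Counting parameters gives $\Delta\ge4$, hence a total error of $o(n^{-1/2})$, which proves (i) and (ii).

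For (iii), on the event $\{\widehat J_{\mathrm{BIC}}=J_0\}$ the post-selection estimator, and any interval computed from the same fit with the same bootstrap weights, coincides exactly with the fixed-$J_0$ object of Proposition~\ref{P-estimation}. The two laws therefore differ in total variation by at most $\Pr(\widehat J_{\mathrm{BIC}}\ne J_0)=o(n^{-1/2})$. This transfers both the normal limit and the asymptotic coverage.
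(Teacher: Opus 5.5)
Your proposal is correct and follows the paper's proof essentially step for step. It uses the uniform cell bounds $p_{\bs\psi}\ge\epsilon^{T}$ and $L=(T+1)\log(1/\epsilon)$; for under-selection, the Kullback--Leibler decomposition with $\delta>0$ obtained from attainment and the minimality of $J_0$, followed by the Bretagnolle--Huber--Carol inequality; for over-selection, the saturated-likelihood reduction to the simple-null multinomial deviance $2nK(\hat p_n,p^*)\ge\Delta\log n$, bounded by Kallenberg's moderate-deviation theorem; and for (iii), the same coupling argument on $\{\widehat J_{\mathrm{BIC}}=J_0\}$.
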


The proof is given in Section~\ref{subsec:prop-bic-postsel}.

Two remarks connect this result to the literature. First, our penalty
\(\tfrac{1}{2}\mathrm{df}(J)\log n\) satisfies the penalty condition (C1) of
\citet{keribin00sankhya} for every \(n \ge 2\) (at \(n = 1\) the
\(\log n\) penalty vanishes), as \citeauthor{keribin00sankhya} notes for BIC; the
identifiability and empirical-process conditions there ((Id), (P1)--(P4)) are tailored to smooth
families such as Gaussian and Poisson mixtures, so rather than verifying them
for our finite-support components we prove consistency directly, with the
deviance bound above playing their role, while Theorem~4 of
\citet{leroux92as} would rule out under-estimation only. Second, the
guarantees are pointwise in the data-generating process rather than uniform:
along sequences with vanishing separation between \(J_0 - 1\) and \(J_0\)
components, mis-selection need not vanish, and in the linear-model
benchmark of \citet{leeb2005et} the maximal scaled risk of
post-model-selection estimators diverges; the same caution applies to any
consistent selection rule, ours included. 
\end{revblock}

\section{Construction of Type-Specific Transition Probabilities}\label{sec:empirical-transition-probability-construction}

Testing for transition independence across pre-treatment periods (Remark~\crossref{remark:testing-for-transition-independence-Z}) in the presence of latent heterogeneity in transition dynamics requires estimating type-specific transition probabilities for both treated and control units during pre-treatment periods.

In this section, we outline the procedure for constructing the type-specific transition probabilities across pre-treatment periods for each treatment status and latent type. First, we estimate the parameters of the latent models using the observed outcomes and treatment status via the EM algorithm described in Section~\ref{sec:estimation-procedure-details}.  Next, we compute the type-specific probabilities for each unit using the estimated parameters. Lastly, we estimate the type-specific transition probabilities for each latent type by weighting pre-treatment observations using the estimated type-specific probabilities. Specifically, the procedure is as follows:

\begin{enumerate}
  \item Estimate the model parameters for each latent type $\hat{\boldsymbol{\psi}}$ via the EM algorithm described in Section~\ref{sec:estimation-procedure-details}.
  \item Obtain the estimated type-specific probabilities $\hat \tau_i^j$ for each unit $i$ and latent type $j$ from \crosseqref{posterior-estimate} from the estimated parameters $\hat{\boldsymbol{\psi}}$.
  \item For each latent type $j$ and treatment status $d \in \{0,1\}$, estimate the type-specific transition probabilities across pre-treatment periods $t = 2, \ldots, T_0$ as follows:\\
    $  \hat{p}_{\boldsymbol{X}_t(0) | \boldsymbol{X}_{t-1}(0), D}^{j}(\boldsymbol{x}_t | \boldsymbol{x}_{t-1}, d) 
      = 
      \frac{
          \sum_{i=1}^{n} 
          \mathbf{1}\{\boldsymbol{X}_{it} = \boldsymbol{x}_t, \boldsymbol{X}_{it-1} = \boldsymbol{x}_{t-1}, D_i = d\}
          \hat{\tau}_{ij}
      }{
          \sum_{i=1}^{n} 
          \mathbf{1}\{\boldsymbol{X}_{it-1} = \bold{x}_{t-1}, D_i = d\}
          \hat{\tau}_{ij} 
      }.$
\end{enumerate}

\section{Type-Specific Flow Decomposition}

\begin{corollary}[Type-specific flow decomposition]\label{cor:flow-decomposition-by-type}
Under Assumptions \crossref{A-transition-Z} and  \crossref{A-Markov}, the LTATT on the $k$th outcome for latent type $j$ decomposes as
\begin{equation}\label{eq:flow-decomposition-by-type}
\begin{aligned}
&\E\!\big[X_{t}^{(k)}(1)-X_{t}^{(k)}(0)\mid D=1, Z=j\big] \\
&=
\sum_{y \neq \bar{y}^{(k)} }
\underbrace{\Big\{\Pr(Y_t=\bar{y}^{(k)}\mid Y_{T_0}\!=\!y, D\!=\!1, Z\!=\!j)
-\Pr(Y_{t}=\bar{y}^{(k)}\mid Y_{T_0}\!=\!y, D\!=\!0, Z\!=\!j)\Big\}}_{\text{inflow CATT from } y \text{ for type } j}\\
&\hspace{12em}\times\;
\Pr(Y_{T_0}=y\mid D=1, Z=j)\\
&\quad
-
\sum_{y \neq \bar{y}^{(k)}}
\underbrace{\Big\{\Pr(Y_{t}=y\mid Y_{T_0}\!=\!\bar{y}^{(k)}, D\!=\!1, Z\!=\!j)
-\Pr(Y_{t}=y\mid Y_{T_0}\!=\!\bar{y}^{(k)}, D\!=\!0, Z\!=\!j)\Big\}}_{\text{outflow CATT to } y \text{ for type } j}\\
&\hspace{12em}\times\;
\Pr(Y_{T_0}=\bar{y}^{(k)}\mid D=1, Z=j).
\end{aligned}
\end{equation}
To connect the type-specific terms to the aggregate inflow and outflow effects in \crosseqref{eq:flow-decomposition}, define the type-specific channel contributions
\begin{align*}
I_{j,t}^{y\to k}
&=
\Big\{\Pr(Y_t=\bar{y}^{(k)}\mid Y_{T_0}=y, D=1, Z=j)
-\Pr(Y_t=\bar{y}^{(k)}\mid Y_{T_0}=y, D=0, Z=j)\Big\}\\
&\hspace{4em}\times \Pr(Y_{T_0}=y\mid D=1,Z=j),\qquad y\neq \bar{y}^{(k)},\\
O_{j,t}^{k\to y}
&=
-\Big\{\Pr(Y_t=y\mid Y_{T_0}=\bar{y}^{(k)}, D=1, Z=j)
-\Pr(Y_t=y\mid Y_{T_0}=\bar{y}^{(k)}, D=0, Z=j)\Big\}\\
&\hspace{4em}\times \Pr(Y_{T_0}=\bar{y}^{(k)}\mid D=1,Z=j),\qquad y\neq \bar{y}^{(k)}.
\end{align*}
The outflow contribution $O_{j,t}^{k\to y}$ includes the minus sign from \eqref{eq:flow-decomposition-by-type}, so positive values increase the focal outcome and negative values reduce it. The aggregate channel contributions are therefore
$I_t^{y\to k}=\sum_{j=1}^{J}\Pr(Z=j\mid D=1) I_{j,t}^{y\to k}$ and 
$O_t^{k\to y}=\sum_{j=1}^{J}\Pr(Z=j\mid D=1) O_{j,t}^{k\to y}.$
These aggregate channels generalize the main-text flow decomposition, reducing to it under marginal transition independence (or when $J=1$). For $J>1$, the control part of each channel is the identified treated-type posterior mixture implied by the latent-type model, so the channels should be read as latent-type counterfactual contributions to $\E[X_t^{(k)}(1)-X_t^{(k)}(0)\mid D=1]$ rather than as the raw marginal control conditional in \crosseqref{eq:flow-decomposition}. They satisfy
$\E\!\big[X_t^{(k)}(1)-X_t^{(k)}(0)\mid D=1\big]
=
\sum_{y\neq \bar{y}^{(k)}} I_t^{y\to k}
+
\sum_{y\neq \bar{y}^{(k)}} O_t^{k\to y}.$
Equivalently, the aggregate ATT on the $k$th outcome is recovered by aggregating over types:
\begin{equation}\label{eq:flow-decomposition-aggregate}
\E\!\big[X_{t}^{(k)}(1)-X_{t}^{(k)}(0)\mid D=1\big]
= \sum_{j=1}^{J} \Pr(Z=j\mid D=1)\,\E\!\big[X_{t}^{(k)}(1)-X_{t}^{(k)}(0)\mid D=1, Z=j\big].
\end{equation}
\end{corollary}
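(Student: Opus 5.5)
The plan is to prove the type-specific decomposition \eqref{eq:flow-decomposition-by-type} first, and then get the aggregate statements by averaging over types with the treated type shares.

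\emph{Step 1: reduce the LTATT to a conditional-on-$Y_{T_0}$ representation.} Invoke Proposition~\ref{P-1}. Under Assumptions~\ref{A-transition-Z} and~\ref{A-Markov}, it gives
\[
\E[X^{(k)}_t(1)-X^{(k)}_t(0)\mid D=1,Z=j]=\sum_{y\in\mathcal{Y}}p^j_{T_0}(y)\bigl\{P^{1}_j(\bar y^{(k)}\mid y)-P^{0}_j(\bar y^{(k)}\mid y)\bigr\}.
\]
Here $P^d_j(y'\mid y):=\Pr(Y_t=y'\mid Y_{T_0}=y,D=d,Z=j)$. The $k$th component of $m^{j,d}_t(\bs x)$ is exactly this probability, because $X^{(k)}_t$ is an indicator. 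The derivation of \eqref{eq:LTATT-identification} uses only law of iterated expectations, transition independence within type, and the Markov property, so it carries over verbatim.

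\emph{Step 2: split the sum by origin state.} Separate the terms with $y\neq\bar y^{(k)}$ from the term with $y=\bar y^{(k)}$.
\begin{itemize}
\item The $y\neq\bar y^{(k)}$ terms are already the inflow terms of \eqref{eq:flow-decomposition-by-type}.
\item For the own-state term, use that each conditional PMF sums to one: $P^d_j(\bar y^{(k)}\mid\bar y^{(k)})=1-\sum_{y\neq\bar y^{(k)}}P^d_j(y\mid\bar y^{(k)})$ for $d=0,1$.
\item Taking the difference across $d$, the constants $1$ cancel. This leaves $-\sum_{y\neq\bar y^{(k)}}\{P^1_j(y\mid\bar y^{(k)})-P^0_j(y\mid\bar y^{(k)})\}$, multiplied by $\Pr(Y_{T_0}=\bar y^{(k)}\mid D=1,Z=j)$.
\end{itemize}
This is precisely the outflow sum, which proves \eqref{eq:flow-decomposition-by-type}.

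\emph{Step 3: aggregate over types.} By the definitions of $I^{y\to k}_{j,t}$ and $O^{k\to y}_{j,t}$, the LTATT equals $\sum_y I_{j,t}^{y\to k}+\sum_y O_{j,t}^{k\to y}$. Equation \eqref{eq:flow-decomposition-aggregate} is the law of total probability over $Z$ given $D=1$, i.e.\ equation \eqref{ATT}. Multiplying the type-level identity by $\Pr(Z=j\mid D=1)$, summing over $j$, and interchanging the finite sums gives the aggregate identity in terms of $I_t^{y\to k}$ and $O_t^{k\to y}$.

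\emph{Step 4: reduction to the main-text decomposition.} For $J=1$, the type conditioning is vacuous and the channels coincide with \eqref{eq:flow-decomposition}. Under marginal transition independence with one lag, one reruns Steps 1–2 without $Z$: Proposition~\ref{prop:att-identification-vanilla} combined with the Markov restriction reduces conditioning to $Y_{T_0}$.

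The only delicate point I expect is Step 1, namely justifying conditioning on $Y_{T_0}$ alone for the treated potential outcome $X_t(1)$. That requires the Markov property of $\bs X_t(1)$ within $(D=1,Z=j)$, together with the no-anticipation identity $\bs X_{T_0}(1)=\bs X_{T_0}$. I will cite Assumption~\ref{A-anticipation}, which is maintained through Assumption~\ref{A-sample}, explicitly at that step. Everything else is algebraic rearrangement.
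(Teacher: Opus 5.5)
Your Steps 1--3 are correct and follow the paper's proof. You invoke Proposition~\ref{P-1}, read the $k$th coordinate of $m^{j,d}_t$ as the transition probability into $\bar y^{(k)}$, split the sum over $Y_{T_0}$ at $\bar y^{(k)}$, and weight by $\Pr(Z=j\mid D=1)$, with \eqref{eq:flow-decomposition-aggregate} being \eqref{ATT}. Your sum-to-one identity for $P^d_j(\cdot\mid\bar y^{(k)})$ is exactly what turns the own-state term into the outflow sum; the paper's ``partitioning the summation'' leaves this step implicit.

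Your closing remark misplaces the difficulty, though. The treated term needs no Markov property: $\E[X^{(k)}_t(1)\mid D=1,Z=j]=\sum_y p^j_{T_0}(y)P^1_j(\bar y^{(k)}\mid y)$ is plain iterated expectations over the observed $Y_{T_0}$, using $\bs X_t=\bs X_t(1)$ on $\{D=1\}$. The Markov property and overlap are needed on the untreated side. There, Assumption~\ref{A-transition-Z} conditions on full histories and has to be reduced to $Y_{T_0}$, which is the part you correctly delegate to Proposition~\ref{P-1}.

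What your proposal does not establish is the Step 4 claim for $J>1$. Rerunning Steps 1--2 without $Z$ shows only that the main-text decomposition \eqref{eq:flow-decomposition} holds under marginal one-lag transition independence, i.e.\ that the ATT equals the sum of the main-text channels. The statement asserts more: each aggregate channel $I_t^{y\to k}$, $O_t^{k\to y}$ equals the corresponding main-text inflow or outflow effect. Two decompositions of the same total need not agree term by term. Also, with $J>1$ the untreated process is a mixture of Markov chains and is generally not Markov marginally. So Proposition~\ref{prop:att-identification-vanilla} plus Assumption~\ref{A-Markov} does not reduce the marginal conditioning to $Y_{T_0}$; the one-lag condition has to be imposed directly.

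The missing computation is short:
\begin{itemize}
\item By Bayes' rule, $\Pr(Z=j\mid D=1)\Pr(Y_{T_0}=y\mid D=1,Z=j)=\Pr(Y_{T_0}=y\mid D=1)\Pr(Z=j\mid Y_{T_0}=y,D=1)$.
\item Hence the treated part of $I_t^{y\to k}$ sums to $\Pr(Y_t=\bar y^{(k)},Y_{T_0}=y\mid D=1)$, which is the main-text treated part.
\item The control part is $\Pr(Y_{T_0}=y\mid D=1)\sum_j\Pr(Z=j\mid Y_{T_0}=y,D=1)\,P^0_j(\bar y^{(k)}\mid y)$.
\item The one-lag consequence of Assumptions~\ref{A-transition-Z} and~\ref{A-Markov}, stated just before Proposition~\ref{P-1}, gives $P^0_j(\bar y^{(k)}\mid y)=\Pr(Y_t(0)=\bar y^{(k)}\mid Y_{T_0}=y,D=1,Z=j)$.
\item So this treated-type posterior mixture equals $\Pr(Y_t(0)=\bar y^{(k)}\mid Y_{T_0}=y,D=1)$. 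Marginal one-lag transition independence turns it into $\Pr(Y_t=\bar y^{(k)}\mid Y_{T_0}=y,D=0)$.
\item The outflow channels follow identically with origin $\bar y^{(k)}$.
\end{itemize}
The paper's own proof is silent on this sentence of the corollary. This is where your write-up needs to go beyond it, rather than re-deriving \eqref{eq:flow-decomposition}.
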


\begin{proof}
Fix type $j$ and outcome category $k$. By Proposition~\crossref{P-1}, the type-specific ATT on the $k$th outcome is
$\E\!\big[X_t^{(k)}(1) - X_t^{(k)}(0) \mid D=1, Z=j\big] = \sum_{\bs x_{T_0}} \Pr(\bs X_{T_0} = \bs x_{T_0} \mid D=1, Z=j) \cdot \Delta_t^{j,k}(\bs x_{T_0}),$
where $\Delta_t^{j,k}(\bs x_{T_0}) := \E[X_t^{(k)} \mid \bs X_{T_0} = \bs x_{T_0}, D=1, Z=j] - \E[X_t^{(k)} \mid \bs X_{T_0} = \bs x_{T_0}, D=0, Z=j]$.
Under the first-order Markov assumption, conditioning on $\bs X_{T_0}$ reduces to $Y_{T_0}$. Partitioning the summation over $Y_{T_0}$ into $\{Y_{T_0} = \bar{y}^{(k)}\}$ (outflows) and $\{Y_{T_0} \neq \bar{y}^{(k)}\}$ (inflows) yields the stated type-specific decomposition. Multiplying each type-specific channel by $\Pr(Z=j\mid D=1)$ and summing over $j$ gives the aggregate channel contributions above. Equation~\eqref{eq:flow-decomposition-aggregate} follows from the law of iterated expectations applied to $\E[\cdot \mid D=1] = \sum_j \Pr(Z=j\mid D=1)\E[\cdot \mid D=1, Z=j]$.
\end{proof}

\vspace*{\fill}
\newpage

\makeatletter
\setlength{\@fptop}{0pt}
\setlength{\@fpsep}{18pt}
\setlength{\@fpbot}{0pt plus 1fil}
\makeatother
\section{Supplementary Figures}
\begin{figure}[!ht]
    \centering
    \caption{Counterfactual Complaint Rates from the Dodd-Frank Act.}\label{fig:complaint-counterfactual-all}
    \includegraphics[width=0.8\textwidth]{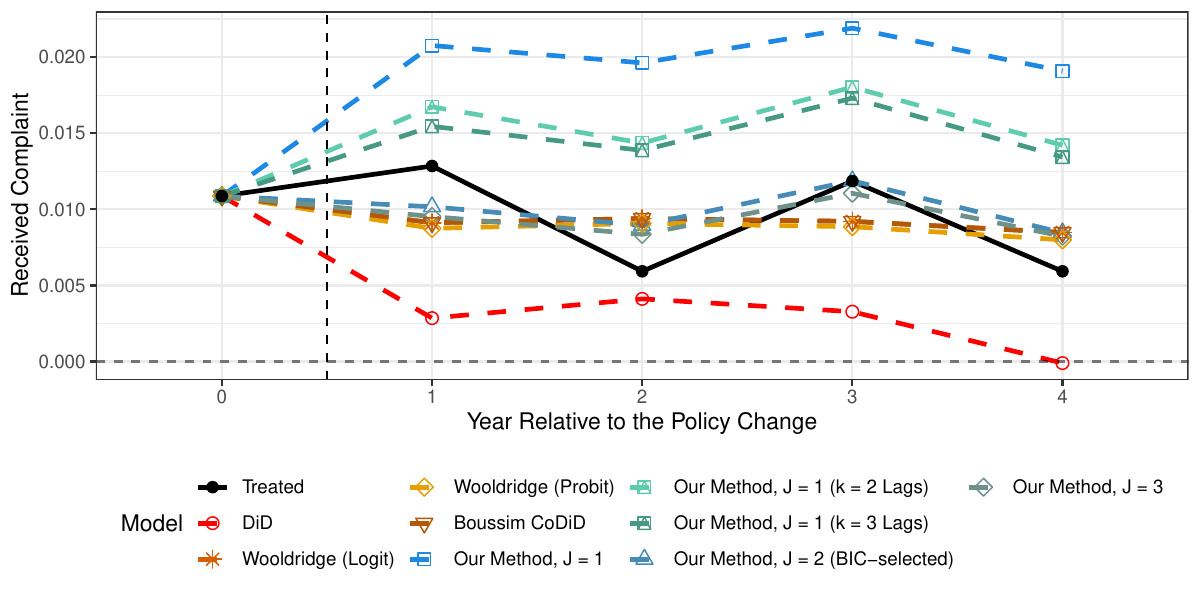}
    \fignote{Average counterfactual complaint rates for treated units in the absence of treatment, using the data of \cite{charoenwong2019does}. The black line is the observed treated path; the counterfactual restrictions and markers are as in Figure~\crossref{fig:ada-att-simple}. Transition-based counterfactuals with $J \in \{1, 2, 3\}$ latent types are constructed from estimated control transition matrices and treated pre-reform distributions; green markers are lag-augmented $J=1$ variants conditioning on $k = 2$ and $k = 3$ outcome lags.}
\end{figure}
\begin{figure}[!ht]
    \centering
    \caption{ATT Estimates of the ADA on Employment.}\label{fig:ada-att-postsel}
    \includegraphics[width=0.8\textwidth]{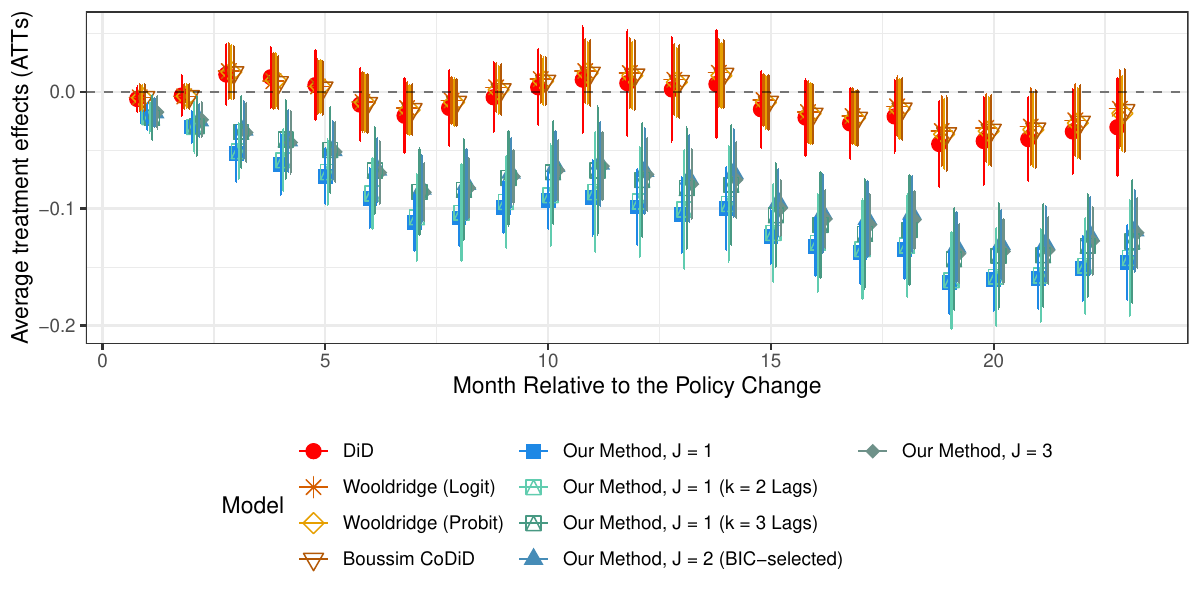}
    \fignote{Figure~\crossref{fig:ada-att-all} with the BIC-selected model ($J_{\mathrm{bic}} = 2$) labeled in the legend. Vertical bars are 95\% pointwise bootstrap confidence intervals for the individual model specifications.}
\end{figure}

\begin{figure}[p]
    \centering
    \caption{ATT Estimates of the Dodd-Frank Act on Complaint Rates.}\label{fig:complaint-att-postsel}
    \includegraphics[width=0.8\textwidth]{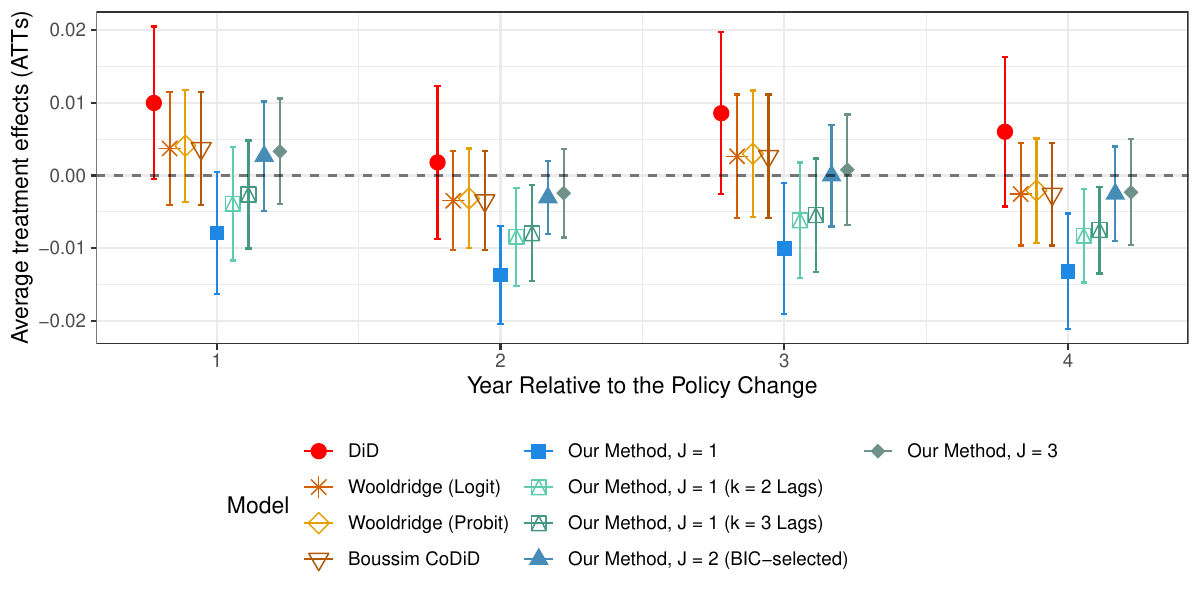}
    \fignote{Figure~\crossref{fig:complaint-att-all} with the BIC-selected model ($J_{\mathrm{bic}} = 2$) labeled in the legend. Vertical bars are 95\% pointwise bootstrap confidence intervals for the individual model specifications.}
\end{figure}

\begin{figure}[p]
    \centering
    \caption{ATT Estimates of the Norwegian Law Reform on Annual Patenting Rates: Full Comparison.}\label{fig:patenting-att-postsel}
    \includegraphics[width=0.8\textwidth]{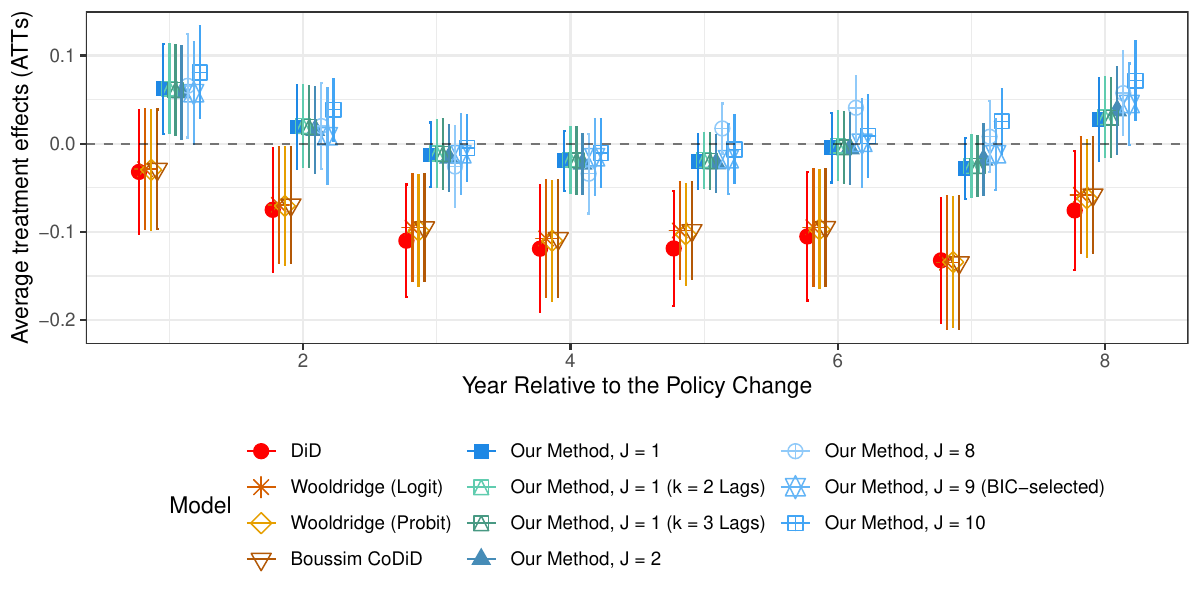}
    \fignote{Figure~\crossref{fig:patenting-att-all} extended to $J \in \{1, 2, 8, 9, 10\}$, with the BIC-selected model ($J_{\mathrm{bic}} = 9$) labeled in the legend. Vertical bars are 95\% pointwise weighted-bootstrap confidence intervals for the individual model specifications; the high-order intervals use refine-only bootstrap estimates.}
\end{figure}

\begin{figure}[p]
    \centering
    \caption{Labor Force Status Trends Before and After the ADA of 1990.}\label{fig:ada-labor-force-status-trends}
    \includegraphics[width=0.8\textwidth]{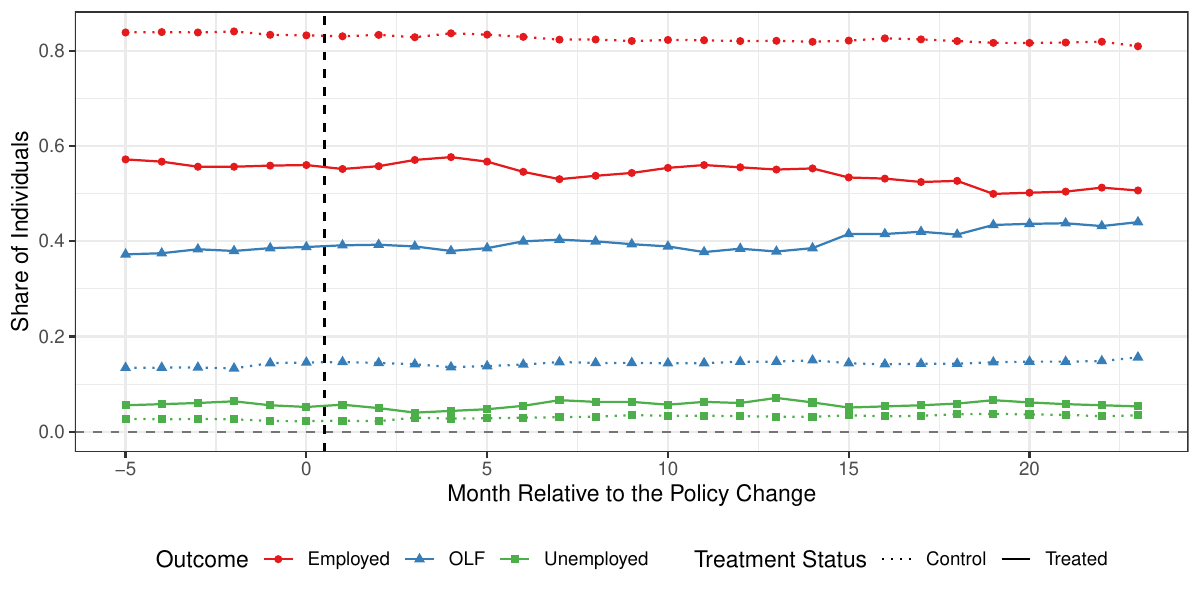}
    \fignote{Share of individuals in each labor force status (employed, unemployed, and out-of-labor-force) for individuals with disabilities (treated) and without disabilities (control) over January 1990 to May 1992. Period zero is June 1990, the last month before the ADA's July 1990 enactment, marked by the dashed line.}
\end{figure}

\begin{figure}[p]
    \centering
    \caption{Differences in Labor Force Transition Probabilities Before the ADA ($J = 1$).}\label{fig:ada-employment-transitions-J2}
    \includegraphics[width=0.8\textwidth]{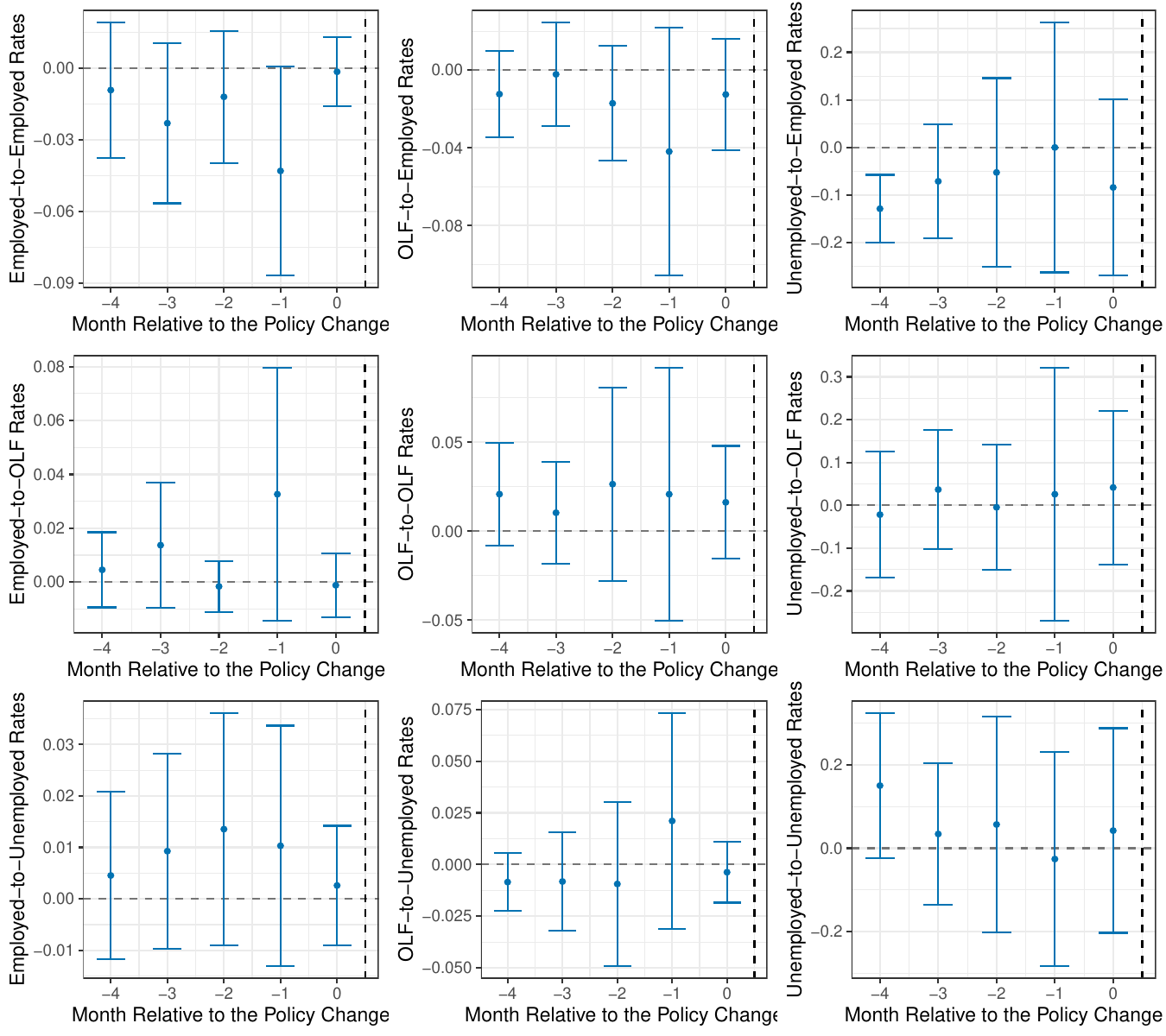}
    \fignote{Differences in monthly labor force transition probabilities between individuals with disabilities (treated) and without disabilities (control) before the ADA, one panel per transition among employed, unemployed, and out-of-labor-force (OLF) states. Period zero is the last pre-treatment period. Vertical bars are 95\% bootstrap uniform confidence intervals across periods within each transition pair.}
\end{figure}

\begin{figure}[p]
    \centering
    \caption{Differences in Labor Force Transition Probabilities Before the ADA ($J = 3$).}\label{fig:ada-employment-transitions-J3}
    \includegraphics[width=0.8\textwidth]{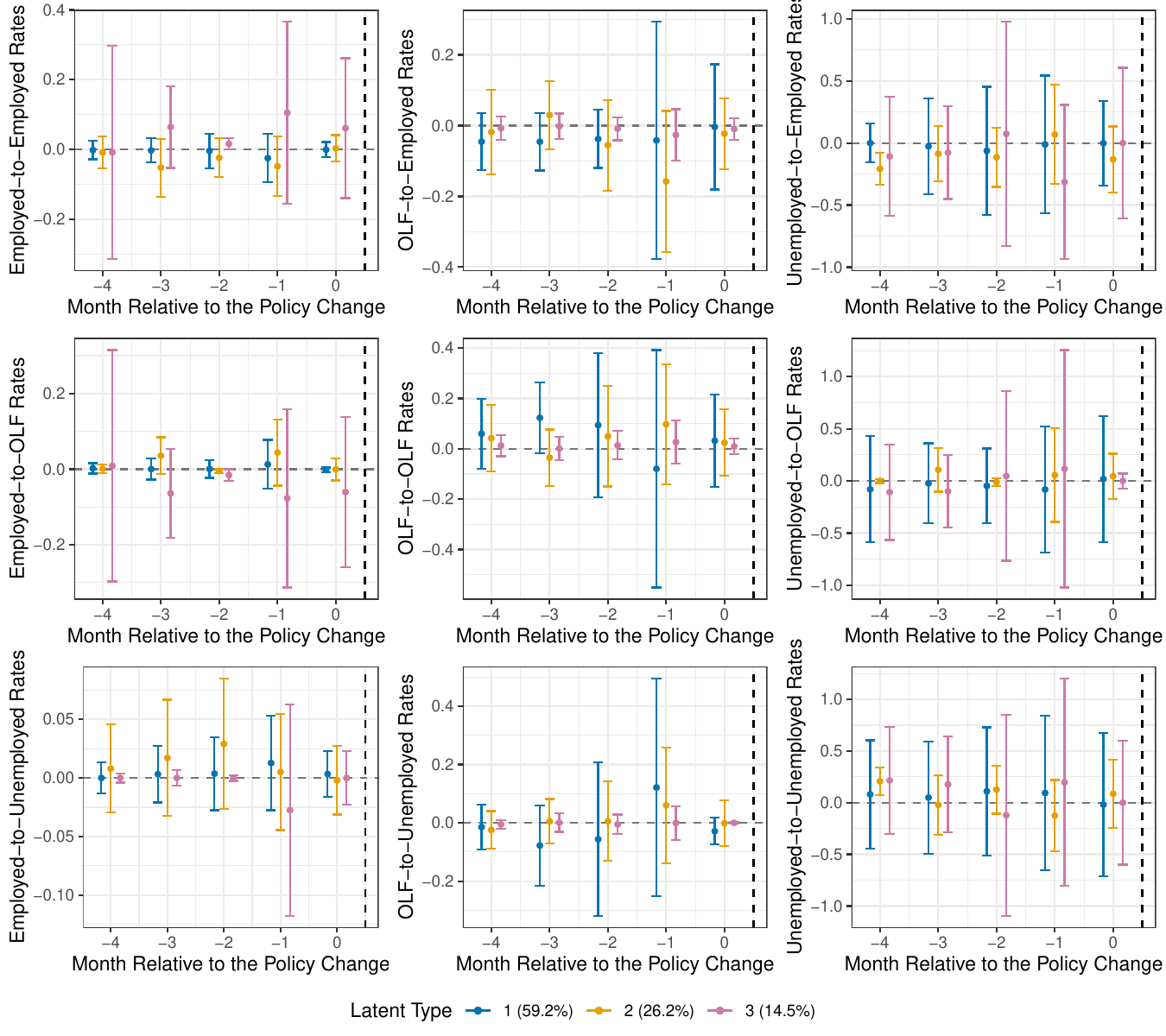}
    \fignote{Differences in monthly labor force transition probabilities between individuals with disabilities (treated) and without disabilities (control) before the ADA, by latent type (colors), one panel per transition among employed, unemployed, and out-of-labor-force (OLF) states. Percentages in the legend are estimated population shares of each type. Period zero is the last pre-treatment period. Vertical bars are 95\% bootstrap uniform confidence intervals across periods within each transition pair and latent type, $j \in \{1, 2, 3\}$.}
\end{figure}

\FloatBarrier
\bibliography{dmlmixture}

\end{document}